\renewcommand{\orcidID}[1]{} 
\let\doendproof\endproof
\renewcommand\endproof{~\hfill$\qed$\doendproof} 
\definecolor{darkred}{rgb}{0.7,0,0}
\newcommand{\fig}{./fig}
\newcommand{\figref}[1]{\figurename~\ref{#1}}
\newtheorem{observation}[theorem]{Observation}
\newcommand{\problemtitle}[1]{\gdef\@problemtitle{#1}}
\newcommand{\probleminput}[1]{\gdef\@probleminput{#1}}
\newcommand{\problemquestiontitle}[1]{\gdef\@problemquestiontitle{#1}}
\newcommand{\problemquestion}[1]{\gdef\@problemquestion{#1}}
  \par\addvspace{.5\baselineskip}
    \normalsize \textbf{Input:} & \normalsize \  \@probleminput \\
    \normalsize \textbf{\@problemquestiontitle:} & \normalsize \  \@problemquestion
  \par\addvspace{.5\baselineskip}
\newcommand{\cw}{\mathsf{cw}} 
\newcommand{\tw}{\mathsf{tw}} 
\newcommand{\pw}{\mathsf{pw}} 
\newcommand{\td}{\mathsf{td}} 
\newcommand{\vi}{\mathsf{vi}} 
\newcommand{\vc}{\mathsf{vc}} 
\newcommand{\bw}{\mathsf{bw}} 
\newcommand{\im}{\mathsf{im}} 
\newcommand{\dist}{\operatorname{dist}}
\newcommand{\indsub}{\preceq_{\mathrm{I}}} 
\titlerunning{Treedepth, Vertex Cover, and Vertex Integrity}
\title{Exploring the Gap Between Treedepth and Vertex Cover Through Vertex Integrity\thanks{%
Partially supported
by JSPS KAKENHI Grant Numbers 
JP18H04091, 
JP18K11168, 
JP18K11169, 
JP19K21537, 
JP20K19742, 
JP20H05793. 
}}
\author{
Tatsuya Gima\inst{1} \and
Tesshu Hanaka\inst{2}\orcidID{0000-0001-6943-856X} \and
Masashi Kiyomi\inst{3}\orcidID{0000-0003-1618-9373} \and
Yasuaki Kobayashi\inst{4}\orcidID{0000-0003-3244-6915} \and
Yota Otachi\inst{1}\orcidID{0000-0002-0087-853X}
}
\institute{
Nagoya University, Nagoya, Japan\\
\email{gima@nagoya-u.jp}, \email{otachi@nagoya-u.jp}
\and
Chuo University, Bunkyo-ku, Tokyo, Japan\\
\email{hanaka.91t@g.chuo-u.ac.jp}
\and
Yokohama City University, Yokohama, Japan\\
\email{masashi@yokohama-cu.ac.jp}
\and
Kyoto University, Kyoto, Japan\\
\email{kobayashi@iip.ist.i.kyoto-u.ac.jp}
}
\authorrunning{Gima et al.}
\newcommand*\patchAmsMathEnvironmentForLineno[1]{
  \expandafter\let\csname old#1\expandafter\endcsname\csname #1\endcsname
  \expandafter\let\csname oldend#1\expandafter\endcsname\csname end#1\endcsname
  \renewenvironment{#1}
     {\linenomath\csname old#1\endcsname}
     {\csname oldend#1\endcsname\endlinenomath}}
\newcommand*\patchBothAmsMathEnvironmentsForLineno[1]{
  \patchAmsMathEnvironmentForLineno{#1}
  \patchAmsMathEnvironmentForLineno{#1*}}
\begin{document}

\maketitle

\begin{abstract}
For intractable problems on graphs of bounded treewidth,
two graph parameters treedepth and vertex cover number have been used to obtain fine-grained complexity results.
Although the studies in this direction are successful,
we still need a systematic way for further investigations because the graphs of bounded vertex cover number
form a rather small subclass of the graphs of bounded treedepth.
To fill this gap, we use vertex integrity, which is placed between the two parameters mentioned above.
For several graph problems, we generalize fixed-parameter tractability results
parameterized by vertex cover number to the ones parameterized by vertex integrity.
We also show some finer complexity contrasts by showing hardness with respect to vertex integrity or treedepth.
\keywords{vertex integrity, vertex cover number, treedepth.}
\end{abstract}


\section{Introduction}

Treewidth, which measures how close a graph is to a tree,
is arguably one of the most powerful tools for designing efficient algorithms for graph problems.
The application of treewidth is quite wide and the general theory built there often gives
a very efficient algorithm (e.g.,~\cite{Bodlaender88,ArnborgLS91,Courcelle92}).
However, still many problems are found to be intractable on graphs of bounded treewidth (e.g.,~\cite{Szeider11arxiv}).
To cope with such problems, one may use pathwidth, which is always larger than or equal to treewidth.
Unfortunately, this approach did not quite work
as no natural problem was known to change its complexity with respect to treewidth and pathwidth,
until very recently~\cite{BelmonteKLMO20}.
Treedepth is a further restriction of pathwidth.
However, still most of the problems do not change their complexity,
except for some problems with hardness depending on the existence of long paths
(e.g.,~\cite{DvorakK18,KellerhalsK20}).
One successful approach in this direction is parameterization by the vertex cover number,
which is a strong restriction of treedepth.
Many problems that are intractable parameterized by treewidth
have been shown to become tractable when parameterized by
vertex cover number~\cite{FellowsLMRS08,EncisoFGKRS09,FialaGK11,Abu-Khzam14,Lokshtanov15,BonnetS17}.

One drawback of the vertex-cover parameterization is its limitation to a very small class of graphs.
To overcome the drawback, we propose a new approach for parameterizing graph problems by vertex integrity~\cite{BarefootES87}.
The \emph{vertex integrity} of a graph $G$, denoted $\vi(G)$, 
is the minimum integer $k$ satisfying that
there is $S \subseteq V(G)$ such that $|S| + |V(C)| \le k$ for each component $C$ of $G-S$.
We call such $S$ a \emph{$\vi(k)$-set} of $G$.
This parameter is bounded from above by vertex cover number${}+1$ and from below by treedepth.
As a structural parameter in parameterized algorithms,
vertex integrity (and its close variants) was used only in a couple of previous studies~\cite{DvorakEGKO17,GanianKO18,BodlaenderHOOZ19}.
Our goal is to fill some gaps between treedepth and vertex cover number
by presenting finer algorithmic and complexity results parameterized by vertex integrity.
Note that the parameterization by vertex integrity is equivalent to the one by $\ell$-component order connectivity${}+\ell$~\cite{DrangeDH16}.

\medskip

\noindent\textit{Short preliminaries.}
For the basic terms and concepts in the parameterized complexity theory,
we refer the readers to standard textbooks, e.g.~\cite{DowneyF99,CyganFKLMPPS15}.

For a graph $G$, we denote 
its treewidth by $\tw(G)$,
pathwidth by $\pw(G)$,
treedepth by $\td(G)$, and
vertex cover number by $\vc(G)$.
(See Section~\ref{sec:graph-parameters} for definitions.)
It is known that $\tw(G) \le \pw(G) \le \td(G)-1 \le \vi(G)-1 \le \vc(G)$ for every graph $G$.
We say informally that a problem is fixed-parameter tractable ``parameterized by $\vi$'',
which means ``parameterized by the vertex integrity of the input graphs.''
We also say ``graphs of $\vi = c$ (or $\vi \le c$)''.

\medskip

\noindent\textit{Our results.}
The main contribution of this paper is to generalize several known FPT algorithms parameterized by $\vc$ to the ones by $\vi$.
We also show some results considering parameterizations by $\vc$, $\vi$, or $\td$
to tighten the complexity gaps between parameterizations by $\vc$ and by $\td$.
See Table~\ref{tbl:summary} for the summary of results.
Due to the space limitation, we had to move most of the results into the appendix.
In the main text, we present full descriptions of selected results only.
(Even for the selected results, we still have to omit some proofs. They are marked with $\bigstar$.)

\textit{Extending FPT results parameterized by $\vc$.}
We show that
\textsc{Imbalance},
\textsc{Maximum Common (Induced) Subgraph},
\textsc{Capacitated Vertex Cover},
\textsc{Capacitated Dominating Set},
\textsc{Precoloring Extension},
\textsc{Equitable Coloring}, and
\textsc{Equitable Connected Partition}
are fixed-parameter tractable parameterized by vertex integrity.
We present the algorithms for \textsc{Imbalance} as a simple but still powerful example that generalizes known results (Section~\ref{sec:imbalance})
and for \textsc{Maximum Common Subgraph} as one of the most involved examples (Section~\ref{sec:mcs}).
See Section~\ref{sec:extending-vc} for the other problems.
A commonly used trick is to reduce the problem instance to a number of instances of integer linear programming,
while each problem requires a nontrivially tailored reduction depending on its structure.
It was the same for parameterizations by $\vc$,
but the reductions here are more involved because of the generality of $\vi$.
Finding the similarity among the reductions and algorithms would be a good starting point to develop a general way for handling problems
parameterized by $\vi$ (or $\vc$).
Additionally, we show that \textsc{Bandwidth} is W[1]-hard parameterized by $\td$,
while we were not able to extend the algorithm parameterized by $\vc$ to the one by $\vi$.

\textit{Filling some complexity gaps.}
We observe that \textsc{Graph Motif} and \textsc{Steiner Forest}
have different complexity with respect to $\vc$ and $\vi$ (Section~\ref{sec:hard-vi}).
In particular, we see that not all FPT algorithms parameterized by $\vc$ can be generalized to the ones by $\vi$.
\textsc{Min Max Outdegree Orientation} gives an example that a known hardness for $\td$ can be strengthened to the one for $\vc$
(Section~\ref{sec:min-max-outdeg}).
We additionally observe that some W[1]-hard problems parameterized by $\tw$ 
become tractable parameterized by $\td$. Such problems include \textsc{Metric Dimension}, \textsc{Directed} $(p,q)$-\textsc{Edge Dominating Set}, and \textsc{List Hamiltonian Path} (Section~\ref{sec:easy-td}).

\begin{table}[bt]
  \centering
  \caption{Summary. The results stated without references are shown in this paper.}
  \begin{tabular}{l|l|l}
    \textsc{Problem} & Lower bounds & Upper bounds\\\hline
    \multirow{2}{*}{\textsc{Imbalance}} & \multirow{2}{*}{NP-h \cite{BiedlCGHW05}} & FPT by $\tw + \Delta$ \cite{LokshtanovMS13}\\
    & & FPT by $\vi$\\\hline
    \textsc{Max Common Subgraph} & NP-h for $\vi(G_2) = 3$ & \multirow{2}{*}{FPT by $\vi(G_1) + \vi(G_2)$}\\
    \textsc{Max Common Ind.\ Subgraph} & NP-h for $\vc(G_2) = 0$ &\\\hline
    \textsc{Capacitated Vertex Cover} & W[1]-h by $\td$ \cite{DomLSV08} & FPT by $\vi$\\\hline
    \textsc{Capacitated Dominating Set} & W[1]-h by $\td + k$ \cite{DomLSV08} & FPT by $\vi$\\\hline
    \textsc{Precoloring Extension} & W[1]-h by $\td$ \cite{FellowsFLRSST11} & FPT by $\vi$\\\hline
    \textsc{Equitable Coloring} & W[1]-h by $\td$ \cite{FellowsFLRSST11} & FPT by $\vi$\\\hline
    \textsc{Equitable Connected Part.} & W[1]-h by $\pw$ \cite{EncisoFGKRS09} & FPT by $\vi$\\\hline
    \multirow{2}{*}{\textsc{Bandwidth}} & W[1]-h by $\td$ & FPT by $\vc$ \cite{FellowsLMRS08} \\ 
    &NP-h for $\pw = 2$ \cite{Muradian03} & P for $\pw \le 1$~\cite{AssmannPSZ81}\\\hline
    \multirow{2}{*}{\textsc{Graph Motif}} & \multirow{2}{*}{NP-h for $\vi = 4$} & FPT by $\vc$ \cite{BonnetS17}\\
    && P for $\vi \le 3$\\\hline
    \textsc{Steiner Forest} & NP-h for $\vi = 5$ \cite{Gassner10} & XP by $\vc$\\
    \textsc{Unweighted Steiner Forest} & NP-h for $\tw = 3$ \cite{Gassner10} & FPT by $\vc$\\\hline
    \textsc{Unary Min Max Outdeg.\ Ori.} & W[1]-h by $\vc$ & XP by $\tw$ \cite{Szeider11} \\
    \textsc{Binary Min Max Outdeg.\ Ori.} & NP-h for $\vc = 3$ & P for $\vc \le 2$ \\\hline
    \multirow{2}{*}{\textsc{Metric Dimension}} & \multirow{2}{*}{W[1]-h by $\pw$ \cite{BonnetP19}} & FPT by $\tw + \Delta$ \cite{BelmonteFGR17}\\
    && FPT by $\td$\\\hline
    \multirow{2}{*}{\textsc{Directed $(p, q)$-Edge Dom.\ Set}} & \multirow{2}{*}{W[1]-h by $\pw$ \cite{BelmonteHK0L18}} & FPT by $\tw + p + q$ \cite{BelmonteHK0L18}\\
    && FPT by $\td$\\\hline
    \textsc{List Hamiltonian Path} & W[1]-h by $\pw$ \cite{MeeksS16} & FPT by $\td$
  \end{tabular}
  \label{tbl:summary}
\end{table}


\section{\textsc{Imbalance}}
\label{sec:imbalance}

In this section, we show that \textsc{Imbalance} is fixed-parameter tractable parameterized by $\vi$.
Let $G = (V,E)$ be a graph.
Given a linear ordering $\sigma$ on $V$,
the \emph{imbalance} $\im_{\sigma}(v)$ of $v \in V$
is the absolute difference of the numbers of the neighbors of $v$
that appear before $v$ and after $v$ in $\sigma$.
The \emph{imbalance} of $G$, denoted $\im(G)$,
is defined as $\min_{\sigma} \sum_{v \in V} \im(v)$,
where the minimum is taken over all linear orderings on $V$.
Given a graph $G$ and an integer $b$, 
\textsc{Imbalance} asks whether $\im(G) \le b$.

Fellows et al.~\cite{FellowsLMRS08} showed that \textsc{Imbalance} is fixed-parameter tractable parameterized by $\vc$.
Recently, Misra and Mittal~\cite{MisraM20} have extended the result by showing that \textsc{Imbalance} is fixed-parameter tractable parameterized by
the sum of the twin-cover number and the maximum twin-class size.
Although twin-cover number is incomparable with vertex integrity,
the combined parameter in~\cite{MisraM20} is always larger than or equal to the vertex integrity of the same graph.
On the other hand, the combined parameter can be arbitrarily large for some graphs of constant vertex integrity
(e.g., disjoint unions of $P_{3}$'s).
Hence, our result here properly extends the result in~\cite{MisraM20} as well.

\smallskip

\textit{Key concepts.}
Before proceeding to the algorithm,
we need to introduce two important concepts 
that are common in our algorithms parameterized by $\vi$.

1. \textit{ILP parameterized by the number of variables.}
It is known that the feasibility of an instance of integer linear programming (ILP) 
parameterized by the number of variables is fixed-parameter tractable~\cite{Lenstra83}.
Using the algorithm for the feasibility problem as a black box, 
one can show the same fact for the optimization version as well.
(See Section~\ref{sec:ilp} for the detail.)
This fact has been used heavily 
for designing FPT algorithms parameterized by $\vc$ (see e.g.~\cite{FellowsLMRS08}).
We are going to see that some of these algorithms can be generalized for the parameterization by $\vi$,
and \textsc{Imbalance} is the first such example.

2. \textit{Equivalence relation among components.}
For a vertex set $S$ of $G$,
we define an equivalence relation $\sim_{G,S}$ among components of $G-S$
by setting $C_{1} \sim_{G,S} C_{2}$
if and only if
there is an isomorphism $g$ from $G[S \cup V(C_{1})]$ to $G[S \cup V(C_{2})]$
that fixes $S$; that is, $g|_{S}$ is the identity function.
When $C_{1} \sim_{G,S} C_{2}$, we say that $C_{1}$ and $C_{2}$ have the same \emph{$(G, S)$-type}
(or just the same \emph{type} if $G$ and $S$ are clear from the context). 
See \figref{fig:type}.
We say that a component $C$ of $G - S$ is of \emph{$(G, S)$-type $t$} (or just \emph{type $t$})
by using a canonical form $t$ of the members of the $(G, S)$-type equivalence class of $C$.
We can set the canonical form $t$ in such a way that it can be computed from $S$ and $C$ in time depending only on
$|S \cup V(C)|$.\footnote{For example, by fixing the ordering of vertices in $S$ as $v_{1}, \dots, v_{|S|}$,
we can set $t$ to be the adjacency matrix of $G[S \cup V(C)]$
such that the $i$th row and column correspond to $v_{i}$ for $1 \le i \le |S|$
and under this condition the string $t[1,1], \dots, t[1, s], t[2,1], \dots, t[s,s]$
is lexicographically minimal, where $s = |S \cup V(C)|$.}
Observe that if $S$ is a $\vi(k)$-set of $G$,
then the number of $\sim_{G,S}$ classes depends only on $k$
since $|S \cup V(C)| \le k$ for each component $C$ of $G-S$.
Hence, we can compute for all types $t$
the number of type-$t$ components of $G-S$ in $O(f(k) \cdot n)$ total running time,
where $n = |V|$ and $f(k)$ is a computable function depending only on $k$.
Note that this information (the numbers of type-$t$ components for all $t$) 
completely characterizes the graph $G$ up to isomorphism.

\begin{figure}[htb]
  \centering
  \includegraphics[scale=.8]{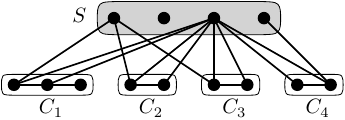} 
  \caption{The components $C_{2}$ and $C_{3}$ of $G-S$ have the same $(G,S)$-type.}
  \label{fig:type}
\end{figure}

\begin{theorem}
\label{thm:imb}
\textsc{Imbalance} is fixed-parameter tractable parameterized by $\vi$.
\end{theorem}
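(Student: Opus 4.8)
The plan is to mimic the $\vc$-parameterized algorithm of Fellows et al.~\cite{FellowsLMRS08}: fix the positions of the few ``separator'' vertices in an optimal ordering, and reduce everything else to an ILP with few variables, using the equivalence relation $\sim_{G,S}$ to keep the variable count bounded by a function of the vertex integrity. Concretely, let $k = \vi(G)$ and first compute a $\vi(k)$-set $S$ of $G$ in FPT time~\cite{DrangeDH16}. Since $|S| \le k$, the restriction of any linear ordering $\sigma$ of $V(G)$ to $S$ is one of at most $|S|! \le k!$ linear orders, so I would branch over all of them; fix one, say $s_1 <_\sigma \cdots <_\sigma s_{|S|}$. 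It splits $\sigma$ into at most $|S| + 1 \le k+1$ consecutive \emph{gaps} (before $s_1$, between $s_i$ and $s_{i+1}$, and after $s_{|S|}$), each of which is entirely before or entirely after any given $s_i$.

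The structural heart of the argument is that each component $C$ of $G - S$ interacts with the rest of $\sigma$ only through a bounded amount of data. Call a \emph{configuration} of $C$ a pair $\gamma = (g, \prec)$ where $g \colon V(C) \to \{0,\dots,|S|\}$ assigns each vertex of $C$ to a gap and $\prec$ is a linear order on $V(C)$ refining $g$ (vertices in earlier gaps come first); every such $\gamma$ is realizable inside $\sigma$, and conversely $\sigma$ induces one on each $C$. The point is that, since $C$ is a connected component of $G - S$, every neighbor of a vertex $v \in V(C)$ lies in $S \cup V(C)$, so whether a neighbor of $v$ precedes $v$ in $\sigma$ is determined by the fixed order of $S$ together with the configuration of $C$; hence $\im_\sigma(v)$, and therefore $\sum_{v \in V(C)} \im_\sigma(v)$, depends only on the $(G,S)$-type $t$ of $C$ and its configuration $\gamma$ --- denote this quantity $c_{t,\gamma}$. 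Symmetrically, for $u = s_i$ the signed quantity $D_u := |\{w \in N(u) : w <_\sigma u\}| - |\{w \in N(u) : u <_\sigma w\}|$ (so that $\im_\sigma(u) = |D_u|$) splits as $D_u = D_u^S + \sum_C D_u^C$, where $D_u^S$ is fixed once the order of $S$ is fixed and $D_u^C$ depends only on the gap-assignment part of the configuration of $C$ (a neighbor of $u$ in $C$ precedes $u$ iff its gap is before $u$); write $\delta_{u,t,\gamma}$ for this per-component contribution. Finally, for any assignment of configurations to the components of $G-S$ there is a linear ordering realizing all of them simultaneously (within each gap, list the components' prescribed internal orders one after another in any order --- cross-component pairs are never edges), with total imbalance $\sum_{u \in S} |D_u| + \sum_{C} c_{t(C),\gamma(C)}$. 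Thus, for the fixed order of $S$, the task is exactly to choose a configuration for each component minimizing this quantity.

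This I would encode as an ILP. By the bound on the number of $\sim_{G,S}$ classes and the fact that every component has at most $k$ vertices, the number of types is bounded by a function of $k$ and each type admits at most $(k+1)^k \cdot k!$ configurations, so there are $f(k)$ pairs $(t,\gamma)$. Introduce an integer variable $x_{t,\gamma} \ge 0$ counting the type-$t$ components assigned configuration $\gamma$, with constraints $\sum_{\gamma} x_{t,\gamma} = n_t$ for each type $t$, where $n_t$ is the number of type-$t$ components of $G-S$ (computable in $O(f(k)\cdot n)$ time as observed above); introduce a variable $z_u$ for each $u \in S$ with $z_u \ge D_u^S + \sum_{t,\gamma} \delta_{u,t,\gamma}\, x_{t,\gamma}$ and $z_u \ge -D_u^S - \sum_{t,\gamma} \delta_{u,t,\gamma}\, x_{t,\gamma}$ to linearize the absolute values; and minimize $\sum_{u\in S} z_u + \sum_{t,\gamma} c_{t,\gamma}\, x_{t,\gamma}$. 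The number of variables and constraints is bounded by a function of $k$ and the coefficients are polynomially bounded, so the optimization version of ILP (Section~\ref{sec:ilp}) solves each instance in FPT time; taking the minimum over all $\le k!$ orderings of $S$ yields $\im(G)$, and comparing it with $b$ answers \textsc{Imbalance}. The main obstacle is precisely the structural claim of the second paragraph: unlike in the $\vc$ case, a component need not sit inside a single gap, so one must verify carefully that its internal order is the only coupling to the rest of $\sigma$ and that ``how many components of each $(t,\gamma)$'' is a lossless summary of $\sigma$ for the objective.
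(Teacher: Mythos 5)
Your proposal is correct and follows essentially the same route as the paper's proof: guess the relative order of the $\vi(k)$-set $S$ ($\le k!$ branches), observe that each component's contribution depends only on its $(G,S)$-type and its relative ordering with respect to $S$ (your ``configurations'' are exactly the paper's relative orderings $p$ of $S \cup V(C)$), and solve one ILP per branch with variables $x_{t,\gamma}$, linearized absolute values for the imbalance of vertices in $S$, and Lenstra-type optimization. The only difference is cosmetic (you track the signed difference $D_u$ where the paper keeps $\ell$ and $r$ counts separately), and your explicit realizability argument for simultaneous configurations is a nice touch that the paper leaves implicit.
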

\begin{proof}
Let $S$ be a $\vi(k)$-set of $G$. Such a set can be found in $O(k^{k+1} n)$ time~\cite{DrangeDH16}.
We first guess and fix the relative ordering of $S$ in an optimal ordering. 
There are only $k!$ candidates for this guess.
For each $v \in S$, let $\ell(v)$ and $r(v)$ be the numbers of vertices in $N(v) \cap S$
that appear before $v$ and after $v$, respectively, in the guessed relative ordering of $S$.

Observe that the imbalance of a vertex $v$ in a component $C$ of $G-S$ depends only on the relative ordering of $S \cup V(C)$
since $N(v) \subseteq S \cup V(C)$.
For each type $t$ and for each relative ordering $p$ of $S \cup V(C)$, where $C$ is a type-$t$ component of $G-S$,
we denote by $\im(t,p)$ the sum of imbalance of the vertices in $C$.
Similarly, the numbers of vertices in a type-$t$ component $C$
that appear before $v \in S$ and after $v$ depend only on the relative ordering $p$ of $S \cup V(C)$;
we denote these numbers by $\ell(v,t,p)$ and $r(v,t,p)$, respectively.
The numbers $\im(t,p)$, $\ell(v,t,p)$, and $r(v,t,p)$ can be computed from their arguments in time 
depending only on $k$, and thus they are treated as constants in the following ILP\@.

We represent by a nonnegative variable $x_{t,p}$ the number of type-$t$ components 
that have relative ordering $p$ with $S$. Note that the number of combinations of $t$ and $p$ depends only on $k$.
For each $v \in S$, we represent (an upper bound of) the imbalance of $v$ by an auxiliary variable $y_{v}$.
This can be done by the following constraints:
\begin{align*}
  y_{v} &\ge \textstyle 
   (\ell(v) + \sum_{t,p} \ell(v,t,p) \cdot x_{t,p} )
  -(r(v) + \sum_{t,p} r(v,t,p) \cdot x_{t,p} ), \\
  y_{v} &\ge \textstyle
   (r(v) + \sum_{t,p} r(v,t,p) \cdot x_{t,p} )
  -(\ell(v) + \sum_{t,p} \ell(v,t,p) \cdot x_{t,p} ).
\end{align*}
Then the imbalance of the whole ordering, which is our objective function to minimize, can be expressed as
\[
  \textstyle\sum_{v \in S} y_{v} + \sum_{t,p} \im(t,p) \cdot x_{t,p}.
\]
Now we need the following constraints to keep the total number of type-$t$ components right:
\[
  \textstyle\sum_{p} x_{t,p} = c_{t} \quad \text{for each type} \ t,
\]
where $c_{t}$ is the number of components of type $t$ in $G - S$.

By finding an optimal solution to the ILP above for each guess of the relative ordering of $S$,
we can find an optimal ordering.
Since the number of guesses and the number of variables depend only on $k$,
the theorem follows.
\end{proof}


\section{\textsc{Maximum Common (Induced) Subgraph}}
\label{sec:mcs}

In this section, we show that \textsc{Maximum Common Subgraph} (MCS) and \textsc{Maximum Common Induced Subgraph} (MCIS)
are fixed-parameter tractable parameterized by $\vi$ of both graphs.
(See Section~\ref{asec:mcs} for the proof for MCIS.)
The results extend known results and fill some complexity gaps as described below.

A graph $Q$ is \emph{subgraph-isomorphic} to $G$, denoted $Q \preceq G$,
if there is an injection $\eta$ from $V(Q)$ to $V(G)$
such that $\{\eta(u),\eta(v)\} \in E(G)$ for every $\{u,v\} \in E(Q)$.
A graph $Q$ is \emph{induced subgraph-isomorphic} to $G$, denoted $Q \indsub G$,
if there is an injection $\eta$ from $V(Q)$ to $V(G)$
such that $\{\eta(u),\eta(v)\} \in E(G)$ if and only if $\{u,v\} \in E(Q)$.
Given two graphs $G$ and $Q$,
\textsc{Subgraph Isomorphism} (SI) asks whether $Q \preceq G$,
and \textsc{Induced Subgraph Isomorphism} (ISI) asks whether $Q \indsub G$.
The results of this section are on their generalizations.
Given two graphs $G_{1}$ and $G_{2}$,
MCS asks to find a graph $H$ with maximum $|E(H)|$ such that $H \preceq G_{1}$ and $H \preceq G_{2}$.
Similarly,
MCIS asks to find a graph $H$ with maximum $|V(H)|$ such that $H \indsub G_{1}$ and $H \indsub G_{2}$.

If we restrict the structure of only one of the input graphs, then both problems remain quite hard.
Since \textsc{Partition Into Triangles}~\cite{GareyJ79} is a special case of SI
where the graph $Q$ is a disjoint union of triangles, 
MCS is NP-hard even if one of the input graphs has $\vi = 3$.
Also, since \textsc{Independent Set}~\cite{GareyJ79} is a special case of ISI where $Q$ is an edge-less graph,
MCIS is NP-hard even if one of the input graphs has $\vc = 0$.
Furthermore, since SI and ISI generalize \textsc{Clique}~\cite{DowneyF99},
MCS and MCIS are W[1]-hard parameterized by the order of one of the input graphs.
When parameterized by $\vc$ of one graph,
an XP algorithm for (a generalization of) MCS is known~\cite{BodlaenderHJOOZ20}.

For parameters restricting both input graphs, some partial results were known.
It is known that SI is fixed-parameter tractable parameterized by $\vi$ of both graphs,
while it is NP-complete when both graphs have $\td \le 3$~\cite{BodlaenderHOOZ19}.
The hardness proof in~\cite{BodlaenderHOOZ19} can be easily adapted to ISI without increasing $\td$.
It is known that MCIS is fixed-parameter tractable parameterized by $\vc$ of both graphs~\cite{Abu-Khzam14}.

\begin{theorem}
\label{thm:vi-mcs-both}
\textsc{Maximum Common Subgraph}
is fixed-parameter tractable parameterized by $\vi$ of both input graphs.
\end{theorem}
\begin{proof}
Let $G_{1} = (V_{1}, E_{1})$ and $G_{2} = (V_{2}, E_{2})$ be the input graphs of vertex integrity at most $k$.
We will find isomorphic subgraphs
$\Gamma_{1} = (U_{1}, F_{1})$ of $G_{1}$ and
$\Gamma_{2} = (U_{2}, F_{2})$ of $G_{2}$ with maximum number of edges,
and an isomorphism $\eta \colon U_{1} \to U_{2}$ from $\Gamma_{1}$ to $\Gamma_{2}$.

\smallskip
 
\noindent\textit{Step 1. Guessing matched $\vi(2k)$-sets $R_{1}$ and $R_{2}$.}
Let $S_{1}$ and $S_{2}$ be $\vi(k)$-sets of $G_{1}$ and $G_{2}$, respectively.
At this point, there is no guarantee that $S_{i} \subseteq U_{i}$ or $\eta(S_{1}) = S_{2}$.
To have such assumptions, we make some guesses about $\eta$
and find $\vi(2k)$-sets $R_{1}$ and $R_{2}$ of the graphs such that $\eta(R_{1}) = R_{2}$.

\textit{Step 1-1. Guessing subsets $X_{i}, Y_{i} \subseteq S_{i}$ for $i \in \{1,2\}$.}
We guess disjoint subsets $X_{1}$ and $Y_{1}$ of $S_{1}$ such that 
$X_{1} = S_{1} \cap {\eta^{-1}(U_{2} \cap S_{2})}$ and
$Y_{1} = S_{1} \cap {\eta^{-1}(U_{2} \setminus S_{2})}$.
We also  guess disjoint subsets $X_{2}$ and $Y_{2}$ of $S_{2}$
defined similarly as 
$X_{2} = S_{2} \cap {\eta(U_{1} \cap S_{1})}$ and
$Y_{2} = S_{2} \cap {\eta(U_{1} \setminus S_{1})}$.
Note that $\eta(X_{1}) = X_{2}$.
There are $3^{|S_{1}|} \cdot 3^{|S_{2}|} \le 3^{2k}$ candidates for the combinations of $X_{1}$, $Y_{1}$, $X_{2}$, and $Y_{2}$.

Observe that the vertices in $S_{i} \setminus (X_{i} \cup Y_{i})$ do not contribute to the isomorphic subgraphs
and can be safely removed. We denote the resultant graphs by $H_{i}$.

\smallskip

\textit{Step 1-2. Guessing $\eta$ on $X_{1} \cup Y_{1}$  and $\eta^{-1}$ on $X_{2} \cup Y_{2}$.}
Given the guessed subsets $X_{1}$, $Y_{1}$, $X_{2}$, and $Y_{2}$,
we further guess how $\eta$ maps these subsets.
There are $|X_{1}|! \le k!$ candidates for the bijection $\eta|_{X_{1}}$
(equivalently for $\eta^{-1}|_{X_{2}} = (\eta|_{X_{1}})^{-1}$).

Now we guess $\eta|_{Y_{1}}$ from at most $2^{k^{3}}$ non-isomorphic candidates as follows.
Recall that $\eta(Y_{1}) \subseteq V_{2} \setminus S_{2}$.
Observe that each subset $A \subseteq V_{2} \setminus S_{2}$ is completely characterized up to isomorphism
by the numbers of ways $A$ intersects type-$t$ components
for all $(H_{2},S_{2})$-types $t$.
Since there are at most $2^{\binom{k}{2}}$ types and each component has order at most $k$,
the total number of non-equivalent subsets of components is at most $2^{\binom{k}{2}} \cdot 2^{k} \le 2^{k^{2}}$.
Since $\eta(Y_{1})$ is the union of at most $|Y_{1}|$ such subsets,
the number of non-isomorphic candidates of $\eta(Y_{1})$ is at most $(2^{k^{2}})^{|Y_{1}|} \le 2^{k^{3}}$.
In the analogous way, we can guess $\eta^{-1}|_{Y_{2}}$ from at most $2^{k^{3}}$ non-isomorphic candidates.

Now we set $Z_{1} = \eta^{-1}(Y_{2})$ and $Z_{2} = \eta(Y_{1})$.
Let $R_{1} = X_{1} \cup Y_{1} \cup Z_{1}$ and $R_{2} = X_{2} \cup Y_{2} \cup Z_{2}$.
Observe that each component $C$ of $H_{1} - R_{1}$ satisfies that
$|C| \le k - |S_{1}| \le k$ and
$|C| + |R_{1}| \le (k - |S_{1}|) + (|S_{1}| + |\eta^{-1}(Y_{2})|) \le 2k$.
Hence, $R_{1}$ is a $\vi(2k)$-set of $H_{1}$.
Similarly, we can see that $R_{2}$ is a $\vi(2k)$-set of $H_{2}$.
Furthermore, we know that $\eta(R_{1}) = R_{2}$.

\medskip

\noindent\textit{Step 2. Extending the guessed parts of $\eta$.}
Assuming that the guesses we made so far are correct, we now find the entire $\eta$.
Recall that we are seeking for isomorphic subgraphs
$\Gamma_{1} = (U_{1}, F_{1})$ of $G_{1}$ and
$\Gamma_{2} = (U_{2}, F_{2})$ of $G_{2}$ with maximum number of edges,
and the isomorphism $\eta \colon U_{1} \to U_{2}$ from $\Gamma_{1}$ to $\Gamma_{2}$.
Since we already know the part $\eta|_{R_{1}} \colon R_{1} \to R_{2}$,
it suffices to find a bijective mapping from a subset of $V(H_{1} - R_{1})$ to a subset of $V(H_{2} - R_{1})$
that maximizes the number of matched edges where the connections to $R_{i}$ are also taken into account.

As we describe below, the subproblem we consider here can be solved by formulating it as an ILP instance with $2^{O(k^{3})}$ variables.
The trick here is that instead of directly finding the mapping,
we find which vertices and edges in $H_{i} - R_{i}$ are used in the common subgraph.

In the following, we are going to use a generalized version of \emph{types}
since the vertex set of a component of $H_{i} - R_{i}$ does not necessarily induce a connected subgraph of $\Gamma_{i}$.
It is defined in a similar way as $(H_{i}, R_{i})$-types
except that it is defined for each pair $(A,B)$ of a connected subgraph $A$ of $H_{i} - R_{i}$ and a subset $B$ of the edges between $A$ and $R_{i}$.
Let $(A_{1}, B_{1})$ and $(A_{2},B_{2})$ be such pairs in $H_{i} - R_{i}$.
We say that $(A_{1}, B_{1})$ and $(A_{2},B_{2})$ have the same \emph{g-$(H_{i}, R_{i})$-type} (or just \emph{g-type})
if there is an isomorphism from $H_{i}(A_{1}, B_{1})$ to $H_{i}(A_{2}, B_{2})$ that fixes $R_{i}$,
where $H_{i}(A_{j},B_{j})$ is the subgraph of $H_{i}$ formed by $B_{j}$ and the edges in $A_{j}$.
See \figref{fig:g-type}.
We say that a pair $(A,B)$ is of \emph{g-$(H_{i}, R_{i})$-type $t$} (or just \emph{g-type $t$})
by using a canonical form $t$ of the g-$(H_{i}, R_{i})$-type equivalence class of $(A,B)$.
Observe that all possible canonical forms of g-types can be computed in time depending only on $k$.

\begin{figure}[htb]
  \centering
  \includegraphics[scale=.8]{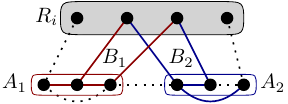} 
  \caption{The pairs $(A_{1}, B_{1})$ and $(A_{2},B_{2})$ have the same g-$(H_{i}, R_{i})$-type.}
  \label{fig:g-type}
\end{figure}

\smallskip

\textit{Step 2-1. Decomposing components of $H_{i} - R_{i}$ into smaller pieces.}
We say that an edge $\{u,v\}$ in $H_{1}$ is \emph{used by $\eta$}
if $u, v \in U_{1}$ and $H_{2}$ has the edge $\{\eta(u),\eta(v)\}$.
Similarly, an edge $\{u,v\}$ in $H_{2}$ is \emph{used by $\eta$}
if $u, v \in U_{2}$ and $H_{1}$ has the edge $\{\eta^{-1}(u),\eta^{-1}(v)\}$.

Let $i \in \{1,2\}$, $t$ be an $(H_{i}, R_{i})$-type, and $T$ be a multiset of g-$(H_{i}, R_{i})$-types.
Let $C$ be a type $t$ component of $H_{i} - R_{i}$,
$C'$ the subgraph of $C$ formed by the edges used by $\eta$, and 
$E'$ the subset of the edges between $C'$ and $R_{i}$ used by $\eta$.
If $T$ coincides with the multiset of g-types of the pairs $(A,B)$ such that $A$ is a component of $C'$
and $B$ is the subset of $E'$ connecting $A$ and $R_{i}$,
then we say that $\eta$ \emph{decomposes} the type-$t$ component $C$ into $T$.

We represent by a nonnegative variable $x^{(i)}_{t,T}$
the number of type-$t$ components of $H_{i} - R_{i}$
that are decomposed into $T$ by $\eta$.
We have the following constraint:
\begin{align}
  \textstyle
  \sum_{T} x^{(i)}_{t,T} = c^{(i)}_{t}
  \quad \text{for each} \ (H_{i}, R_{i})\text{-type} \ t \ \text{and} \ i \in \{1,2\},
  \nonumber 
\end{align}
where the sum is taken over all possible multisets $T$ of g-$(H_{i}, R_{i})$-types,
and $c^{(i)}_{t}$ is the number of components of type $t$ in $H_{i} - R_{i}$.
Additionally, if there is no way to decompose a type-$t$ component into $T$,
we add a constraint $x^{(i)}_{t,T} = 0$.

As each component of $H_{i} - R_{i}$ has order at most $k$, $T$ contains at most $k$ elements.
Since there are at most $2^{\binom{2k}{2}}$ g-types,
there are at most $(2^{\binom{2k}{2}})^{k}$ options for choosing $T$.
Thus the number of variables $x^{(i)}_{t,T}$ is at most 
$2 \cdot 2^{\binom{2k}{2}} \cdot (2^{\binom{2k}{2}})^{k+1}$.

Now we introduce a nonnegative variable $y_{t}^{(i)}$
that represents the number of pairs $(A,B)$ of g-type $t$ 
obtained from the components of $H_{i} - R_{i}$ by decomposing them by $\eta$.
The definition of $y_{t}^{(i)}$ gives the following constraint:
\begin{align}
  \textstyle
  y_{t}^{(i)} = \sum_{t', \, T} \mu(T, t) \cdot x^{(i)}_{t',T}
  \quad \text{for each} \  \text{g-}(H_{i}, R_{i})\text{-type} \ t \ \text{and} \ i \in \{1,2\},
  \nonumber 
\end{align}
where $\mu(T, t)$ is the multiplicity of $g$-type $t$ in $T$
and the sum is taken over all possible $(H_{i}, R_{i})$-types $t'$ and
multisets $T$ of g-$(H_{i}, R_{i})$-types.
As in the previous case, we can see that the number of variables $y_{t}$ depends only on $k$.

\smallskip

\textit{Step 2-2. Matching decomposed pieces.}
Observe that for each g-$(H_{1}, R_{1})$-type $t_{1}$,
there exists a unique g-$(H_{2}, R_{2})$-type $t_{2}$
such that 
there is an isomorphism $g$ from $H_{1}(A_{1}, B_{1})$ to $H_{2}(A_{2}, B_{2})$ with $g|_{R_{1}} = \eta|_{R_{1}}$,
where $(A_{i},B_{i})$ is a pair of g-$(H_{i}, R_{i})$-type $t_{i}$ for $i \in \{1,2\}$.
We say that such g-types $t_{1}$ and $t_{2}$ \emph{match}.
Since $\eta$ is an isomorphism from $\Gamma_{1}$ to $\Gamma_{2}$,
$\eta$ maps each g-$(H_{1}, R_{1})$-type $t_{1}$ pair 
to a g-$(H_{2}, R_{2})$-type $t_{2}$ pair, where $t_{1}$ and $t_{2}$ match.
This implies that $y_{t_{1}}^{(1)} = y_{t_{2}}^{(2)}$, which we add as a constraint.
Now the total number of edges used by $\eta$ can be computed from $y_{t}^{(1)}$.
Let $m_{t}$ be the number of edges in $H_{1}(A, B)$, where $(A,B)$ is a pair of g-$(H_{1}, R_{1})$-type $t$.
Let $r$ be the number of matched edges in $R_{1}$;
that is, $r = |\{\{u,v\} \in E(H_{1}[R_{1}]) \mid \{\eta(u), \eta(v)\} \in E(G_{2}[R_{2}]) \}|$.
Then, the number of matched edges is $r + \sum_{t} m_{t} \cdot y_{t}^{(1)}$.
On the other hand, given an assignment to the variables, 
it is easy to find isomorphic subgraphs with that many edges.
Since $r$ is a constant here,
we set $\sum_{t} m_{t} \cdot y_{t}^{(1)}$
to the objective function to be maximized.

Since the number of candidates in the guesses we made
and the number of variables in the ILP instances depend only on $k$,
the theorem follows.
\end{proof}


\section{\textsc{Min Max Outdegree Orientation}}
\label{sec:min-max-outdeg}

Given an undirected graph $G = (V,E)$, an edge weight function $w\colon E \to \mathbb{Z}^{+}$, and a positive integer $r$,
\textsc{Min Max Outdegree Orientation} (MMOO) asks whether 
there exists an orientation $\Lambda$ of $G$ such that each vertex has outdegree at most $r$ under $\Lambda$,
where the outdegree of a vertex is the sum of the weights of out-going edges.
If each edge weight is given in binary, we call the problem \textsc{Binary MMOO},
and if it is given in unary, we call the problem \textsc{Unary MMOO}.
Note that in the binary version, the weight of an edge can be exponential in the input size,
whereas the unary version does not allow such weights.

\textsc{Unary MMOO} 
admits an $n^{O(\tw)}$-time algorithm~\cite{Szeider11},
but it is W[1]-hard parameterized by $\td$~\cite{Szeider11arxiv}.\footnote{%
In \cite{Szeider11arxiv}, W[1]-hardness was stated for $\tw$ but the proof shows it for $\td$ as well.}
In this section, we show a stronger hardness parameterized by $\vc$. 
\textsc{Binary MMOO} is known to be NP-complete for graphs of $\vi = 4$~\cite{AsahiroMO11}.
In Section \ref{asec:min-max-outdeg}, we show a stronger hardness result that the binary version is NP-complete for graphs of $\vc = 3$.
This result is tight as we can show that the binary version is polynomial-time solvable for graphs of $\vc \le 2$.

\begin{theorem}
\label{thm:uminmaxod-vc}
\textsc{Unary MMOO} is W[1]-hard parameterized by $\vc$.
\end{theorem}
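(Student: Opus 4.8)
The plan is to reduce from a known W[1]-hard problem parameterized by a small structural parameter — the natural candidate being \textsc{Unary Bin Packing} parameterized by the number of bins, or equivalently a multi-commodity-style problem — but the cleanest route is from \textsc{Unary MMOO} parameterized by $\td$ itself (which the excerpt grants us as W[1]-hard) by a parameter-preserving transformation that turns a treedepth-$d$ instance into a vertex-cover-$O(d)$ instance. First I would recall the structure of the hardness instances produced by the known $\td$-hardness proof (essentially those of Szeider): they are built from a small ``core'' plus gadgets attached along a path-like elimination structure. The idea is that in \textsc{Unary MMOO}, a long path of light edges can be simulated by a single heavy edge incident to a small set of ``hub'' vertices, because orienting a heavy edge forces almost all of a vertex's outdegree budget and thereby propagates a constraint exactly the way a path of unit edges would. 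Concretely, I would replace each deep branch of the treedepth decomposition by a constant number of hub vertices joined to many leaf vertices of degree one or two, and use unary weights that are polynomially bounded (since $\td$, hence the weights we need, is the parameter).

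The key steps, in order: (1) Fix the source W[1]-hard problem; I would actually start from \textsc{Multidimensional Subset Sum} or \textsc{Unary Bin Packing} parameterized by the number of bins~$k$, which is W[1]-hard and whose instances have a trivial description as $k$ target values plus $n$ items. (2) Build the graph $G$ with a vertex cover $S$ of size $f(k)$: put one ``bin vertex'' $b_j$ for each bin $j\in[k]$ plus $O(1)$ auxiliary control vertices into $S$, and for each item $i$ of size $s_i$ create an ``item vertex'' $v_i$ adjacent to all $k$ bin vertices, with the edge $\{v_i,b_j\}$ given weight $s_i$; additionally give $v_i$ a pendant or a fixed-weight edge into a control vertex that forces its total outdegree, so that exactly one of the $k$ edges at $v_i$ can be oriented away from $v_i$ — this encodes ``item $i$ is placed in bin $j$''. (3) Set the outdegree bound $r$ and, via the control vertices and large-weight edges, force each bin vertex $b_j$ to absorb incoming weight at most its capacity; the item vertices, being the only vertices outside $S$, show $\vc(G)\le k+O(1)$. (4) Verify the equivalence: a valid orientation with max outdegree $r$ corresponds exactly to a feasible packing, checking both directions. (5) Observe all weights are $\mathrm{poly}(n)$ and given in unary, and the reduction runs in FPT (indeed polynomial) time with parameter $\vc(G)=k+O(1)$, so W[1]-hardness parameterized by $\vc$ follows from that of \textsc{Unary Bin Packing} parameterized by~$k$.

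The main obstacle I anticipate is step (2)–(3): designing the ``force exactly one outgoing edge'' gadget for an item vertex using only unary (polynomially bounded) weights while keeping the whole gadget inside a bounded vertex cover. Naively forcing a vertex to send out exactly a prescribed amount is easy, but the subtlety is that each item vertex is adjacent to \emph{all} bin vertices, and I must ensure the orientation cannot ``cheat'' by splitting an item's weight across several bins or by dumping slack onto a bin that has room to spare in a way that doesn't correspond to a packing. The fix will be to make the item-to-bin edge weights all equal to a common large value $W$ (a single unary number, still $\mathrm{poly}(n)$ if I rescale the bin-packing instance) so that a bin vertex can receive at most $\lfloor r/W\rfloor$ item edges oriented toward it, converting ``weighted capacity'' into ``cardinality capacity'', and then handle the actual sizes $s_i$ by a second, item-specific light edge into a capacity-counting control structure. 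I would double-check that this two-layer encoding does not blow up the vertex cover: the control structure for counting sizes must also live in $S$, so it has to consist of $O(k)$ or $O(1)$ vertices, which is achievable by letting a single ``total-size meter'' vertex per bin collect the light size-edges.

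\medskip

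If instead a direct reduction from \textsc{Unary MMOO}/$\td$ is preferred, the analogous plan is: take a $\td$-hard instance $(H,w,r)$ with a depth-$d$ elimination forest $F$; for each root-to-vertex path in $F$, contract the path into a bounded set of hub vertices and re-express the orientation constraints along the path as weighted edges between hubs and fresh degree-$\le 2$ vertices; argue $\vc$ of the new graph is $O(d)$ because every edge is now incident to a hub; and show the new \textsc{Unary MMOO} instance is a yes-instance iff the old one is. Here the main obstacle is the faithful simulation of a path of unit-weight edges by a bounded-vertex-cover gadget — one must exhibit a gadget whose only feasible orientations are in bijection with the ``flow'' patterns a path admits, and prove no spurious orientation sneaks in; the heavy-edge trick above (a large weight $W$ eating up a vertex's budget and leaving exactly one unit free) is what makes this work, and writing out and verifying that correspondence is the technical heart of the argument.
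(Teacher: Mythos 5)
Your overall route coincides with the paper's: a parameterized reduction from \textsc{Unary Bin Packing} with the number of bins $t$ as the parameter, one vertex per bin plus $O(1)$ control vertices forming the vertex cover, and one item vertex $v_i$ per item adjacent to all bin vertices with $w(\{v_i,s_j\})=a_i$. The genuine gap is exactly at the point you flag as the main obstacle: you never exhibit a working mechanism that forces each item vertex to orient out precisely one bin edge and forces the bin capacities, and the two-layer ``fix'' you sketch does not close it. First, a pendant or fixed-weight control edge at $v_i$ only eats up $v_i$'s budget if its orientation away from $v_i$ is itself forced, which needs a separate argument you do not give. Second, giving all assignment edges a common weight $W$ and routing the actual sizes through separate light edges to per-bin ``meter'' vertices leaves the two layers uncoupled: nothing ties the bin chosen by $v_i$'s heavy assignment edge to the meter receiving $v_i$'s size edge, so feasible orientations need not correspond to packings. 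Third, ``force each bin vertex to absorb incoming weight at most its capacity'' (and likewise ``a bin can receive at most $\lfloor r/W\rfloor$ item edges'') is not a constraint MMOO can state; only outdegree bounds exist, so capacity must be derived from an outdegree bound via a weight-balance argument, which your sketch lacks.

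The paper resolves all of this with one device plus a counting argument, and no second layer. There is a single control vertex $u$ adjacent to every item vertex with $w(\{v_i,u\})=W-a_i$, where $W=(t-1)B$, one extra edge $\{u,s_1\}$ of weight $W$, and the bound $r=W$. After the harmless preprocessing $t\ge 3$ and $a_i<B$ (hence $a_i<W/2$), the total edge weight is $(n+t+1)W$, i.e.\ exactly $r$ per vertex, so in any feasible orientation every vertex has outdegree exactly $W$. Since every edge at $u$ has weight greater than $W/2$ except the weight-$W$ edge to $s_1$, the only way for $u$ to hit $W$ exactly is to orient $\{u,s_1\}$ out and receive all heavy edges; consequently each $v_i$ spends $W-a_i$ on its $u$-edge and must orient out exactly one weight-$a_i$ bin edge (so splitting cannot occur), and the exact outdegree $W$ at each $s_j$ translates directly into $\sum_{i\in S_j} a_i = B$. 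Until you specify and verify a forcing gadget of this kind, your argument remains a plan rather than a proof.
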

\begin{proof}
We give a parameterized reduction from \textsc{Unary Bin Packing}.
Given a positive integer $t$ and $n$ positive integers $a_{1}, a_{2}, \dots, a_{n}$ in unary,
\textsc{Unary Bin Packing} asks the existence of a partition $S_{1}, \dots, S_{t}$ of $\{1, 2, \dots, n\}$ 
such that $\sum_{i \in S_{j}} a_{i} = \frac{1}{t}\sum_{1 \le i \le n} a_{i}$ for $1 \le j \le t$.
\textsc{Unary Bin Packing} is W[1]-hard parameterized by $t$~\cite{JansenKMS13}.

We assume that $t \ge 3$ since otherwise the problem can be solved in polynomial time as the integers $a_{i}$ are given in unary.
Let $B = \frac{1}{t} \sum_{1 \le i \le n} a_{i}$ and $W = (t - 1)B = \sum_{1 \le i \le n} a_{i} - B$.
The assumption $t \ge 3$ implies that $B \le W/2$.
Observe that if $a_{i} \ge B$ for some $i$,
then the instance is a trivial no instance (when $a_{i} > B$)
or the element $a_{i}$ is irrelevant (when $a_{i} = B$).
Hence, we assume that $a_{i} < B$ (and thus $a_{i} < W/2$) for every $i$.

The reduction to \textsc{Unary MMOO} is depicted in \figref{fig:minmaxod-vc3}.
From the integers $a_{1}, a_{2}, \dots, a_{n}$,
we construct the graph obtained from 
a complete bipartite graph on the vertex set $\{u, s_{1}, s_{2}, \dots, s_{t}\} \cup \{v_{1}, \dots, v_{n}\}$
by adding the edge $\{u, s_{1}\}$.
We set $w(\{v_{i}, s_{j}\}) = a_{i}$ for all $i,j$,
$w(\{v_{i}, u\}) = W - a_{i}$ for all $i$,
and $w(\{u, s_{1}\}) = W$.
The vertices $s_{1}, s_{2}, \dots, s_{t}, u$ form a vertex cover of size $t + 1$.
We set the target maximum outdegree $r$ to $W$.
We show that this instance of \textsc{Unary MMOO} is a yes instance 
if and only if there exists a partition $S_{1}, \dots, S_{t}$ of $\{1, 2, \dots, n\}$ such that 
$\sum_{i \in S_{j}} a_{i} = B$ for all $j$.
Intuitively, we can translate the solutions of the problems 
by picking $a_{i}$ into $S_{j}$ if $\{v_{i}, s_{j}\}$ is oriented from $v_{i}$ to $s_{j}$, and vice versa.

Assume that there exists a partition $S_{1}, \dots, S_{t}$ of $\{1, 2, \dots, n\}$ such that 
$\sum_{i \in S_{j}} a_{i} = B$ for all $j$.
We first orient the edge $\{u,s_{1}\}$ from $u$ to $s_{1}$ and each edge $\{v_{i}, u\}$ from $v_{i}$ to $u$.
(See the thick edges in \figref{fig:minmaxod-vc3}.)
Then, we orient $\{v_{i}, s_{j}\}$ from $v_{i}$ to $s_{j}$ if and only if $i \in S_{j}$.
Under this orientation, all vertices have outdegree exactly $W$:
$a_{i} + (W-a_{i})$ for each $v_{i}$ and
$\sum_{i \notin S_{j}} a_{i} =  \sum_{1 \le i \le n} a_{i} - B$ for each $s_{j}$.

Conversely, assume that there is an orientation 
such that each vertex has outdegree at most $W$.
Since the sum of the edge weights is $(n+t+1)W$ and the graph has $n+t+1$ vertices,
the outdegree of each vertex has to be exactly $W$.
Since $a_{i} < W/2$ for all $i$, each edge $\{v_{i}, u\}$ has weight larger than $W/2$.
Hence,  for $u$, the only way to obtain outdegree exactly $W$ is 
to orient $\{u,s_{1}\}$ from $u$ to $s_{1}$
and $\{v_{i},u\}$ from $v_{i}$ to $u$ for all $i$.
Furthermore, for each $i$, there exists exactly one vertex $s_{j}$
such that $\{v_{i}, s_{j}\}$ is oriented from $v_{i}$ to $s_{j}$.
Let $S_{j} \subseteq \{1,2, \dots, n\}$ be the set of indices $i$ such that
$\{v_{i}, s_{j}\}$ is oriented from $v_{i}$ to $s_{j}$.
The discussion above implies that $S_{1}, \dots, S_{t}$ is a partition of $\{1,\dots,n\}$.
The outdegree of $s_{j}$ is $\sum_{i \notin S_{j}} a_{i}$, which is equal to $W = \sum_{1 \le i \le n} a_{i} - B$.
Thus, $\sum_{i \in S_{j}} a_{i} = \sum_{1 \le i \le n} a_{i} - W = B$.
\end{proof}
\begin{figure}[tb]
  \centering
  \includegraphics[scale=.8]{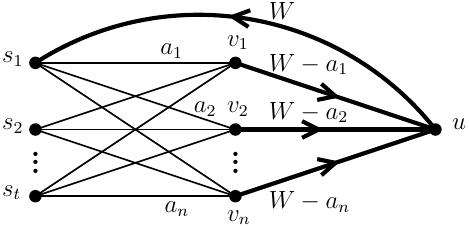} 
  \caption{Reduction from \textsc{Unary Bin Packing} to \textsc{Unary MMOO}.}
  \label{fig:minmaxod-vc3}
\end{figure}


\section{\textsc{Bandwidth}}
\label{sec:bandwidth}
Let $G = (V,E)$ be a graph.
Given a linear ordering $\sigma$ on $V$,
the \emph{stretch} of $\{u,v\} \in E$, denoted $\mathsf{str}_{\sigma}(\{u,v\})$, is $|\sigma(u) - \sigma(v)|$.
The \emph{bandwidth} of $G$, denoted $\bw(G)$,
is defined as $\min_{\sigma} \max_{e \in E} \mathsf{str}_{\sigma}(e)$,
where the minimum is taken over all linear orderings on $V$.
Given a graph $G$ and an integer $w$,
\textsc{Bandwidth} asks whether $\bw(G) \le w$.
\textsc{Bandwidth} is NP-complete on trees of $\pw = 3$~\cite{Monien86}
and on graphs of $\pw = 2$~\cite{Muradian03}.
Fellows et al.~\cite{FellowsLMRS08} presented an FPT algorithm for \textsc{Bandwidth} parameterized by $\vc$ .
Here we show that \textsc{Bandwidth} is W[1]-hard parameterized by $\td$ on trees.
The proof is inspired by the one by Muradian~\cite{Muradian03}. 

\begin{theorem}
\label{thm:bandwidth}
 \textsc{Bandwidth} is W[1]-hard parameterized by $\td$ on trees.
\end{theorem}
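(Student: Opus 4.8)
The plan is to reduce from \textsc{Unary Bin Packing}, the same W[1]-hard source problem used in the proof of Theorem~\ref{thm:uminmaxod-vc}, and to encode the numbers $a_1,\dots,a_n$ as path lengths hanging from a carefully built ``skeleton'' tree. Given an instance with bins of capacity $B = \frac{1}{t}\sum_i a_i$, I would build a tree $T$ consisting of a central spine together with $t$ ``bin gadgets'' and $n$ ``item gadgets''. The item gadget for $a_i$ will be a path (or caterpillar) whose length is a fixed scaling of $a_i$; the bin gadgets will be long ``ruler'' paths of length proportional to $B$ attached to the spine so that, in any optimal layout, a prescribed interval of positions is reserved for each bin. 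The treedepth of such a tree is $O(\log(\text{total size})) + O(1)$ from the spine plus the depth of the attached paths, but since the paths themselves have treedepth $O(\log(\text{length}))$ and we want $\td$ to be bounded by a function of $t$ only, the key design choice is to make the gadgets shallow: use \emph{stars of paths} or \emph{combs} whose treedepth is a constant, and concentrate all the ``global'' structure into $O(t)$ high-degree hub vertices forming a set whose removal leaves only constant-treedepth pieces. This gives $\td(T) = f(t)$.

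The heart of the construction is a \emph{position-counting} argument à la Muradian~\cite{Muradian03}: fix the target bandwidth $w$ to a value like $w = 3t$ (or some small multiple of $t$), and observe that in any ordering $\sigma$ of a tree of bounded bandwidth, the vertices lying in any window $[\,p, p+w\,]$ of $w+1$ consecutive positions form a separator-like set. By making the spine a path of total length equal to the total number of vertices divided by roughly $w$, and attaching the bin gadgets at equally spaced anchor points along it, one forces each bin gadget to occupy its ``own'' block of $\approx B \cdot w$ positions with almost no slack. The items attached near a given anchor must then be laid out within that block, and a counting of available slots shows that the total length of item gadgets assigned to block $j$ must be exactly $B$ (after scaling) — precisely the bin-packing constraint. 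Conversely, any valid packing $S_1,\dots,S_t$ yields a layout of bandwidth $\le w$ by placing item gadget $i$ inside the block of the bin $j$ with $i \in S_j$, interleaving the path vertices with the spine so that every edge has stretch $\le w$.

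The main obstacle will be controlling the \emph{slack} in the layout: because $\textsc{Bandwidth}$ is a min–max quantity, a layout with bandwidth exactly $w$ has some freedom in where it places vertices, and I must ensure that this freedom cannot be exploited to ``cheat'' the packing — e.g.\ to split an item across two bins, or to overfill one bin by borrowing a few positions from a neighbor. This is handled by padding: insert ``rigid'' sub-paths (long paths with no branching) between consecutive bin blocks so that a path of length $L$ occupying a window of width $w$ uses up essentially $L\cdot w / 2$ positions with error $O(w)$, and choose all scaling factors (the multiplier relating $a_i$ to gadget length, the value of $w$, the spine length) large enough that the accumulated $O(w)$ errors over $t$ blocks stay below $1$ unit of ``packing weight.'' Since the $a_i$ are given in unary, this polynomial blow-up is affordable. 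I would also need the standard lemma that a path $P_\ell$ has bandwidth $1$ but, when its endpoint is pinned near position $p$, its vertices are forced into the interval $[p - \ell \cdot w, p + \ell \cdot w]$ and in fact into a much tighter band if $w$ is small — a routine interval-arithmetic argument that I would isolate as a claim before assembling the reduction.

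Finally, I would verify the three deliverables explicitly: (i) $T$ is a tree and is constructible in time polynomial in $n + \sum_i a_i$; (ii) $\td(T) \le f(t)$ for an explicit $f$, by exhibiting the $O(t)$-vertex set of hubs and bounding the treedepth of each remaining shallow gadget by a constant; and (iii) $\bw(T) \le w$ iff the \textsc{Unary Bin Packing} instance is a \textsc{yes}-instance, via the two directions sketched above. Since $t$ is the parameter of \textsc{Unary Bin Packing} and both $\td(T)$ and the reduction time depend only on $t$ (and polynomially on the unary-encoded input), this is a valid parameterized reduction, establishing W[1]-hardness of \textsc{Bandwidth} parameterized by $\td$ on trees.
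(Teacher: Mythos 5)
Your high-level plan (a parameterized reduction from \textsc{Unary Bin Packing} to \textsc{Bandwidth} on a tree built from a short spine with bin blocks and item gadgets) matches the paper's, but the quantitative choices at the heart of your construction are mutually inconsistent, and this is a genuine gap rather than a detail to be fixed later. You simultaneously want (a) a target bandwidth $w = O(t)$ so that the Muradian-style window-counting works, (b) item, ruler and padding gadgets that are long paths or combs whose length is proportional to $a_i$ or $B$, and (c) $\td(T) \le f(t)$. These three cannot hold together. A path (or comb, or spider with long legs) on $\ell$ vertices has treedepth $\Theta(\log \ell)$, and since the $a_i$ are bounded only polynomially in $n$ (not in $t$), any gadget that encodes $a_i$ in its \emph{length} pushes the treedepth up to $\Theta(\log n)$, which is not bounded by any function of the parameter $t$; so (b) destroys (c). If instead you make the gadgets genuinely shallow --- i.e., encode $a_i$ in the \emph{degree} of a vertex, say a star with $\Theta(a_i)$ leaves, which is essentially the only way a connected constant-treedepth gadget can carry a large weight --- then that gadget alone forces bandwidth $\Theta(a_i)$, so the target $w$ cannot be $O(t)$ and (a) fails. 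The same objection hits your ``rigid sub-paths'' used for padding and a spine of length about $|V|/w$: both have treedepth growing with $n$.

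The paper resolves this tension in the opposite direction from your proposal: the target bandwidth is huge, $w = 6tnB + 2n + 1$, the weights are encoded as stars with $6tn\,a_i - 1$ leaves centered at $v_i$, and every path in the tree (the spine $z_0, x_1, y_1, \dots, z_t$ of length $3t$ and the $x_1$--$v_i$ connectors with $6t-4$ inner vertices) has length $O(t)$, which is what gives $\td(T) \le \log_2 t + 6$ after deleting $x_1$ and all leaves. Rigidity comes not from long rulers but from the vertices $z_0$ and $z_t$ of degree exactly $2w$, whose closed neighborhoods must occupy full windows of $2w+1$ consecutive positions, combined with the exact count $|V| = (3t+2)w + 1$: this forces every spine edge to have stretch exactly $w$, pinning all anchor positions with zero slack, after which a counting of the leaves attached to the $z_i$'s and of the star vertices in each block yields $\sum_{i \in I_j} a_i = B$ directly, with no accumulated-error analysis needed. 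Your proposal would need to be redesigned along these lines (weights as leaf multiplicities, bandwidth polynomial in the input, rigidity from saturated-degree vertices, and an explicit routing of the $x_1$--$v_i$ connector paths through reserved vacant positions in the converse direction) before the slack/padding analysis you sketch could even get started.
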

\begin{proof}
Let $(a_{1}, \dots, a_{n}; t)$ be an instance of \textsc{Unary Bin Packing} with $t \ge 2$.
Let $B = \frac{1}{t}\sum_{1 \le i \le n} a_{i}$ be the target weight.
We construct an equivalent instance $(T =(V,E), w)$ of \textsc{Bandwidth} as follows~(see \figref{fig:bw-td}).
We start with a path $(z_{0}, x_{1}, y_{1}, z_{1}, \dots, x_{t}, y_{t}, z_{t})$ of length $3t$.
For $1 \le i \le t-1$, we attach $12tnB$ leaves to $z_{i}$.
To $z_{0}$ and $z_{t}$, we attach $12tnB + 4n + 1$ leaves.
For $1 \le i \le n$, we take a star with $6tn \cdot a_{i}-1$ leaves centered at $v_{i}$.
Finally, we connect each $v_{i}$ to $x_{1}$ with a path with $6t-4$ inner vertices.
We set the target width $w$ to $6tnB + 2n + 1$.
Note that $|V| = (3t + 2)w +1$.

\begin{figure}[b]
  \centering
  \includegraphics[width=\textwidth]{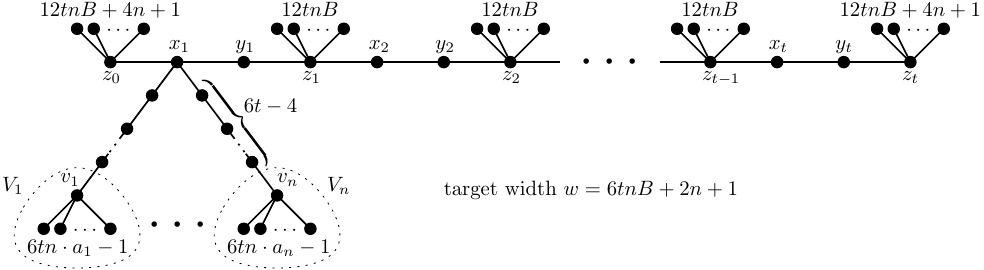} 
  \caption{Reductions from \textsc{Unary Bin Packing} to \textsc{Bandwidth}.}
  \label{fig:bw-td}
\end{figure}

We can see an upper bound of $\td(T)$ as follows.
We remove $x_{1}$ and all the leaves from $T$.
This decreases treedepth by at most $2$.
The remaining graph is a disjoint union of paths
and a longest path has order $6t-3$.
Since $\td(P_{n}) = \lceil \log_{2} (n+1) \rceil$~\cite{NesetrilO2012},
we have $\td(T) \le 2 + \lceil \log_{2} (6t-2) \rceil \le \log_{2}t + 6$.

Now we show that
$(T, w)$ is a yes instance of of \textsc{Bandwidth}
if and only if
$(a_{1}, \dots, a_{n}; t)$ is a yes instance of \textsc{Unary Bin Packing}.

\smallskip

($\implies$)
First assume that $\bw(T) \le w$ and 
that $\sigma$ is a linear ordering on $V$ such that $\max_{e \in E}\mathsf{str}_{\sigma}(e) \le w$.
Since $\deg(z_{0}) = 12tnB+4n+2 = 2w$,
its closed neighborhood $N[z_{0}]$ has to appear in $\sigma$ consecutively,
where $z_{0}$ appears at the middle of this subordering.
Furthermore, no edge can connect a vertex appearing before $z_{0}$ in $\sigma$
and a vertex appearing after $z_{0}$ as such an edge has stretch larger than $w$.
Since the edges not incident to $z_{0}$ form a connected subgraph,
we can conclude that the vertices in $V - N[z_{0}]$ appear 
either all before $N[z_{0}]$ or all after $N[z_{0}]$ in $\sigma$.
By symmetry, we can assume that those vertices appear after $N[z_{0}]$ in $\sigma$.
This implies that $\sigma(z_{0}) = w+1$.
By the same argument, we can show that all vertices in $N[z_{t}]$ appear consecutively in the end of $\sigma$
and $\sigma(z_{t}) = |V|-w = (3t+1)w+1$.
Since $\sigma(z_{t}) - \sigma(z_{0}) = 3tw$
and the path $(z_{0}, x_{1}, y_{1}, z_{1}, \dots, x_{t}, y_{t}, z_{t})$ has length $3t$,
each edge in this path has stretch exactly $w$ in $\sigma$.
Namely, $\sigma(x_{i}) = (3i-1)w + 1$, $\sigma(y_{i}) = 3iw + 1$, and $\sigma(z_{i}) = (3i+1)w + 1$.

For each leaf $\ell$ attached to $z_{i}$ ($1 \le i \le t-1$),
$\sigma(y_{i}) < \sigma(\ell) < \sigma(x_{i+1})$ holds.
Other than these leaves, there are $2(w-1) - 12tnB = 4n$ vertices placed between $y_{i}$ and $x_{i+1}$.
Let $V_{i}$ be the set consisting of $v_{i}$ and the leaves attached to it.
For $j \in \{1,\dots,t\}$, let $I_{j}$ be the set of indices $i$
such that $v_{i}$ is put between $z_{j-1}$ and $z_{j}$.
If $i \in I_{j}$, then all $6tn \cdot a_{i}$ vertices in $V_{i}$ are put between $y_{j-1}$ and $x_{j+1}$.
(We set $y_{0} \coloneqq z_{0}$.)

If $\sum_{i \in I_{j}} a_{i} \ge B+1$, then
$|\bigcup_{i \in I_{j}} V_{i}| \ge 6tn(B+1) > w+8n-1$ as $t \ge 2$.
This number of vertices cannot be put between $y_{j-1}$ and $x_{j+1}$
after putting the leaves attached to $z_{j-1}$ and $z_{j}$:
we can put at most $4n$ vertices between $y_{j-1}$ and $x_{j}$, at most $4n$ vertices between $y_{j}$ and $x_{j+1}$,
and at most $w-1$ vertices between $x_{j}$ and $y_{j}$.
Since $I_{1}, \dots, I_{t}$ form a partition of $\{1,\dots,n\}$
and $\sum_{1 \le i \le n} a_{i} = t B$,
we can conclude that $\sum_{i \in I_{j}} a_{i} = B$ for $1 \le j \le t$.

\smallskip

($\impliedby$)
Next assume that there exists a partition $S_{1}, \dots, S_{t}$ of $\{1,2,\dots,n\}$
such that $\sum_{i \in S_{j}} a_{i} = B$ for all $1 \le j \le t$.

We put $N[z_{0}]$ at the beginning of $\sigma$ and $N[z_{t}]$ at the end.
We set $\sigma(x_{i}) = (3i-1)w + 1$, $\sigma(y_{i}) = 3iw + 1$, and $\sigma(z_{i}) = (3i+1)w + 1$.
For $1 \le i \le t-1$, we put the leaves attached to $z_{i}$
so that a half of them have the first $6tnB$ positions between $y_{i}$ and $z_{i}$
and the other half have the first $6tnB$ positions between $z_{i}$ and $x_{i+1}$.
For each $S_{j}$, we put the vertices in $\bigcup_{i \in S_{j}} V_{i}$
so that they take the first $6tnB$ positions between $x_{j}$ and $y_{j}$.

Now we have $2n$ vacant positions at the end of each interval
between $x_{i}$ and $y_{i}$ for $1 \le i \le t$,
between $y_{i}$ and $z_{i}$ for $1 \le i \le t-1$, and
between $z_{i}$ and $x_{i+1}$ for $1 \le i \le t-1$.
To these positions, we need to put the inner vertices of the paths connecting $x_{1}$ and $v_{1}, \dots, v_{n}$.
Let $P_{i}$ be the inner part of $x_{1}$--$v_{i}$ path.
The path $P_{i}$ uses the $(2i-1)$st and $(2i)$th vacant positions in each interval as follows 
(see \figref{fig:bw-td_path}).

Let $i \in S_{j}$. 
Starting from $x_{1}$, $P_{i}$ proceeds from left to right and
visits the two positions in each interval consecutively
until it arrives the interval between $x_{j}$ and $y_{j}$.
At the interval between $x_{j}$ and $y_{j}$, $P_{i}$ switches to the phase where it 
only visits the $(2i)$th vacant position in each interval and still proceeds from left to right 
until it reaches the interval between $x_{t}$ and $y_{t}$.
Then $P_{i}$ changes the direction and switches to the phase where 
it visits the $(2i-1)$st vacant position only in each interval 
until it reaches the interval between $x_{j}$ and $y_{j}$.

Now all the vertices are put at distinct positions
and it is easy to see that no edge has stretch more than $w$.
This completes the proof.
\end{proof}
\begin{figure}[tb]
  \centering
  \includegraphics[width=\textwidth]{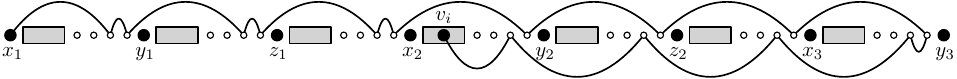} 
  \caption{Embedding the path from $x_{1}$ to $v_{i}$. 
    The gray boxes are the occupied position
    and the white points are the vacant positions.
    ($n = 2$, $j = 2$, $t = 3$.)}
  \label{fig:bw-td_path}
\end{figure}


\section{Conclusion}

Using vertex integrity as a structural graph parameter,
we presented finer analyses of the parameterized complexity of well-studied problems.
Although we needed a case-by-case analysis depending on individual problems,
the results in this paper would be useful for obtaining a general method to deal with vertex integrity.

Although we succeeded to extend many fixed-parameter algorithms parameterized by $\vc$ to the ones parameterized by $\vi$,
we were not so successful on graph layout problems.
Fellows et al.~\cite{FellowsLMRS08} showed that \textsc{Imbalance}, \textsc{Bandwidth}, \textsc{Cutwidth}, and \textsc{Distortion}
are fixed-parameter tractable parameterized by $\vc$.
Lokshtanov~\cite{Lokshtanov15} showed that \textsc{Optimal Linear Arrangement} is fixed-parameter tractable parameterized by $\vc$.
Are these problems fixed-parameter tractable parameterized by $\vi$?
We answered only for \textsc{Imbalance} in this paper.



\bibliography{vi}

\begin{thebibliography}{10}

\bibitem{Abu-Khzam14}
Faisal~N. Abu{-}Khzam.
\newblock Maximum common induced subgraph parameterized by vertex cover.
\newblock {\em Inf. Process. Lett.}, 114(3):99--103, 2014.
\newblock \href {https://doi.org/10.1016/j.ipl.2013.11.007}
  {\path{doi:10.1016/j.ipl.2013.11.007}}.

\bibitem{ArnborgLS91}
Stefan Arnborg, Jens Lagergren, and Detlef Seese.
\newblock Easy problems for tree-decomposable graphs.
\newblock {\em J. Algorithms}, 12(2):308--340, 1991.
\newblock \href {https://doi.org/10.1016/0196-6774(91)90006-K}
  {\path{doi:10.1016/0196-6774(91)90006-K}}.

\bibitem{AsahiroMO11}
Yuichi Asahiro, Eiji Miyano, and Hirotaka Ono.
\newblock Graph classes and the complexity of the graph orientation minimizing
  the maximum weighted outdegree.
\newblock {\em Discret. Appl. Math.}, 159(7):498--508, 2011.
\newblock \href {https://doi.org/10.1016/j.dam.2010.11.003}
  {\path{doi:10.1016/j.dam.2010.11.003}}.

\bibitem{AssmannPSZ81}
S.~F. Assmann, G.~W. Peck, M.~M. Sys{\l}o, and J.~Zak.
\newblock The bandwidth of caterpillars with hairs of length 1 and 2.
\newblock {\em SIAM Journal on Algebraic Discrete Methods}, 2(4):387--393,
  1981.
\newblock \href {https://doi.org/10.1137/0602041} {\path{doi:10.1137/0602041}}.

\bibitem{BarefootES87}
Curtis~A. Barefoot, Roger~C. Entringer, and Henda~C. Swart.
\newblock Vulnerability in graphs --- a comparative survey.
\newblock {\em J. Combin. Math. Combin. Comput.}, 1:13--22, 1987.

\bibitem{BelmonteFGR17}
R{\'{e}}my Belmonte, Fedor~V. Fomin, Petr~A. Golovach, and M.~S. Ramanujan.
\newblock Metric dimension of bounded tree-length graphs.
\newblock {\em {SIAM} J. Discret. Math.}, 31(2):1217--1243, 2017.
\newblock \href {https://doi.org/10.1137/16M1057383}
  {\path{doi:10.1137/16M1057383}}.

\bibitem{BelmonteHK0L18}
R{\'{e}}my Belmonte, Tesshu Hanaka, Ioannis Katsikarelis, Eun~Jung Kim, and
  Michael Lampis.
\newblock New results on directed edge dominating set.
\newblock In {\em {MFCS} 2018}, volume 117 of {\em LIPIcs}, pages 67:1--67:16,
  2018.
\newblock \href {https://doi.org/10.4230/LIPIcs.MFCS.2018.67}
  {\path{doi:10.4230/LIPIcs.MFCS.2018.67}}.

\bibitem{BelmonteKLMO20}
R{\'{e}}my Belmonte, Eun~Jung Kim, Michael Lampis, Valia Mitsou, and Yota
  Otachi.
\newblock Grundy distinguishes treewidth from pathwidth.
\newblock In {\em {ESA} 2020}, volume 173 of {\em LIPIcs}, pages 14:1--14:19,
  2020.
\newblock \href {https://doi.org/10.4230/LIPIcs.ESA.2020.14}
  {\path{doi:10.4230/LIPIcs.ESA.2020.14}}.

\bibitem{BiedlCGHW05}
Therese~C. Biedl, Timothy~M. Chan, Yashar Ganjali, Mohammad~Taghi Hajiaghayi,
  and David~R. Wood.
\newblock Balanced vertex-orderings of graphs.
\newblock {\em Discret. Appl. Math.}, 148(1):27--48, 2005.
\newblock \href {https://doi.org/10.1016/j.dam.2004.12.001}
  {\path{doi:10.1016/j.dam.2004.12.001}}.

\bibitem{Bodlaender88}
Hans~L. Bodlaender.
\newblock Dynamic programming on graphs with bounded treewidth.
\newblock In {\em ICALP 1988}, volume 317 of {\em Lecture Notes in Computer
  Science}, pages 105--118, 1988.
\newblock \href {https://doi.org/10.1007/3-540-19488-6_110}
  {\path{doi:10.1007/3-540-19488-6_110}}.

\bibitem{BodlaenderHJOOZ20}
Hans~L. Bodlaender, Tesshu Hanaka, Lars Jaffke, Hirotaka Ono, Yota Otachi, and
  Tom~C. van~der Zanden.
\newblock Hedonic seat arrangement problems (extended abstract).
\newblock In {\em {AAMAS} 2020}, pages 1777--1779, 2020.
\newblock URL: \url{https://dl.acm.org/doi/abs/10.5555/3398761.3398979}.

\bibitem{BodlaenderHOOZ19}
Hans~L. Bodlaender, Tesshu Hanaka, Yoshio Okamoto, Yota Otachi, and Tom~C.
  van~der Zanden.
\newblock Subgraph isomorphism on graph classes that exclude a substructure.
\newblock In {\em {CIAC} 2019}, volume 11485 of {\em Lecture Notes in Computer
  Science}, pages 87--98, 2019.
\newblock \href {https://doi.org/10.1007/978-3-030-17402-6_8}
  {\path{doi:10.1007/978-3-030-17402-6_8}}.

\bibitem{BonnetP19}
{\'{E}}douard Bonnet and Nidhi Purohit.
\newblock Metric dimension parameterized by treewidth.
\newblock In {\em {IPEC} 2019}, volume 148 of {\em LIPIcs}, pages 5:1--5:15,
  2019.
\newblock \href {https://doi.org/10.4230/LIPIcs.IPEC.2019.5}
  {\path{doi:10.4230/LIPIcs.IPEC.2019.5}}.

\bibitem{BonnetS17}
{\'{E}}douard Bonnet and Florian Sikora.
\newblock The graph motif problem parameterized by the structure of the input
  graph.
\newblock {\em Discret. Appl. Math.}, 231:78--94, 2017.
\newblock \href {https://doi.org/10.1016/j.dam.2016.11.016}
  {\path{doi:10.1016/j.dam.2016.11.016}}.

\bibitem{ChenKX10}
Jianer Chen, Iyad~A. Kanj, and Ge~Xia.
\newblock Improved upper bounds for vertex cover.
\newblock {\em Theor. Comput. Sci.}, 411(40-42):3736--3756, 2010.
\newblock \href {https://doi.org/10.1016/j.tcs.2010.06.026}
  {\path{doi:10.1016/j.tcs.2010.06.026}}.

\bibitem{CorneilR05}
Derek~G. Corneil and Udi Rotics.
\newblock On the relationship between clique-width and treewidth.
\newblock {\em {SIAM} J. Comput.}, 34(4):825--847, 2005.
\newblock \href {https://doi.org/10.1137/S0097539701385351}
  {\path{doi:10.1137/S0097539701385351}}.

\bibitem{Courcelle92}
Bruno Courcelle.
\newblock The monadic second-order logic of graphs {III:} tree-decompositions,
  minor and complexity issues.
\newblock {\em {RAIRO} Theor. Informatics Appl.}, 26:257--286, 1992.
\newblock \href {https://doi.org/10.1051/ita/1992260302571}
  {\path{doi:10.1051/ita/1992260302571}}.

\bibitem{CourcelleMR00}
Bruno Courcelle, Johann~A. Makowsky, and Udi Rotics.
\newblock Linear time solvable optimization problems on graphs of bounded
  clique-width.
\newblock {\em Theory Comput. Syst.}, 33(2):125--150, 2000.
\newblock \href {https://doi.org/10.1007/s002249910009}
  {\path{doi:10.1007/s002249910009}}.

\bibitem{CyganFKLMPPS15}
Marek Cygan, Fedor~V. Fomin, {\L}ukasz Kowalik, Daniel Lokshtanov, D{\'{a}}niel
  Marx, Marcin Pilipczuk, Micha{\l} Pilipczuk, and Saket Saurabh.
\newblock {\em Parameterized Algorithms}.
\newblock Springer, 2015.
\newblock \href {https://doi.org/10.1007/978-3-319-21275-3}
  {\path{doi:10.1007/978-3-319-21275-3}}.

\bibitem{DomLSV08}
Michael Dom, Daniel Lokshtanov, Saket Saurabh, and Yngve Villanger.
\newblock Capacitated domination and covering: {A} parameterized perspective.
\newblock In {\em {IWPEC} 2008}, volume 5018 of {\em Lecture Notes in Computer
  Science}, pages 78--90, 2008.
\newblock \href {https://doi.org/10.1007/978-3-540-79723-4_9}
  {\path{doi:10.1007/978-3-540-79723-4_9}}.

\bibitem{DowneyF99}
Rodney~G. Downey and Michael~R. Fellows.
\newblock {\em Parameterized Complexity}.
\newblock Springer, 1999.
\newblock \href {https://doi.org/10.1007/978-1-4612-0515-9}
  {\path{doi:10.1007/978-1-4612-0515-9}}.

\bibitem{DrangeDH16}
P{\aa}l~Gr{\o}n{\aa}s Drange, Markus~S. Dregi, and Pim van~'t Hof.
\newblock On the computational complexity of vertex integrity and component
  order connectivity.
\newblock {\em Algorithmica}, 76(4):1181--1202, 2016.
\newblock \href {https://doi.org/10.1007/s00453-016-0127-x}
  {\path{doi:10.1007/s00453-016-0127-x}}.

\bibitem{DvorakEGKO17}
Pavel Dvo\v{r}\'{a}k, Eduard Eiben, Robert Ganian, Du\v{s}an Knop, and
  Sebastian Ordyniak.
\newblock Solving integer linear programs with a small number of global
  variables and constraints.
\newblock In {\em {IJCAI} 2017}, pages 607--613, 2017.
\newblock \href {https://doi.org/10.24963/ijcai.2017/85}
  {\path{doi:10.24963/ijcai.2017/85}}.

\bibitem{DvorakK18}
Pavel Dvo\v{r}\'{a}k and Du\v{s}an Knop.
\newblock Parameterized complexity of length-bounded cuts and multicuts.
\newblock {\em Algorithmica}, 80(12):3597--3617, 2018.
\newblock \href {https://doi.org/10.1007/s00453-018-0408-7}
  {\path{doi:10.1007/s00453-018-0408-7}}.

\bibitem{EncisoFGKRS09}
Rosa Enciso, Michael~R. Fellows, Jiong Guo, Iyad~A. Kanj, Frances~A. Rosamond,
  and Ondrej Such{\'{y}}.
\newblock What makes equitable connected partition easy.
\newblock In {\em {IWPEC} 2009}, volume 5917 of {\em Lecture Notes in Computer
  Science}, pages 122--133, 2009.
\newblock \href {https://doi.org/10.1007/978-3-642-11269-0_10}
  {\path{doi:10.1007/978-3-642-11269-0_10}}.

\bibitem{FellowsFLRSST11}
Michael~R. Fellows, Fedor~V. Fomin, Daniel Lokshtanov, Frances~A. Rosamond,
  Saket Saurabh, Stefan Szeider, and Carsten Thomassen.
\newblock On the complexity of some colorful problems parameterized by
  treewidth.
\newblock {\em Inf. Comput.}, 209(2):143--153, 2011.
\newblock \href {https://doi.org/10.1016/j.ic.2010.11.026}
  {\path{doi:10.1016/j.ic.2010.11.026}}.

\bibitem{FellowsLMRS08}
Michael~R. Fellows, Daniel Lokshtanov, Neeldhara Misra, Frances~A. Rosamond,
  and Saket Saurabh.
\newblock Graph layout problems parameterized by vertex cover.
\newblock In {\em ISAAC 2008}, volume 5369 of {\em Lecture Notes in Computer
  Science}, pages 294--305, 2008.
\newblock \href {https://doi.org/10.1007/978-3-540-92182-0_28}
  {\path{doi:10.1007/978-3-540-92182-0_28}}.

\bibitem{FialaGK11}
Ji\v{r}{\'{\i}} Fiala, Petr~A. Golovach, and Jan Kratochv{\'{\i}}l.
\newblock Parameterized complexity of coloring problems: Treewidth versus
  vertex cover.
\newblock {\em Theor. Comput. Sci.}, 412(23):2513--2523, 2011.
\newblock \href {https://doi.org/10.1016/j.tcs.2010.10.043}
  {\path{doi:10.1016/j.tcs.2010.10.043}}.

\bibitem{FrankT87}
Andr{\'{a}}s Frank and {\'{E}}va Tardos.
\newblock An application of simultaneous diophantine approximation in
  combinatorial optimization.
\newblock {\em Combinatorica}, 7:49--65, 1987.
\newblock \href {https://doi.org/10.1007/BF02579200}
  {\path{doi:10.1007/BF02579200}}.

\bibitem{Gabow83}
Harold~N. Gabow.
\newblock An efficient reduction technique for degree-constrained subgraph and
  bidirected network flow problems.
\newblock In {\em STOC 1983}, pages 448--456, 1983.
\newblock \href {https://doi.org/10.1145/800061.808776}
  {\path{doi:10.1145/800061.808776}}.

\bibitem{Ganian11}
Robert Ganian.
\newblock Twin-cover: Beyond vertex cover in parameterized algorithmics.
\newblock In {\em {IPEC} 2011}, volume 7112 of {\em Lecture Notes in Computer
  Science}, pages 259--271, 2011.
\newblock \href {https://doi.org/10.1007/978-3-642-28050-4_21}
  {\path{doi:10.1007/978-3-642-28050-4_21}}.

\bibitem{Ganian12arxiv}
Robert Ganian.
\newblock Using neighborhood diversity to solve hard problems.
\newblock {\em CoRR}, abs/1201.3091, 2012.
\newblock \href {http://arxiv.org/abs/1201.3091} {\path{arXiv:1201.3091}}.

\bibitem{GanianKO18}
Robert Ganian, Fabian Klute, and Sebastian Ordyniak.
\newblock On structural parameterizations of the bounded-degree vertex deletion
  problem.
\newblock {\em Algorithmica}, 2020.
\newblock \href {https://doi.org/10.1007/s00453-020-00758-8}
  {\path{doi:10.1007/s00453-020-00758-8}}.

\bibitem{GareyJ79}
Michael~R. Garey and David~S. Johnson.
\newblock {\em Computers and Intractability: {A} Guide to the Theory of
  {NP}-Completeness}.
\newblock W. H. Freeman, 1979.

\bibitem{Gassner10}
Elisabeth Gassner.
\newblock The steiner forest problem revisited.
\newblock {\em J. Discrete Algorithms}, 8(2):154--163, 2010.
\newblock \href {https://doi.org/10.1016/j.jda.2009.05.002}
  {\path{doi:10.1016/j.jda.2009.05.002}}.

\bibitem{HlinenyOSG08}
Petr Hlinen{\'{y}}, Sang{-}il Oum, Detlef Seese, and Georg Gottlob.
\newblock Width parameters beyond tree-width and their applications.
\newblock {\em Comput. J.}, 51(3):326--362, 2008.
\newblock \href {https://doi.org/10.1093/comjnl/bxm052}
  {\path{doi:10.1093/comjnl/bxm052}}.

\bibitem{JansenKMS13}
Klaus Jansen, Stefan Kratsch, D{\'{a}}niel Marx, and Ildik{\'{o}} Schlotter.
\newblock Bin packing with fixed number of bins revisited.
\newblock {\em J. Comput. Syst. Sci.}, 79(1):39--49, 2013.
\newblock \href {https://doi.org/10.1016/j.jcss.2012.04.004}
  {\path{doi:10.1016/j.jcss.2012.04.004}}.

\bibitem{Kannan87}
Ravi Kannan.
\newblock Minkowski's convex body theorem and integer programming.
\newblock {\em Math. Oper. Res.}, 12:415--440, 1987.
\newblock \href {https://doi.org/10.1287/moor.12.3.415}
  {\path{doi:10.1287/moor.12.3.415}}.

\bibitem{KellerhalsK20}
Leon Kellerhals and Tomohiro Koana.
\newblock Parameterized complexity of geodetic set.
\newblock In {\em {IPEC} 2020}, volume 180 of {\em LIPIcs}, pages 20:1--20:14,
  2020.
\newblock \href {https://doi.org/10.4230/LIPIcs.IPEC.2020.20}
  {\path{doi:10.4230/LIPIcs.IPEC.2020.20}}.

\bibitem{Lenstra83}
Hendrik~W. {Lenstra Jr.}
\newblock Integer programming with a fixed number of variables.
\newblock {\em Math. Oper. Res.}, 8:538--548, 1983.
\newblock \href {https://doi.org/10.1287/moor.8.4.538}
  {\path{doi:10.1287/moor.8.4.538}}.

\bibitem{Lokshtanov15}
Daniel Lokshtanov.
\newblock Parameterized integer quadratic programming: Variables and
  coefficients.
\newblock {\em CoRR}, abs/1511.00310, 2015.
\newblock \href {http://arxiv.org/abs/1511.00310} {\path{arXiv:1511.00310}}.

\bibitem{LokshtanovMS13}
Daniel Lokshtanov, Neeldhara Misra, and Saket Saurabh.
\newblock Imbalance is fixed parameter tractable.
\newblock {\em Inf. Process. Lett.}, 113(19-21):714--718, 2013.
\newblock \href {https://doi.org/10.1016/j.ipl.2013.06.010}
  {\path{doi:10.1016/j.ipl.2013.06.010}}.

\bibitem{MeeksS16}
Kitty Meeks and Alexander Scott.
\newblock The parameterised complexity of list problems on graphs of bounded
  treewidth.
\newblock {\em Inf. Comput.}, 251:91--103, 2016.
\newblock \href {https://doi.org/10.1016/j.ic.2016.08.001}
  {\path{doi:10.1016/j.ic.2016.08.001}}.

\bibitem{MisraM20}
Neeldhara Misra and Harshil Mittal.
\newblock Imbalance parameterized by twin cover revisited.
\newblock In {\em {COCOON} 2020}, volume 12273 of {\em Lecture Notes in
  Computer Science}, pages 162--173, 2020.
\newblock \href {https://doi.org/10.1007/978-3-030-58150-3_13}
  {\path{doi:10.1007/978-3-030-58150-3_13}}.

\bibitem{Monien86}
Burkhard Monien.
\newblock The bandwidth minimization problem for caterpillars with hair length
  3 is {NP}-complete.
\newblock {\em {SIAM} J. Algebraic and Discrete Methods}, 7(4):505--512, 1986.
\newblock \href {https://doi.org/10.1137/0607057} {\path{doi:10.1137/0607057}}.

\bibitem{Muradian03}
David Muradian.
\newblock The bandwidth minimization problem for cyclic caterpillars with hair
  length 1 is {NP}-complete.
\newblock {\em Theor. Comput. Sci.}, 307(3):567--572, 2003.
\newblock \href {https://doi.org/10.1016/S0304-3975(03)00238-X}
  {\path{doi:10.1016/S0304-3975(03)00238-X}}.

\bibitem{NesetrilO2012}
Jaroslav Ne\v{s}et\v{r}il and Patrice~Ossona de~Mendez.
\newblock {\em Sparsity: Graphs, Structures, and Algorithms}.
\newblock Algorithms and combinatorics. Springer, 2012.
\newblock \href {https://doi.org/10.1007/978-3-642-27875-4}
  {\path{doi:10.1007/978-3-642-27875-4}}.

\bibitem{Oum08}
Sang{-}il Oum.
\newblock Approximating rank-width and clique-width quickly.
\newblock {\em {ACM} Trans. Algorithms}, 5(1):10:1--10:20, 2008.
\newblock \href {https://doi.org/10.1145/1435375.1435385}
  {\path{doi:10.1145/1435375.1435385}}.

\bibitem{ReidlRVS14}
Felix Reidl, Peter Rossmanith, Fernando~S{\'{a}}nchez Villaamil, and Somnath
  Sikdar.
\newblock A faster parameterized algorithm for treedepth.
\newblock In {\em {ICALP} 2014}, volume 8572 of {\em Lecture Notes in Computer
  Science}, pages 931--942.
\newblock \href {https://doi.org/10.1007/978-3-662-43948-7_77}
  {\path{doi:10.1007/978-3-662-43948-7_77}}.

\bibitem{Szeider11arxiv}
Stefan Szeider.
\newblock Not so easy problems for tree decomposable graphs.
\newblock {\em Ramanujan Mathematical Society, Lecture Notes Series},
  No.~13:179--190, 2010.
\newblock \href {http://arxiv.org/abs/1107.1177} {\path{arXiv:1107.1177}}.

\bibitem{Szeider11}
Stefan Szeider.
\newblock Monadic second order logic on graphs with local cardinality
  constraints.
\newblock {\em {ACM} Trans. Comput. Log.}, 12(2):12:1--12:21, 2011.
\newblock \href {https://doi.org/10.1145/1877714.1877718}
  {\path{doi:10.1145/1877714.1877718}}.

\end{thebibliography}


\appendix


\section{Graph parameters}
\label{sec:graph-parameters}

We give formal definitions of $\vc(G)$, $\td(G)$, and $\vi(G)$ only.
See~\cite{HlinenyOSG08} for the definitions of $\pw(G)$, $\tw(G)$, and clique-width $\cw(G)$.

\subsection{Vertex cover}
Let $G = (V,E)$ be a graph. 
A set $S \subseteq V$ is a \emph{vertex cover} of $G$
if each component of $G-S$ has exactly one vertex.
The \emph{vertex cover number} of $G$, denoted $\vc(G)$,
is the size of a minimum vertex cover of $G$.
It is known that a vertex cover of size $k$ (if exists)
can be found in time $O(2^{k} \cdot n)$~\cite{CyganFKLMPPS15}, where $n = |V|$
(see \cite{ChenKX10} for the currently fastest algorithm).
Thus we can assume that a vertex cover of minimum size is given
when designing an algorithm parameterized by $\vc$.

\subsection{Treedepth}

The treedepth of a graph $G = (V,E)$, denoted $\td(G)$, is defined recursively as follows:
\[
  \td(G) =
  \begin{cases}
    1 & |V| = 1,\\
    \max_{1 \le i \le c} \td(C_{i}) & G \textrm{ has $c\ge 2$ components } C_{1}, \dots, C_{c}, \\
    1 + \min_{v \in V} \td(G - v) & \textrm{otherwise}.
  \end{cases}
\]
In other words, a graph $G = (V,E)$ has treedepth at most $d$ if 
there is a rooted forest $F$ of height at most $d$ on the same vertex set $V$
such that two vertices are adjacent in $G$ only if one is an ancestor of the other in $F$.
It is known that such $F$, if exists, can be found in time $2^{O(d^{2})} \cdot n$~\cite{ReidlRVS14},
where $n = |V|$.
So we assume that such a rooted forest of depth $\td(G)$ is given together with $G$
when the parameter is $\td$.

From the rooted forest $F$, one can easily construct a path decomposition of $G$ with maximum bag size at most $d$:
use leaves as bags and put all ancestors of a leaf into the bag corresponding to the leaf.
This implies that $\pw(G) + 1 \le \td(G)$ for every graph $G$.
On the other hand, $\td$ cannot be bounded by any function of $\pw$ in general.
For example, $\pw(P_{n}) = 1$ and $\td(P_{n}) = \lceil \log_{2} (n+1) \rceil$~\cite{NesetrilO2012},
where $P_{n}$ is the path of order $n$.

In general, we have the following upper bound of the length of paths.
\begin{proposition}
[\cite{NesetrilO2012}]
\label{prop:td-path-length}
The length of a longest path in $G$ is less than $2^{\td(G)}$.
\end{proposition}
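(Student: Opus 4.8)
The plan is to prove this by induction on $\td(G)$, first reducing to connected graphs: a longest path of $G$ is contained in a single connected component $C$, and $\td(C) \le \td(G)$, so it suffices to bound the length of a longest path of a connected graph $C$ in terms of $\td(C)$. I would in fact prove the slightly sharper statement that a longest path of a connected graph $C$ has at most $2^{\td(C)}-1$ vertices; this survives the induction cleanly and yields length at most $2^{\td(C)} - 2 < 2^{\td(G)}$.

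For the base case $\td(C) = 1$ the graph $C$ is a single vertex, so a longest path has $1 = 2^1 - 1$ vertices. For the inductive step, set $d := \td(C) \ge 2$, so $|V(C)| \ge 2$; by the recursive definition of treedepth there is a vertex $v$ with $\td(C - v) = d - 1$, hence every component of $C - v$ has treedepth at most $d - 1$. Let $P$ be a longest path of $C$, on $k$ vertices, and distinguish three cases by the location of $v$. If $v \notin V(P)$, then $P$ lies in one component of $C - v$ and the inductive hypothesis gives $k \le 2^{d-1} - 1$. If $v$ is an endpoint of $P$, then $P - v$ is a path on $k - 1$ vertices lying in one component of $C - v$, so $k - 1 \le 2^{d-1} - 1$. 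If $v$ is an internal vertex of $P$, then $P - v$ splits into two subpaths $P_{1}, P_{2}$, each contained in a (possibly different) component of $C - v$, with $|V(P_{1})| + |V(P_{2})| = k - 1$; applying the hypothesis to each component gives $k - 1 \le 2(2^{d-1} - 1) = 2^{d} - 2$. In every case $k \le 2^{d} - 1$, closing the induction. For a disconnected $G$, the longest path lies in a component $C$ with $\td(C) \le \td(G)$, and its length is then at most $2^{\td(C)} - 2 < 2^{\td(G)}$.

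The step that I expect to need the most care is the internal-vertex case: one has to observe that each of $P_{1}$ and $P_{2}$, being connected, lies inside a single component of $C - v$, so that the inductive hypothesis is genuinely applicable with parameter $d - 1$, and then check that the factor of two from the two pieces is exactly absorbed, $2(2^{d-1}-1) = 2^{d} - 2$. The remaining cases and the reduction to the connected case are routine.

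As an alternative one can sidestep the case analysis entirely: treedepth is monotone under taking subgraphs (restrict an elimination forest of $G$ to the vertex set of the subgraph, keeping for each vertex its nearest retained ancestor as its parent), so if $G$ contains a path on $k$ vertices then $\td(G) \ge \td(P_{k}) = \lceil \log_{2}(k+1) \rceil \ge \log_{2}(k+1)$ by the value of $\td(P_{k})$ quoted above, whence $2^{\td(G)} \ge k + 1$, which strictly exceeds the length $k-1$ of the path. I would likely use this shorter argument as the proof and leave the induction as a remark, or the reverse.
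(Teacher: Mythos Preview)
Your proof is correct. However, the paper does not actually supply its own proof of this proposition: it is stated with a citation to Ne\v{s}et\v{r}il and Ossona de Mendez and left unproved. So there is no in-paper argument to compare against directly.

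That said, the paper does record the identity $\td(P_{n}) = \lceil \log_{2}(n+1) \rceil$ immediately before stating this proposition, which is exactly the ingredient of your alternative argument. In that sense your shorter route---monotonicity of treedepth under subgraphs together with the exact value for paths---is the one the paper implicitly points at, and it is the cleaner choice here. Your inductive argument is also fine and self-contained; if you keep it, the only small cosmetic point is that in the internal-vertex case $P_{1}$ and $P_{2}$ could land in the same component of $C - v$, but this is harmless since you apply the hypothesis to each subpath separately via its containing component.
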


\subsection{Vertex integrity}
\label{sec:vi}
The \emph{vertex integrity}~\cite{BarefootES87} of a graph $G$, denoted $\vi(G)$, 
is the minimum integer $k$ satisfying that
there is a vertex set $S \subseteq V(G)$ such that 
$|S| + |V(C)| \le k$ for each component $C$ of $G-S$.
We call such $S$ a \emph{$\vi(k)$-set} of $G$.
For an $n$-vertex graph of vertex integrity at most $k$,
we can find a \emph{$\vi(k)$-set} in $O(k^{k+1} n)$ time~\cite{DrangeDH16}.
Hence, without loss of generality, we can assume that a $\vi(k)$-set is given as a part of input
when designing an FPT algorithm parameterized by $\vi$.
Observe that $\td(G) \le \vi(G)$ since we can first remove the $k' \le k$ vertices in a $\vi(k)$-set
and then each component has order at most $k-k'$ and thus treedepth at most $k-k'$.
Also, since a vertex cover of size $k$ is a $\vi(k+1)$-set, $\vi(G) \le \vc(G) + 1$ holds.

Dvo\v{r}\'{a}k et al.~\cite{DvorakEGKO17} showed that \textsc{Integer Linear Programming} (ILP)
is fixed-parameter tractable parameterized by the fracture number of the incidence, 
which is basically equivalent to the vertex integrity.
Ganian et al.~\cite{GanianKO18} showed that 
\textsc{Bounded Degree Deletion} is W[1]-hard parameterized by treedepth
but fixed-parameter tractable 
parameterized by core fracture number, which can be seen as a generalization of vertex integrity.
Bodlaender et al.~\cite{BodlaenderHOOZ19} showed that \textsc{Subgraph Isomorphism}
is fixed-parameter tractable parameterized by the vertex integrity of both graphs,
while the problem is NP-complete for graphs of treedepth $3$.


\section{ILP parameterized by the number of variables}
\label{sec:ilp}
Lenstra~\cite{Lenstra83} showed that the feasibility of an integer linear programming (ILP) formula can be decided in FPT time
when parameterized by the number of variables. The time and space complexity was later improved by Kannan~\cite{Kannan87}
and by Frank and Tardos~\cite{FrankT87}.
Their algorithms can be used also for the following ILP optimization problem (see e.g., \cite{FellowsLMRS08}).
\begin{myproblem}
  \problemtitle{\textsc{$p$-Opt-ILP}}
  \probleminput{A matrix $A \in \mathbb{Z}^{m \times p}$, vectors $b \in \mathbb{Z}^{m}$ and $c \in \mathbb{Z}^{p}$.}
  \problemquestiontitle{Task}
  \problemquestion{Find a vector $x \in \mathbb{Z}^{p}$ that minimizes $c^{\top} x$ and satisfies that $A x \ge b$.}
\end{myproblem}
\begin{proposition}
[\cite{Lenstra83,Kannan87,FrankT87}]
\label{prop:ILP_opt}
\textsc{$p$-Opt-ILP} can be solved using $O(p^{2.5 p + o(p)} \cdot L \cdot \log(MN))$ arithmetic operations
and space polynomial in $L$, where $L$ is the number of bits in the input,
$N$ is the maximum absolute value any variable can take,
and $M$ is an upper bound on the absolute value of the minimum taken by the objective function.
\end{proposition}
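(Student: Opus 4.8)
The plan is to derive the optimization statement from the corresponding \emph{feasibility} result: Lenstra~\cite{Lenstra83}, with the refinements of Kannan~\cite{Kannan87} and Frank and Tardos~\cite{FrankT87}, decides whether a system $\{x \in \mathbb{Z}^{p} : A'x \ge b'\}$ has a solution and, when it does, outputs one such $x$, using $p^{2.5p+o(p)}$ arithmetic operations times a factor polynomial in the bit-length of $(A', b')$. I would use this routine as a black box and reduce \textsc{$p$-Opt-ILP} to a short sequence of calls to it, none of which increases the number of variables beyond $p$.

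First I would call the feasibility routine on $Ax \ge b$. If it reports infeasibility, we are done. Otherwise we have a feasible point, and since $M$ is by hypothesis an upper bound on $|\min c^\top x|$, the optimum is finite and lies in the integer interval $[-M, M]$ (if one prefers not to assume finiteness, a single extra call on $Ax \ge b,\ c^\top x \le -M-1$ detects unboundedness). Then I would binary-search for $\theta^{*} = \min\{c^\top x : Ax \ge b,\ x \in \mathbb{Z}^{p}\}$: keeping an interval $[\ell, r] \subseteq [-M, M]$ known to contain $\theta^{*}$, I repeatedly test feasibility of the system obtained from $Ax \ge b$ by appending the single row $c^\top x \le \lfloor (\ell+r)/2 \rfloor$, and update $\ell$ or $r$ accordingly; appending one inequality keeps the variable count at $p$, so the parameter is preserved. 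After $O(\log M)$ iterations the interval collapses to $\{\theta^{*}\}$, and one last feasibility call on $Ax \ge b,\ c^\top x \le \theta^{*}$ returns an optimal vector.

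For the time bound, each call runs on a $p$-variable system whose bit-length is $O(L + \log M)$, plus the $O(p\log N)$ bits for the box constraints $-N \le x_i \le N$ that the feasibility routine may need; multiplying the per-call cost $p^{2.5p+o(p)} \cdot \mathrm{poly}(L + \log M + p\log N)$ by the $O(\log M)$ search depth and simplifying yields the stated $O(p^{2.5p+o(p)} \cdot L \cdot \log(MN))$ operation count, and the polynomial space bound is inherited directly because the calls are made one at a time. The only points requiring care, and hence the main obstacle, are (i) that the black-box feasibility algorithm used must be the constructive one that returns a witness — needed both to certify the value $\theta^{*}$ during the search and to output the final optimal $x$ — and (ii) the bookkeeping of encoding lengths so that the per-call cost times the $\log M$ search depth really telescopes into the claimed bound; neither is difficult, but both must be spelled out rather than waved away.
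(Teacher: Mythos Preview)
The paper does not actually prove this proposition; it states it as a known result from \cite{Lenstra83,Kannan87,FrankT87} and, for the passage from feasibility to optimization, merely points the reader to Fellows et al.~\cite{FellowsLMRS08} with the remark that ``[t]heir algorithms can be used also for the following ILP optimization problem.'' Your binary-search reduction to the feasibility oracle is precisely the standard argument behind that citation, so your proposal is correct and matches what the paper implicitly relies on. The only caveat is the accounting in the last step: to collapse $O(\log M)\cdot p^{2.5p+o(p)}\cdot (L+\log M + p\log N)$ into the stated $O(p^{2.5p+o(p)}\cdot L\cdot \log(MN))$ you need the Frank--Tardos bound to be \emph{linear} (not just polynomial) in the input length and you must argue that $\log M$ and $p\log N$ are $O(L)$, which holds because the bounds $M$ and $N$ are derivable from the input data; you flag this as point~(ii), and that is the right place to be careful.
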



\section{Omitted proofs in Section~\ref{sec:mcs}}
\label{asec:mcs}

\begin{theorem}
\label{thm:vi-mcis-both}
\textsc{Maximum Common Induced Subgraph}
is fixed-parameter tractable parameterized by 
the sum of the vertex integrity of input graphs.
\end{theorem}
\begin{proof}
Since the proof is almost the same with the one for Theorem~\ref{thm:vi-mcs-both},
here we only describe the differences for handling induced subgraphs.

Let $G_{1} = (V_{1}, E_{1})$ and $G_{2} = (V_{2}, E_{2})$ be the input graphs of vertex integrity at most $k$.
We will find $U_{1} \subseteq V_{1}$ and $U_{2} \subseteq V_{2}$ with maximum size $|U_{1}| = |U_{2}|$
such that there is an isomorphism $\eta$ from $G_{1}[U_{1}]$ to $G_{2}[U_{2}]$.

\paragraph*{Step 1. Guessing matched $\vi(2k)$-sets $R_{1}$ and $R_{2}$.}
In the same way as before, we guess $\vi(2k)$-sets $R_{1}$ and $R_{2}$ of $G_{1}$ and $G_{2}$, respectively,
and a bijection $\eta|_{R_{1}}\colon R_{1} \to R_{2}$.
The only difference here is that we reject the current guess if 
$\eta|_{R_{1}}$ is not an isomorphism from $G_{1}[R_{1}]$ to $G_{2}[R_{2}]$.

\paragraph*{Step 2. Extending the guessed parts of $\eta$.}
To handle induced subgraphs, 
we need to modify the definition of 
``to decompose'' as follows:
for a type-$t$ component $C$ of $H_{i} - R_{i}$ and a multiset $T$ of g-$(H_{i}, R_{i})$-types, 
we say that $\eta\colon U_{1} \to U_{2}$ \emph{decomposes} $C$ into $T$
if $T$ coincides with the multiset of g-$(H_{i}, R_{i})$-types
of the pairs $(A,B)$ such that $A$ is a component of $H_{i}[V(C) \cap U_{i}]$
and $B$ is the set of all edges connecting $A$ and $R_{i}$.
Everything else works as before.
\end{proof}


\section{Omitted proofs in Section~\ref{sec:min-max-outdeg}}
\label{asec:min-max-outdeg}
\begin{theorem}
\label{thm:bminmaxod-vc3}
\textsc{Binary MMOO} is NP-complete
for graphs of $\vc = 3$.
\end{theorem}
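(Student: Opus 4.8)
The plan is to establish membership in NP and then prove NP-hardness by a reduction from a suitable weakly NP-hard number problem, most naturally \textsc{Partition} (or, to get exact outdegree balance, \textsc{3-Partition} is overkill and \textsc{Partition} suffices here since we have room to use binary weights). Membership in NP is immediate: an orientation $\Lambda$ is a certificate of polynomial size, and checking that every vertex has weighted outdegree at most $r$ is done by summing binary integers. For the hardness part, the key point is that we must confine the whole construction to a graph with a vertex cover of size exactly $3$, so essentially the entire instance must live on a bipartite-like skeleton with one side of size $3$; the binary encoding of weights is what lets us smuggle a \textsc{Partition} instance into such a rigid shape.

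Concretely, I would start from a \textsc{Partition} instance $a_1,\dots,a_n$ with $\sum_i a_i = 2B$, and build a graph on vertex set $\{u_1,u_2,u_3\} \cup \{v_1,\dots,v_n\}$, where $\{u_1,u_2,u_3\}$ is the vertex cover (so each $v_i$ is adjacent only to vertices in $\{u_1,u_2,u_3\}$, and we may additionally add some edges inside $\{u_1,u_2,u_3\}$). The idea, mirroring the structure of the proof of Theorem~\ref{thm:uminmaxod-vc}, is: attach each $v_i$ to $u_1$ and $u_2$ with an edge of weight $a_i$, and give $v_i$ a third heavy edge (to $u_3$, or a pendant-type gadget realized with the third cover vertex) of weight large enough that it is forced to be oriented away from $v_i$, pinning down the outdegree budget of $v_i$ so that exactly one of its two $a_i$-edges points ``up''. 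Choosing $r$ and a few auxiliary heavy edges among $u_1,u_2,u_3$ so that the only feasible orientation forces $u_1$ to receive exactly the weight $B$ and $u_2$ to receive exactly $B$, the set $S = \{i : \{v_i,u_1\}\text{ oriented } v_i\to u_1\}$ satisfies $\sum_{i\in S} a_i = B$, giving the equivalence in both directions. The forward direction (given a partition, orient accordingly) is a direct check that all outdegrees equal their targets; the converse uses a counting argument that the sum of all edge weights equals $r$ times the number of vertices, so every outdegree must be \emph{exactly} $r$, which then rigidly determines the orientation of the heavy edges and forces the partition.

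The main obstacle I anticipate is the rigidity of a size-$3$ vertex cover: with only three ``hub'' vertices and no freedom to add intermediate vertices, I must realize \emph{both} of the following with just heavy binary-weighted edges — (i) forcing each $v_i$'s third edge and the $u$--$u$ edges into a unique orientation, and (ii) making the residual budget at $u_1$ and $u_2$ equal the split target $B$. Balancing these constraints is where the weight bookkeeping gets delicate; in particular one must check that the heavy edges cannot be re-oriented to ``cheat'' the budget at $u_3$, which is exactly the role of the global weight-vs-vertex-count parity argument (total weight $= r \cdot |V|$ forcing equality everywhere), just as in the proof of Theorem~\ref{thm:uminmaxod-vc}. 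Once that equality is in hand, the remaining verification is routine arithmetic. I would also remark, as stated in the main text, that this is tight: for $\vc \le 2$ the graph has at most two high-degree vertices and \textsc{Binary MMOO} reduces to an easily solvable flow/matching-type condition, so it is polynomial-time solvable there.
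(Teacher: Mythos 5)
Your outline follows the same route as the paper: reduce from \textsc{Partition}, use three hub vertices as the vertex cover, attach each $v_i$ by two weight-$a_i$ edges and one heavy edge, and exploit the fact that when the total edge weight equals $r\cdot|V|$ every vertex is forced to have outdegree exactly $r$. The problem is that the only nontrivial content of the theorem beyond Theorem~\ref{thm:uminmaxod-vc} --- the concrete choice of $r$, of the heavy-edge weights, and of the auxiliary hub edges that makes the forcing actually go through --- is precisely what you defer (``choosing $r$ and a few auxiliary heavy edges \dots so that the only feasible orientation forces \dots''). This is not a routine verification if one literally mirrors the construction of Theorem~\ref{thm:uminmaxod-vc} with $t=2$, as you propose: there $r=W=(t-1)B$, so for $t=2$ one only knows $a_i<W$ rather than $a_i<W/2$, and the step showing that the hub $u$ can reach outdegree exactly $W$ only by sending out its weight-$W$ edge collapses (a single heavy edge, or a pair of them, could now also sum to exactly $W$). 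The paper's proof repairs exactly this point by a preprocessing trick your plan lacks: start from \textsc{Partition} with the cardinality restriction $|S_1|=|S_2|=n/2$ (still NP-hard), shift every item to $a_i'=a_i+B$ so that $B'=W'=(n/2+1)B$ and $a_i'<W'/2$, argue the shift preserves equivalence, and only then run the $t=2$ construction with the primed values.

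Your variant could in fact be completed by a different device, because unlike the paper you treat $r$ as a free parameter rather than tying it to the partition target: for instance take $r=5B$, heavy edges $\{v_i,u_3\}$ of weight $r-a_i$, and auxiliary edges $\{u_3,u_1\}$ of weight $r$ and $\{u_1,u_2\},\{u_2,u_3\}$ of weight $r-B$; then the total weight is $r(n+3)$, all outdegrees are forced to equal $r$, each $v_i$ must send out its heavy edge plus exactly one $a_i$-edge (since $2a_i<r$), the hub edges are rigid, and the light edges leaving $u_1$ and leaving $u_2$ must each sum to exactly $B$, which yields the partition; the converse is the direct check. But as written, the proposal names the delicate step without resolving it, and neither the paper's shift argument nor an explicit weight assignment of the above kind appears, so the proof is not yet there.
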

\begin{proof}
Since the problem clearly belongs to NP, we show the NP-hardness 
by presenting a reduction from \textsc{Partition}, which is NP-complete~\cite{GareyJ79}.
Given an even number of positive integers $a_{1}, a_{2}, \dots, a_{n}$ in binary,
\textsc{Partition} asks the existence of a partition $\{S_{1}, S_{2}\}$ of $\{1, 2, \dots, n\}$ 
such that $\sum_{i \in S_{j}} a_{i} = \frac{1}{2}\sum_{1 \le i \le n} a_{i}$ for $j \in \{1,2\}$.
This problem remains NP-hard with an additional condition $|S_{1}| = |S_{2}| = n/2$~\cite{GareyJ79}.
We assume that $n \ge 10$ since otherwise the problem can be solved in polynomial time.
Let $B = W = \frac{1}{2}\sum_{1 \le i \le n} a_{i}$.

The proof is almost the same with the one of Theorem~\ref{thm:uminmaxod-vc} except that $t = 2$.
Observe that we assumed there that $t \ge 3$ only to guarantee that $a_{i} < W/2$ for all $i$.
To have this assumption here, we start with an instance $a_{1}, a_{2}, \dots, a_{n}$ 
of \textsc{Partition} with the restriction $|S_{1}| = |S_{2}| = n/2$.
Let $a_{i}' = a_{i} + B$ for each $i$,
and $B' = W' = \frac{1}{2}\sum_{1 \le i \le n} a_{i}' = (n/2+1) B$.
Clearly, this is an equivalent instance as we added the same value to each number.
Also, $a_{i}' < W'/2$ holds for all $i$ since $n \ge 10$ and $a_{i} < 2B$
imply that $a_{i}' = a_{i} + B < 3B \le (n/2+1) B / 2 = W'/2$.
Now we observe that the restriction $|S_{1}| = |S_{2}| = n/2$ is not a restriction anymore.
That is, if $\sum_{i \in S} a_{i}' = B'$ for some $S \subseteq \{1,\dots,n\}$, then $|S| = n/2$ holds.
Suppose to the contrary that $S \ne n/2$.
By swapping $S$ and $\{1,\dots,n\} \setminus S$ if necessary,
we can assume that $S \le n/2 - 1$.
This gives $(n/2+1) B = \sum_{i \in S} a_{i}' \le (n/2 - 1)B + \sum_{i \in S} a_{i}$,
which implies $\sum_{i \in S} a_{i} \ge 2B = \sum_{1 \le i \le n} a_{i}$, a contradiction.

We construct an instance of \textsc{Binary MMOO}
as exactly we did in the proof of Theorem~\ref{thm:uminmaxod-vc}
by setting $t = 2$ and using $a_{i}'$, $B'$, and $W'$ instead of $a_{i}$, $B$, and $W$.
The equivalence of the instances can be shown in the same way.
\end{proof}

\begin{theorem}
\label{thm:bminmaxod-vc2}
\textsc{Binary MMOO} can be solved in polynomial time for graphs of $\vc \le 2$.
\end{theorem}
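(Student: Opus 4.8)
The plan is to exploit that a graph with $\vc(G) \le 2$ has an extremely restricted structure, so that after removing a vertex cover $\{p, q\}$ of size at most two, every remaining vertex is an independent vertex adjacent only to $p$ and/or $q$. Thus $V(G) = \{p,q\} \cup A \cup B \cup C \cup D$, where $A$ is the set of vertices adjacent to $p$ only, $B$ the set adjacent to $q$ only, $C$ the set adjacent to both, and $D$ the set of isolated vertices (which are irrelevant: orient nothing incident to them). The only possible edges are $\{p,q\}$ (if present) together with the edges from $p$ to $A \cup C$ and from $q$ to $B \cup C$. First I would compute such a vertex cover in polynomial time (for $\vc \le 2$ this is trivial by brute force over all pairs), handle the degenerate cases $\vc \le 1$ directly, and reduce to the case $\vc(G) = 2$ with cover $\{p,q\}$.

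Next I would observe that for every vertex $v \in A \cup B \cup C$, orienting an edge $\{v, x\}$ (with $x \in \{p,q\}$) \emph{away} from $v$ costs $v$ exactly $w(\{v,x\})$ toward its outdegree bound $r$, and a vertex in $A \cup B$ has degree $1$, so it either keeps its single edge (outdegree $0$) or pushes its weight onto its unique neighbor. The core combinatorial problem is therefore: decide for each edge incident to $p$ or $q$ which endpoint ``absorbs'' its weight, subject to the outdegree cap $r$ at every vertex. For a degree-$1$ vertex $v$ with neighbor $p$ and edge weight $w_v$: if $w_v > r$ the instance is infeasible; otherwise $v$ may either absorb $w_v$ itself (contributing $0$ to $p$) or send it to $p$ (contributing $w_v$ to $p$). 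For a vertex $v \in C$ with edges to $p$ of weight $\alpha_v$ and to $q$ of weight $\beta_v$: $v$'s own outdegree is (amount sent to $p$) $+$ (amount sent to $q$) $\le r$, which rules out only the option of sending both when $\alpha_v + \beta_v > r$; each of the up-to-three remaining patterns contributes a known amount to $p$ and to $q$. So the entire decision reduces to: can we pick, for each $v$, one of its feasible local patterns so that the total load on $p$ is at most $r$ and the total load on $q$ is at most $r$ (and, if the edge $\{p,q\}$ exists, its orientation contributes its weight to whichever of $p,q$ it points away from)?

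I would then solve this by a simple exchange/greedy argument rather than by dynamic programming, which would not be polynomial with binary weights. The key step: for the degree-$1$ vertices in $A$, the only choice is how much of their total weight to ``park'' on $p$; each such vertex independently decides to keep its edge or not, and there is no interaction among them except through the shared budget $r$ at $p$. Hence the set of achievable loads on $p$ from the $A$-vertices is exactly the set of subset sums of $\{w_v : v \in A\}$ — which is \emph{not} all attainable in polynomial time in general, so a naive approach fails. The resolution is to note we do not need to hit a specific target: we only need the load on $p$ to be \emph{at most} $r$, and the load on $q$ at most $r$, while each $A$-vertex that does \emph{not} park its weight on $p$ incurs no cost anywhere. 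Therefore for $A$-vertices the optimal policy is unconstrained: park a weight on $p$ only if forced, i.e. never — orient every $A$-edge from $p$ toward the leaf, contributing $0$ to $p$, which is always legal since leaves have no other edges (after checking $w_v \le r$). The same holds for $B$. The genuine interaction is thus confined to $C$ and the edge $\{p,q\}$: for each $v \in C$, orienting both of $v$'s edges outward from $v$ (cost $\alpha_v + \beta_v$ to $v$, $0$ to $p$ and $q$) is feasible for $v$ whenever $\alpha_v + \beta_v \le r$, and then contributes $0$ to both $p$ and $q$. Consequently the whole instance is a yes-instance if and only if $w(\{v,x\}) \le r$ for every leaf edge, $\alpha_v + \beta_v \le r$ for every $v \in C$, and the single edge $\{p,q\}$ (if present) has weight at most $r$ — all checkable in linear time.

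\textbf{The main obstacle.} The subtle point — and the step I expect to need the most care — is exactly the argument that pushing all weight \emph{onto the leaves and onto the $C$-vertices themselves} is without loss of generality, i.e. that $p$ and $q$ can always be made to have outdegree $0$. This is where one must be careful about the edge $\{p,q\}$: if it exists it must be oriented, so one of $p,q$ necessarily has positive outdegree at least $w(\{p,q\})$, and one must verify this is still fine (it is, provided $w(\{p,q\}) \le r$, which is necessary anyway). I would also need to double-check the direction ``yes-instance $\Rightarrow$ those inequalities hold'': if some leaf edge or some $C$-vertex pair or the $\{p,q\}$ edge has weight exceeding $r$, then the vertex of degree $1$ (resp. the $C$-vertex, resp. one of $p,q$) is forced over budget in every orientation — straightforward but it is the one place the argument could hide a gap. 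Packaging all of this, the algorithm is: compute the cover, classify vertices in $O(n+m)$ time, and test the three families of inequalities; total running time linear in the input size.
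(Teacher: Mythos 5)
There is a genuine gap, and it sits exactly where you suspected: your characterization ``yes-instance iff every leaf edge has weight $\le r$, $\alpha_v+\beta_v\le r$ for every $v\in C$, and $w(\{p,q\})\le r$'' is false in the ``only if'' direction. A vertex $v\in C$ with $\alpha_v+\beta_v>r$ is \emph{not} forced over budget: it only means $v$ cannot send both of its edges, so at least one of $\{v,p\},\{v,q\}$ must be oriented out of the cover vertex toward $v$. Concretely, take the path $p-v-q$ with $w(\{v,p\})=w(\{v,q\})=r$: orienting $p\to v$ and $v\to q$ gives outdegrees $r,r,0$, a valid orientation, while your test rejects it because $\alpha_v+\beta_v=2r>r$. (There is also a small slip earlier: to contribute $0$ to $p$, a pendant edge must be oriented from the leaf \emph{toward} $p$, not ``from $p$ toward the leaf''; your intent is clearly the former, so this is only a wording issue.)

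The case you dismiss is in fact the whole difficulty of the theorem. After the easy reductions (delete degree-$\le 1$ vertices after checking $w(e)\le r$, and delete every $v\in C$ with $\alpha_v+\beta_v\le r$ by orienting both its edges away from it — this part of your argument matches the paper), every remaining vertex $v$ satisfies $\alpha_v+\beta_v>r$ and hence must receive one of its two edges from $p$ or from $q$. Deciding which cover vertex serves which vertex, subject to the budgets $r$ at $p$ and $q$, looks like a partition/knapsack problem with binary weights, so no greedy ``push everything onto the $C$-vertices'' argument can work. The paper's key observation is that $\alpha_v+\beta_v>r$ forces $\max\{\alpha_v,\beta_v\}>r/2$, and any vertex can have at most one outgoing edge of weight more than $r/2$; so one can guess, in $O(n^2)$ ways, which (if any) heavy edge leaves $p$ and which leaves $q$. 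Every other heavy edge must then be oriented into its cover endpoint, which (using $\alpha_v+\beta_v>r$) propagates to determine the orientation of essentially all edges except $\{p,q\}$, whose two orientations are tried explicitly; each guess is then checked against the budgets at $p$ and $q$. Your proposal is missing this guessing/exchange argument entirely, and without it the algorithm returns false negatives.
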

\begin{proof}
Let $G = (V,E)$, $w \colon E \to \mathbb{Z}^{+}$, $r \in \mathbb{Z}^{+}$
be an instance of \textsc{Binary MMOO}.
We assume that $w(e) \le r$ for each $e \in E$ since otherwise the problem is trivial.
If there is a vertex of degree at most 1, we can safely remove it from the graph
since we can always orient the edge incident to the vertex (if exists) from the vertex to the other endpoint.
Hence, we assume that $G$ has minimum degree at least 2.

Let $\{p,q\} \subseteq V$ be a vertex cover of $G$.
By the assumption on the minimum degree, 
every vertex $v \in V \setminus \{p,q\}$ is adjacent to both $p$ and $q$.
If $w(\{v,p\}) + w(\{v,q\}) \le r$, then we can safely orient the edges from $v$ to $p$ and $q$.
Thus we remove such vertices from the graph.
Now it holds that $w(\{v,p\}) + w(\{v,q\}) > r$ for all $v \in V \setminus \{p,q\}$.
In particular, $\max\{w(\{v,p\}), w(\{v,q\})\} > r/2$ for all $v \in V \setminus \{p,q\}$.

Observe that for each vertex, at most one edge of weight more than $r/2$ can be oriented from the vertex to
one of its neighbors.
For $p$ and $q$, we guess such edges.
That is, we guess one edge of weight more than $r/2$ incident to $p$ ($q$, resp.)
and orient it from $p$ ($q$, resp.) to the other endpoint; or guess that there is no such edge.
These guesses determine almost a complete orientation.
For a non-guessed edge $\{v,p\}$ with $w(\{v,p\}) > r/2$, we orient it from $v$ to $p$.
Since $w(\{v,p\}) + w(\{v,q\}) > r$, we then have to orient $\{v,q\}$ from $q$ to $v$.
The other case of $w(\{v,q\}) > r/2$ is symmetric.
Now the only edge with undetermined orientation is $\{p,q\}$ (if they are adjacent).
We just try both directions of $\{p,q\}$ and check if the whole orientation is of maximum outdegree at most $r$.
\end{proof}

\section{Extending algorithms known for $\vc$ parameterizations}
\label{sec:extending-vc}

\subsection{Capacitated problems}
\label{sec:capacitated}

Let $G = (V,E)$ be a graph with a capacity function $c \colon V \to \mathbb{Z}^{+}$
such that $c(v) \le \deg(v)$ for each $v \in V$.
A set $C \subseteq V$ is a \emph{capacitated vertex cover}
if there exists a mapping $f \colon E \to C$ such that $f(e)$ is an endpoint of $e$ for each $e \in E$
and $|\{e \in E \mid f(e) = v\}| \le c(v)$ for each $v \in C$.
A set $D \subseteq V$ is a \emph{capacitated dominating set}
if there exists a mapping  $f \colon V \setminus D \to D$ such that $f(v) \in N(v) \cap D$ for each $v \in V \setminus D$
and $|\{v \in V \setminus D \mid f(v) = u\}| \le c(u)$ for each $u \in D$.
Now the problems studied in this section are defined as follows.
\begin{myproblem}
  \problemtitle{\textsc{Capacitated Vertex Cover}}
  \probleminput{A graph $G$, a capacity function $c \colon V \to \mathbb{Z}^{+}$, a positive integer $k$.}
  \problemquestiontitle{Question}
  \problemquestion{Is there a capacitated vertex cover $X$ of $G$ with $|X| \le k$?}
\end{myproblem}
\begin{myproblem}
  \problemtitle{\textsc{Capacitated Dominating Set}}
  \probleminput{A graph $G$, a capacity function $c \colon V \to \mathbb{Z}^{+}$, a positive integer $k$.}
  \problemquestiontitle{Question}
  \problemquestion{Is there a capacitated dominating set $D$ of $G$ with $|D| \le k$?}
\end{myproblem}
It is known that
\textsc{Capacitated Vertex Cover} is W[1]-hard parameterized by $\td$,
and \textsc{Capacitated Dominating Set} is W[1]-hard parameterized by $\td + k$~\cite{DomLSV08}.\footnote{%
The W[1]-hardness results are stated only for $\tw$ and $\tw + k$ but the proofs actually show them for
$\td$ and $\td+k$, respectively.}

For a vertex set $S$ of $G$,
we say that components $C_{1}$ and $C_{2}$ of $G - S$ have the same \emph{$c$-type}
if $C_{1}$ and $C_{2}$ have the same ($G, S$)-type and furthermore
there is an isomorphism $g$ from $G[S \cup V(C_{1})]$ to $G[S \cup V(C_{2})]$
such that $g|_{S}$ is the identity and $c(v) = c(g(v))$ for each $v \in S \cup V(C_{1})$.
We say that a component $C$ of $G - S$ is of \emph{$c$-type $t$} 
by using a canonical form $t$ of the members of the $c$-type equivalence class of $C$.
If $S$ is a $\vi(k)$-set of $G$, then every vertex in $G-S$ has degree less than $k$ in $G$,
and thus its capacity is also less than $k$.
This implies that the number of different $c$-types depends only on $k$.

\begin{theorem}
\label{thm:cvc}
 \textsc{Capacitated Vertex Cover} is fixed-parameter tractable parameterized by $\vi$.
\end{theorem}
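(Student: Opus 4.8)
The plan is to adapt the ILP strategy of Theorem~\ref{thm:imb} to capacitated covers, replacing $(G,S)$-types by the $c$-types introduced just above. Let $S$ be a $\vi(k)$-set of $G$, so $k=\vi(G)$, and write $\ell$ for the target size bound (the integer denoted $k$ in the problem definition above). Any capacitated vertex cover $X$ with $|X|\le\ell$ splits as $S_X:=X\cap S$ together with the sets $X\cap V(C)$ over the components $C$ of $G-S$; since $|S|\le k$, I first guess $S_X$ from at most $2^{k}$ candidates. Every edge of $G[S]$ must be covered by a vertex of $S_X$ (as $X\cap S=S_X$), so if some edge of $G[S]$ has both ends outside $S_X$ I reject; otherwise I additionally guess, for each of the at most $\binom{k}{2}$ edges of $G[S]$, which of its $S_X$-endpoints covers it (another $2^{O(k^{2})}$ guesses), and for $v\in S_X$ let $d_S(v)$ be the resulting number of $G[S]$-edges charged to $v$.

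Next I describe the possible \emph{local behaviours} inside a component $C$ of $G-S$. A behaviour records: the set $Y=X\cap V(C)$; for each edge with one end in $C$ and one end in $S$, whether it is covered from the $C$-side (its $C$-end then lying in $Y$ and the edge being charged to that vertex) or from the $S$-side (its $S$-end then lying in $S_X$); and how the edges of $G[V(C)]$, together with the $C$--$S$ edges covered from the $C$-side, are charged to vertices of $Y$. Such a behaviour is \emph{admissible} for the current guess $S_X$ if $Y$ covers all edges of $G[V(C)]$ and all edges between $C$ and $S\setminus S_X$, and the total number of edges charged to each $y\in Y$ is at most $c(y)$. Because $|S\cup V(C)|\le k$, the number of behaviours and the admissibility test depend only on $k$, and two components of the same $c$-type admit exactly the same admissible behaviours. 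For each $c$-type $t$ and each behaviour $b$ admissible for type-$t$ components, introduce a nonnegative integer variable $z_{t,b}$ meant to count the type-$t$ components of $G-S$ realising $b$.

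The ILP has the constraints $\sum_{b}z_{t,b}=c_t$ for every $c$-type $t$, where $c_t$ is the number of type-$t$ components of $G-S$; the budget constraint $|S_X|+\sum_{t,b}|Y(b)|\,z_{t,b}\le\ell$, with $|Y(b)|$ the size of the set $Y$ in $b$; and, for every $v\in S_X$, the capacity constraint $d_S(v)+\sum_{t,b}a(v,t,b)\,z_{t,b}\le c(v)$, where $a(v,t,b)$ is the number of edges from $v$ into a type-$t$ component that $b$ covers from the $S$-side (a constant computable from $S$, $t$, $b$). The original instance is a yes-instance if and only if this ILP is feasible for at least one guess. Feasibility is decidable in FPT time by (the feasibility-only version of) the algorithm behind Proposition~\ref{prop:ILP_opt}, since the number of variables, the number of guesses, and all coefficient magnitudes are bounded by a function of $k$ times a polynomial in $n$. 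For correctness, from a feasible solution one selects, for each $t$, exactly $z_{t,b}$ type-$t$ components to realise each $b$; the resulting set $X$ has size $\le\ell$, every edge is covered (edges of $G[S]$ by the first guess, edges within or incident to a component by admissibility of its behaviour), and no capacity is exceeded --- at vertices of $G-S$ by admissibility, at vertices of $S_X$ by the capacity constraints, and at vertices of $S\setminus S_X$ vacuously.

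The main obstacle I anticipate is the bookkeeping for edges incident to $S$: an edge between a component and $S\setminus S_X$ forces its component-endpoint into $Y$, so the admissible behaviours genuinely depend on the guessed $S_X$, and one must ensure that the load on a vertex $y\in Y$ correctly aggregates edges inside the component and edges going to $S$. This is a matter of setting up consistent definitions rather than of any hard mathematics, and the remainder follows the template of Theorem~\ref{thm:imb}.
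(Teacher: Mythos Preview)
Your proposal is correct and follows essentially the same approach as the paper: guess $X\cap S$ and the covering of $E(G[S])$, encode the local choices in each component as a ``feasible pair'' (your ``behaviour'') indexed by $c$-type, and solve an ILP with one variable per (type, behaviour). The only cosmetic difference is that the paper phrases the ILP as a minimisation of $|X|$ while you test feasibility against the budget $\ell$; these are equivalent formulations.
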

\begin{proof}
We are going to find a minimum capacitated vertex cover $X$ of $G$.
Let $S$ be a $\vi(k)$-set of the input graph $G = (V,E)$.
We first guess the subset $X_{S} = X \cap S$
and the partial mapping $f_{S} \colon E(G[S]) \to X_{S}$ with $f_{S}(e) \in e$ for each $e \in E(G[S])$.
The numbers of candidates for $X_{S}$ and $f_{S}$ depend only on $k$.
For each $v \in X_{S}$, we set $c'(v) = c(v) - \left|\{e \in E(G[S]) \mid f_{S}(e) = v\}\right|$.
Each $v \in X_{S}$ can cover $c'(v)$ edges between $S$ and $V-S$.

Let $C$ be a $c$-type $t$ component of $G - S$.
We say that a pair $(W,f)$ of a subset $W \subseteq V(C)$ and a mapping $f \colon E(C) \cup E(V(C), S) \to W \cup X_{S}$
is \emph{feasible} if $|\{e \mid f(e) = v\}| \le c(v)$ for each $v \in W$
and $f(e) \in e$ for each $e \in E(C) \cup E(C, S)$.\footnote{%
For vertex sets $A$ and $B$, $E(A,B)$ denotes the set of edges between $A$ and $B$.}
The number of feasible pairs depends only on $k$.
A feasible pair gives a cover of all edges in $C$ and some edges between $V(C)$ and $S$,
and it asks $X_{S}$ to cover the remaining edges between $V(C)$ and $S$ in a certain way.
Now it suffices to find an assignment of feasible pairs to components of $G-S$
that minimizes the number of vertices used by the feasible pairs
and does not exceed the capacity of any vertex in $X_{S}$.

We represent by a nonnegative variable $x_{t,W,f}$ the number of $c$-type $t$ components $C$ of $G - S$
such that $V(C) \cap X = W$ and $(W,f)$ is a feasible pair.
The number of such variables depends only on $k$.
Let $d_{t}$ be the number of components of $c$-type $t$ in $G - S$.
Since each component of $G-S$ has to be assigned a feasible pair,
we have the following constraints:
\begin{align*}
  \textstyle
  \sum_{W, \; f} x_{t,W,f} = d_{t}
  \quad \text{for each} \ c\text{-type} \ t.
\end{align*}
The capacity constraints for $X_{S}$ can be expressed as follows:
\begin{align*}
  \textstyle
  \sum_{t, \; W, \; f} 
  \#(f,v) \cdot x_{t,W,f}  \le c'(v)
  \quad \text{for each} \ v \in X_{S},
\end{align*}
where $\#(f,v)$ is the number of edges that $f$ maps to $v$.
Finally, our objective function to minimize is $|X_{S}| + \sum_{t, \; W, \; f} |W| \cdot x_{t,W,f}$.

By finding an optimal solution to the ILP above for each guess of $X_{S}$ and $f_{S}$,
we can find the minimum capacitated vertex cover of $G$.
Since the number of guesses and the number of variables depend only on $k$,
the theorem follows by Proposition~\ref{prop:ILP_opt}.
\end{proof}

\begin{theorem}
\label{thm:cds}
\textsc{Capacitated Dominating Set} is fixed-parameter tractable parameterized by $\vi$
\end{theorem}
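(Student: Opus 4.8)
The plan is to follow the strategy of Theorem~\ref{thm:cvc} closely, but the domination structure forces one extra layer of guessing. Let $S$ be a $\vi(k)$-set of $G=(V,E)$, which can be found in $O(k^{k+1}n)$ time, and suppose $D$ is a capacitated dominating set of size at most $k$ with witness $f\colon V\setminus D\to D$. First I would guess $D_{S}=D\cap S$; there are at most $2^{k}$ candidates. Every vertex of $S\setminus D_{S}$ must be dominated, and---unlike the cross edges in the capacitated vertex cover argument, each of which lies in a single component---such a vertex may be dominated either by a vertex of $D_{S}$ or by a vertex inside some component of $G-S$. So I would additionally guess a set $S_{\mathrm{in}}\subseteq S\setminus D_{S}$ together with a map $g\colon S_{\mathrm{in}}\to D_{S}$ with $g(s)\in N(s)$ for all $s$ (the vertices of $S$ dominated from within $S$), and put $S_{\mathrm{out}}=(S\setminus D_{S})\setminus S_{\mathrm{in}}$ (the vertices of $S$ to be dominated from outside). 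Since $|S|\le k$, the number of such guesses is bounded by a computable function of $k$. For each $u\in D_{S}$ set the residual capacity $c'(u)=c(u)-|g^{-1}(u)|$, rejecting the guess if it becomes negative; then $u$ may still dominate $c'(u)$ vertices outside $S$.

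Next I would encode the remaining choices as an ILP with a bounded number of variables, using the $c$-types introduced before Theorem~\ref{thm:cvc}. For a component $C$ of $G-S$, call a tuple $(W,f_{C},h_{C})$ a \emph{local configuration} if $W\subseteq V(C)$ (the part of $C$ placed in $D$), $f_{C}\colon V(C)\setminus W\to W\cup D_{S}$ satisfies $f_{C}(v)\in N(v)$ for every $v$ (how the non-$D$ vertices of $C$ are dominated; note $N(v)\subseteq V(C)\cup S$, so these are the only options), $h_{C}$ is a partial map from $S_{\mathrm{out}}$ to $W$ with $h_{C}(s)\in N(s)$ whenever defined (which elements of $S_{\mathrm{out}}$ are dominated by $C$, and by which of its vertices), and every $w\in W$ satisfies $|f_{C}^{-1}(w)|+|h_{C}^{-1}(w)|\le c(w)$. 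Since $|S|,|V(C)|\le k$, the set of local configurations of a canonical $c$-type-$t$ component depends only on $k$. For each $c$-type $t$ and each local configuration $\alpha$ I introduce a nonnegative integer variable $x_{t,\alpha}$, the number of type-$t$ components of $G-S$ carrying configuration $\alpha$.

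The constraints are then as follows. For each $c$-type $t$, $\sum_{\alpha}x_{t,\alpha}=d_{t}$, where $d_{t}$ is the number of type-$t$ components of $G-S$. For each $u\in D_{S}$, $\sum_{t,\alpha}\delta_{\alpha}(u)\cdot x_{t,\alpha}\le c'(u)$, where $\delta_{\alpha}(u)$ is the number of vertices of a type-$t$ component with configuration $\alpha$ that $f_{C}$ sends to $u$. For each $s\in S_{\mathrm{out}}$, $\sum_{t,\alpha}\sigma_{\alpha}(s)\cdot x_{t,\alpha}=1$, where $\sigma_{\alpha}(s)\in\{0,1\}$ records whether $h_{C}$ is defined at $s$ in configuration $\alpha$; this forces exactly one component to take responsibility for each externally dominated vertex of $S$. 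The objective to minimise is $|D_{S}|+\sum_{t,\alpha}|W_{\alpha}|\cdot x_{t,\alpha}$, where $W_{\alpha}$ is the $W$-part of $\alpha$, and the original instance is a yes-instance if and only if some guess yields an optimum of at most $k$. Since the number of guesses, the number of $c$-types, and the number of local configurations are all bounded by a computable function of $k$, this amounts to a bounded number of calls to an ILP with a bounded number of variables and integer data of polynomial bit-length, so Proposition~\ref{prop:ILP_opt} finishes the argument.

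I expect the main obstacle to be the bookkeeping around $S_{\mathrm{out}}$: the fact that a vertex of $S\setminus D_{S}$ can be dominated from any of several components creates a global ``exactly one'' constraint with no analogue in the capacitated vertex cover proof. The delicate point is to let the local configurations carry precisely enough information---a partial assignment $h_{C}$ of $S_{\mathrm{out}}$-vertices to the vertices of $C$---so that these constraints are expressible in the ILP while the number of configurations stays bounded by a function of $k$; everything else is a routine adaptation of Theorem~\ref{thm:cvc}.
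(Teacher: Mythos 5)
Your proposal is correct and follows essentially the same route as the paper's proof: guess $D\cap S$ and how the remaining $S$-vertices are dominated (from $D_S$ via an explicit map, or from outside), then set up an ILP over $c$-types whose per-component configurations record the component's part of $D$, the domination of its own vertices by that part or by $D_S$, and which $S$-vertices it dominates externally. Your $(W,f_C,h_C)$ configurations and the ``exactly one'' constraint for $S_{\mathrm{out}}$ are just a repackaging of the paper's tuples $(D_C,A_C,B_C,B_S',f,g)$ and its ``at least one'' constraint for $B_S$, and both variants are sound.
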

\begin{proof}
We are going to find a minimum capacitated dominating set $D$ of $G$.
Let $S$ be a $\vi(k)$-set of the input graph $G = (V,E)$.

We first guess the partition $(D_{S}, A_{S}, B_{S})$ of $S$
such that $D_{S} = D \cap S$, 
$A_{S}$ is the set of vertices dominated by $D_{S}$, and 
$B_{S}$ is the set of vertices dominated by $D \setminus S$.
Next we guess 
the partial mapping $f_{S} \colon A_{S} \to D_{S}$ with $f_{S}(v) \in N(v) \cap D_{S}$ for each $v \in A_{S}$.
The numbers of candidates for $(D_{S}, A_{S}, B_{S})$ and $f_{S}$ depend only on $k$.
For each $v \in D_{S}$, we set $c'(v) = c(v) - \left|\{u \in A_{S} \mid f_{S}(u) = v\}\right|$.
Each $v \in D_{S}$ can dominate $c'(v)$ vertices in $V-S$.

Let $C$ be a $c$-type $t$ component of $G - S$.
Let $(D_{C}, A_{C}, B_{C})$ be a partition of $V(C)$,
$B_{S}' \subseteq B_{S}$,
$f \colon A_{C} \cup B_{S}' \to D_{C}$,
and $g \colon B_{C} \to D_{S}$.
We say that $(D_{C}, A_{C}, B_{C}, B_{S}', f, g)$ is \emph{feasible} if 
$f(v) \in N(v) \cap D_{C}$ for each $v \in A_{C} \cup B_{S}'$,
$g(v) \in N(v) \cap D_{S}$ for each $v \in B_{C}$,
and $|\{u \in A_{C} \cup B_{S}' \mid f(u) = v\}| \le c(v)$ for each $v \in D_{C}$.
The number of feasible tuples depends only on $k$.
A feasible tuple gives a domination of all vertices in $V(C) \setminus B_{C}$ and $B_{S}'$,
and it asks $D_{S}$ to dominate $B_{C}$ in a certain way.

As before, it suffices to find an assignment of feasible tuples to components of $G-S$
that minimizes the number of vertices used by the feasible tuples
and does not exceed the capacity of any vertex in $D_{S}$.

We represent by a nonnegative variable $x_{t,D_{C},A_{C},B_{C},B_{S}',f,g}$
 the number of $c$-type $t$ components $C$ of $G - S$
such that $V(C) \cap D = D_{C}$ and $(D_{C},A_{C},B_{C},B_{S}',f,g)$ is a feasible tuple.
The number of such variables depends only on $k$.
Let $d_{t}$ be the number of components of $c$-type $t$ in $G - S$.
Since each component of $G-S$ has to be assigned a feasible tuple,
we have the following constraints:
\begin{align*}
  \sum_{D_{C}, \; A_{C}, \; B_{C}, \; B_{S}', \; f, \; g}
  x_{t,D_{C},A_{C},B_{C},B_{S}',f,g} = d_{t}
  \quad \text{for each} \ c\text{-type} \ t.
\end{align*}
The capacity constraints for $D_{S}$ can be expressed as follows:
\begin{align*}
  \sum_{D_{C}, \; A_{C}, \; B_{C}, \; B_{S}', \; f, \; g}
  \#(g,v) \cdot x_{t,D_{C},A_{C},B_{C},B_{S}',f,g} \le c'(v)
  \quad \text{for each} \ v \in D_{S},
\end{align*}
where $\#(g,v)$ is the number of vertices that $g$ maps to $v$.
We also have to guarantee that each vertex in $B_{S}$ is dominated by a vertex in $V - S$.
This can be done by the following constraints:
\begin{align*}
  \sum_{D_{C}, \; A_{C}, \; B_{C}, \; B_{S}' \ni v, \; f, \; g}
  x_{t,D_{C},A_{C},B_{C},B_{S}',f,g} \ge 1
  \quad \text{for each} \ v \in B_{S}.
\end{align*}
Finally, our objective function to minimize is
\[
  |D_{S}| + \sum_{D_{C}, \; A_{C}, \; B_{C}, \; B_{S}', \; f, \; g} |D_{C}| \cdot x_{t,D_{C},A_{C},B_{C},B_{S}',f,g}.
\]
As before the discussion so far implies the theorem.
\end{proof}


\subsection{Coloring and partitioning problems}
\label{sec:coloring}

\textsc{Precoloring Extension}, \textsc{Equitable Coloring}, and \textsc{Equitable Connected Partition}
form a first set of problems studied under the ``treewidth versus vertex cover'' 
perspective~\cite{EncisoFGKRS09,FellowsFLRSST11,FialaGK11}.
\textsc{Equitable Coloring} and \textsc{Precoloring Extension}
are fixed-parameter tractable parameterized by $\vc$~\cite{FialaGK11}
and W[1]-hard parameterized by $\td$~\cite{FellowsFLRSST11}.\footnote{%
The W[1]-hardness results are stated only for $\tw$ but the proofs actually show them for $\td$.}
\textsc{Equitable Connected Partition}
is fixed-parameter tractable parameterized by $\vc$
and W[1]-hard parameterized by $\pw$~\cite{EncisoFGKRS09}.

In this section, we show that all the three problems are fixed-parameter tractable parameterized by $\vi$.

\subsubsection{\textsc{Precoloring Extension}}
Given a graph $G = (V,E)$, a precoloring $c_{U}\colon U \to \{1,\dots,r\}$ for some $U \subseteq V$, and a positive integer $r$,
\textsc{Precoloring Extension} asks whether $G$ admits a proper $r$-coloring $c$ such that $c(v) = c_{U}(v)$ for every $v \in U$.

\begin{theorem}
\label{thm:prece}
\textsc{Precoloring Extension} is fixed-parameter tractable parameterized by $\vi$.
\end{theorem}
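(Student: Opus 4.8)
The plan is to reuse the template of the proofs of Theorem~\ref{thm:imb} and Theorems~\ref{thm:cvc}--\ref{thm:cds}: fix a $\vi(k)$-set $S$ of $G$ (computable in FPT time), guess how an optimal colouring behaves on $S$, and exploit that the problem then \emph{decomposes over the components of $G-S$}. The feature that makes $\vi$ work here, and that genuinely fails for $\td$, is that every component $C$ of $G-S$ has at most $k-|S|\le k$ vertices, so $|S\cup V(C)|\le k$ and every vertex of $G-S$ has at most $k-1$ neighbours in $G$.

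If $r\le k$ I would simply brute-force: guess a proper colouring of $S$ consistent with $c_{U}$ on $S\cap U$ (at most $r^{|S|}\le k^{k}$ candidates), and for each component $C$ check by exhaustive search over its at most $r^{|V(C)|}\le k^{k}$ colourings whether the precolouring of $C$ extends to a proper colouring of $G[S\cup V(C)]$ agreeing with the guessed colouring of $S$. Since there are no edges between distinct components of $G-S$, the instance is a yes-instance iff some guessed colouring of $S$ can be completed on every component; this runs in FPT time.

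If $r\ge k$ I would instead guess only the partition $\mathcal P=\{P_{1},\dots,P_{m}\}$ of $S$ into colour classes: each $P_{j}$ independent in $G[S]$, $m\le|S|\le k$, and $\mathcal P$ consistent with $c_{U}$ on $S\cap U$; there are at most $k^{k}$ guesses. For each class $P_{j}$, let $F_{j}$ be the (polynomial-time computable) set of precolouring colours occurring on vertices of $G-S$ adjacent to $P_{j}\cap S$. I claim that a proper $r$-colouring extending $c_{U}$ whose colour classes on $S$ are exactly $\mathcal P$ exists if and only if: (i)~for every component $C$ of $G-S$ the precolouring of $G[V(C)]$ is itself proper; and (ii)~one can pick pairwise distinct colours $\gamma_{1},\dots,\gamma_{m}\in\{1,\dots,r\}$ with $\gamma_{j}\notin F_{j}$ for all $j$ and $\gamma_{j}=c_{U}(u)$ whenever $u\in P_{j}\cap U$. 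For necessity take $\gamma_{j}=c(P_{j})$. For sufficiency, colour $S$ by the $\gamma_{j}$'s and then, in each component $C$ separately, keep the precoloured vertices and colour the remaining ones greedily: each has at most $k-1<r$ already-coloured neighbours, so a free colour always exists, and (i) together with $\gamma_{j}\notin F_{j}$ prevents any monochromatic edge. Condition (ii) is a \emph{system of distinct representatives} question on the classes of $\mathcal P$ (a class meeting $U$ admits a single colour, every other class $P_{j}$ admits the colours in $\{1,\dots,r\}\setminus F_{j}$), decidable in polynomial time by bipartite matching. The algorithm answers ``yes'' iff (i) holds and some guessed $\mathcal P$ satisfies (ii).

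The step I expect to be the main obstacle is precisely the analysis for $r\ge k$: the palette $\{1,\dots,r\}$ is \emph{not} bounded by a function of $k$, and in particular the colours appearing in $c_{U}$ may be numerous, so one can neither enumerate colourings of $S$ nor fold the colours into the ``type'' bookkeeping used in the other proofs. The way out is to notice that, once the colour classes of $S$ are fixed, the only constraints the actual colours on $S$ must satisfy are the avoidance sets $F_{j}$ and pairwise distinctness --- because $r\ge k$ leaves every bounded-size component enough slack to be completed no matter which colours land on $S$ --- which collapses the global question to the finite combinatorial conditions (i) and (ii) above. One can also phrase the whole argument in the ILP language of the rest of the paper, but the direct version is already self-contained.
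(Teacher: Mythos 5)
Your proof is correct, but it takes a route that differs in its mechanics from the paper's. The paper also starts from a $\vi(k)$-set $S$ and the observation that every vertex outside $S$ has degree less than $k$, but it then reformulates the instance as list coloring: precolored vertices get singleton lists, vertices of $S$ get $\{1,\dots,r\}$ minus the precolors of their neighbors, and component vertices get $\{1,\dots,\min\{r,k\}\}$ minus the precolors of their neighbors (justified exactly by your degree argument); after deleting $U$, any vertex of $S$ whose list has size at least $2k$ is deleted as well, since the components use only colors in $\{1,\dots,k\}$ and the remaining at most $k-1$ vertices of $S$ block at most $k-1$ further colors. With all lists now of size below $2k$, the paper enumerates the at most $(2k)^{k}$ colorings of $S$ and checks each component of $G-S$ separately in time depending only on $k$. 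Your argument replaces the list-coloring reduction and the two deletion rules by a case distinction on $r$ versus $k$: brute force when $r\le k$, and for $r\ge k$ a guess of only the partition of $S$ into color classes, with the actual colors recovered by a system-of-distinct-representatives/bipartite-matching step against the forbidden sets $F_{j}$ and the components completed greedily. The underlying insights are the same (components are always completable because their vertices see fewer than $k\le r$ colors, and only bounded information about the coloring of $S$ matters), but your version makes the second insight explicit as the characterization (i)+(ii) and trades the enumeration of $S$-colorings from bounded lists for a polynomial matching computation, while the paper's version is uniform in $r$ and stays closer to the guess-and-check-components template used for its other $\vi$ results; the only point worth spelling out in your write-up is that the matching in (ii) should be run over a polynomial-size color universe (the colors occurring in $c_{U}$ or in some $F_{j}$, plus up to $m$ surrogate ``fresh'' colors from $\{1,\dots,r\}$), which is routine.
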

\begin{proof}
Let $(G=(V,E), c_{U}, r)$ be an instance of \textsc{Precoloring Extension}.
Let $S$ be a $\vi(k)$-set of $G$.
For each $v \in V$, let $L(v)$ be the following set (the list of allowed colors):
\[
  L(v) = 
  \begin{cases}
    \{c_{U}(v)\} & v \in U, \\
    \{1,\dots,r\} \setminus \{c_{U}(u) \mid u \in N(v) \cap U\} & v \in S \setminus U, \\
    \{1,\dots,\min\{r,k\}\} \setminus \{c_{U}(u) \mid u \in N(v) \cap U\} & v \in V \setminus (S \cup U).
  \end{cases}
\]
Observe that there exists a proper $r$-coloring $c$ of $G$ with $c(v) = c_{U}(v)$ for all $v \in U$
if and only if there is a proper coloring $c'$ of $G$ with $c'(v) \in L(v)$.
This is almost trivial except for the case of $v \in V \setminus (S \cup U)$,
where we restrict the domain to $\{1,\dots,k\}$ when $k < r$.
This can be justified by considering the degree of $v$.
Since $S$ is a $\vi(k)$-set and $v \notin S$, we have $\deg(v) < k$.
Thus, after coloring $G-v$, $v$ can always be colored with a color not used in its neighborhood.

Now in the list coloring setting, we can remove the vertices in $U$ unless 
the instance is a trivial no instance with $\{u,v\} \in E$ such that $c_{U}(u) = c_{U}(v)$.
In the following, we consider the graph where $U$ is removed
and still use the same symbols $G$ and $S$.

Let $v$ be a vertex with $|L(v)| \ge 2k$. By the definition of $L$, $v \in S$.
Such a vertex can be safely removed:
the vertices in $V \setminus S$ use colors only in $\{1,\dots,k\}$;
and the vertices in $S - v$ use at most $k-1$ colors in $L(v)$.
We now assume that $L(u) < 2k$ for all vertices in the graph.

Now we guess the coloring of $S$
and then check independently for each component $C$ of $G-S$ whether $G[S \cup V(C)]$ 
has a coloring consistent with $L$ and the guessed coloring of $S$.
The number of possible colorings of $S$ is at most $(2k)^{k}$.
Since $|S \cup V(C)| \le k$, checking the existence of a consistent coloring
can be done in time depending only on $k$.
\end{proof}

\subsubsection{\textsc{Equitable Coloring}}

Given an $n$-vertex graph $G = (V,E)$ and a positive integer $r$,
\textsc{Equitable Coloring} asks whether $G$ admits a proper $r$-coloring $c$ 
such that $|\{v \in V \mid c(v) = i\}| \in \{\lfloor n/r \rfloor, \lceil n/r \rceil\}$ 
for each $i \in \{1,\dots,r\}$.
We call such a coloring an \emph{equitable $r$-coloring}.

\begin{theorem}
\label{thm:eqcolor}
\textsc{Equitable Coloring} is fixed-parameter tractable parameterized by $\vi$.
\end{theorem}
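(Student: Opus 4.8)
The plan is to imitate the scheme used above for \textsc{Imbalance} and \textsc{Precoloring Extension}: fix a $\vi(k)$-set $S$ (so $|S|\le k$ and every component of $G-S$ has at most $k-|S|$ vertices), reduce to a bounded number of ILP instances whose number of variables depends only on $k=\vi(G)$, and apply Proposition~\ref{prop:ILP_opt} (with a constant objective, i.e.\ as a feasibility test). Write $q=\lfloor n/r\rfloor$, so every color class of an equitable $r$-coloring has size $q$ or $q+1$. I would first dispose of the case $r\le 2k$: here we can afford to guess the actual coloring $c_S$ of $S$ (at most $r^{|S|}\le(2k)^k$ candidates), introduce for each $(G,S)$-type $t$ and each proper coloring $\phi$ of a type-$t$ component that is consistent with $c_S$ a variable $x_{t,\phi}$ counting the type-$t$ components colored by $\phi$, impose $\sum_\phi x_{t,\phi}=c_t$ for every type $t$, and add for each color $i\in\{1,\dots,r\}$ the two inequalities forcing $|c_S^{-1}(i)|+\sum_{t,\phi}|\phi^{-1}(i)|\cdot x_{t,\phi}\in\{q,q+1\}$. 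All counts here are bounded by a function of $k$.

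The interesting case is $r>2k$. Instead of the concrete colors on $S$, I would guess only the partition $\mathcal P=\{P_1,\dots,P_p\}$ of $S$ into color classes; this is a partition of $S$ into independent sets of $G[S]$, so there are at most $k^k$ choices, and $p\le|S|\le k$. Call the colors realizing $P_1,\dots,P_p$ the \emph{special} colors and the remaining $g:=r-p\ge r-k>k$ colors \emph{generic} (they are mutually interchangeable). For a component $C$ of type $t$, a \emph{scheme} records for each vertex of $C$ either a special color (such that, together with $\mathcal P$, the partial coloring of $G[S\cup V(C)]$ stays proper) or the label ``generic''; there are at most $(p+1)^k$ schemes per type. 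Introduce variables $x_{t,\mathrm{scheme}}$ counting type-$t$ components using a given scheme, with $\sum_{\mathrm{scheme}}x_{t,\mathrm{scheme}}=c_t$ for each type $t$, and for each special color $i$ the two inequalities $|P_i|+\sum_{t,\mathrm{scheme}}a_{i,t,\mathrm{scheme}}\cdot x_{t,\mathrm{scheme}}\in\{q,q+1\}$, where $a_{i,t,\mathrm{scheme}}$ is the number of vertices of the component colored $i$ in the scheme.

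What remains — and this is the step I expect to be the real obstacle — is to express by a single linear constraint that the ``generic parts'' of the components can be completed to an equitable assignment of the $g$ generic colors. Let $V_g=\sum_{t,\mathrm{scheme}}d_{t,\mathrm{scheme}}\cdot x_{t,\mathrm{scheme}}$, where $d_{t,\mathrm{scheme}}$ is the number of generic-labelled vertices in the scheme. I claim that, because $g>k$, such a completion exists iff $gq\le V_g\le g(q+1)$. Necessity is clear (the $g$ generic classes each have size in $\{q,q+1\}$ and together contain $V_g$ vertices). For sufficiency, observe that we may color the generic-labelled vertices of each component with pairwise distinct generic colors (a trivially proper coloring of that component's generic part), so it suffices to distribute, for each component $C$, its $d_C\le k$ ``tokens'' among the $g$ generic colors, at most one per color, so that color $j$ gets exactly $\ell_j$ tokens for some $\ell_j\in\{q,q+1\}$ with $\sum_j\ell_j=V_g$ (such $\ell_j$ exist by the assumed inequality). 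This is exactly asking for a simple bipartite graph with left-degree sequence $(d_C)_C$ and right-degree sequence $(\ell_j)_j$; since $d_C\le k<g$ and, by a short computation using $V_g\le k\cdot(\#\text{components})$ together with $g>k$, each $\ell_j\le q+1\le\#\text{components}$, the Gale--Ryser conditions $\sum_{i\le t}d_i^{\downarrow}\le\sum_j\min(\ell_j,t)$ hold for every $t$, so the required assignment (hence an equitable $r$-coloring) exists. Adding the constraint $gq\le V_g\le g(q+1)$ and invoking Proposition~\ref{prop:ILP_opt} for each guess of $\mathcal P$ settles this case; since the number of guesses and of variables depends only on $k$, the theorem follows. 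The delicate points to get right in the full write-up are the Gale--Ryser verification (in particular the corner cases $q=0$ and few components) and checking that the schemes together with the guessed $\mathcal P$ capture exactly the proper colorings of $G$.
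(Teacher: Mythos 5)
Your proposal is correct and follows the same overall architecture as the paper's proof: find a $\vi(k)$-set $S$, split into the cases $r\le 2k$ and $r>2k$, guess the coloring/partition of $S$, and set up an ILP over $(G,S)$-types with a bounded number of variables (your Case $r\le2k$ is essentially identical to the paper's, with the minor cosmetic difference that you enforce class sizes via inequalities in $\{q,q+1\}$ instead of guessing which classes are large). Where you genuinely diverge is the completion argument for $r>2k$. The paper guesses the partition of $S$ into at most $k$ classes \emph{and} an integer $a$, asks the ILP for disjoint independent supersets $W_1,\dots,W_k$ of exact sizes $\lceil n/r\rceil$ or $\lfloor n/r\rfloor$, and then argues that the leftover graph $G-\bigcup_i W_i$ can always be equitably colored with the remaining $r-k$ colors by a round-robin greedy coloring along an ordering in which each component is consecutive (using that components have order at most $k<r-k$). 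You instead keep a ``generic'' label inside the component schemes, add the single aggregate constraint $gq\le V_g\le g(q+1)$, and prove completability by realizing a bipartite degree sequence via Gale--Ryser, giving each component's generic vertices pairwise distinct generic colors. Both arguments hinge on the same fact ($r$ minus the number of classes meeting $S$ exceeds $k\ge$ component size), but the paper's greedy is more elementary, while your formulation makes the bookkeeping of how many size-$(q+1)$ classes fall among the colors not meeting $S$ explicit in the ILP rather than delegating it to the guessed $a$ — arguably a cleaner soundness argument. The corner cases you flag do check out: with $r>2k$ one has $V_g\le k\cdot(\#\text{components})$ and $g>k$, so $q<\#\text{components}$ whenever $gq\le V_g$ and $\#\text{components}\ge1$, and for $q=0$ the Gale--Ryser inequalities hold trivially since all right degrees are at most $1$; so there is no gap, just details to write out.
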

\begin{proof}
Let $(G = (V,E), r)$ be an instance of \textsc{Equitable Coloring},
and $S$ be a $\vi(k)$-set of $G$.
We split the proof into two cases: $r \le 2k$ and $r > 2k$.
We reduce both cases to the feasibility test of the ILP defined as follows.
By Proposition~\ref{prop:ILP_opt}, the theorem will follow.

\paragraph*{Case 1: $r \le 2k$.}
We guess a partition $S_{1},\dots,S_{r}$ of $S$
such that each of them is an independent set and some of them may be empty.
Since $|S| \le k$ and $r \le 2k$, the number of such partitions depends only on $k$.
We interpret this partition as a coloring of $G[S]$ and try to extend this to the whole graph.

For a $(G,S)$-type $t$,
a coloring $\mu \colon V(C) \to \{1,\dots,r\}$ of a type-$t$ component $C$ of $G-S$ is \emph{feasible}
if $S_{i} \cup \{v \in V(C) \mid \mu(v) = i\}$ is an independent set for each $i$.
We set $\mu_{i} = |\{v \in V(C) \mid \mu(v) = i\}|$.

We represent by a nonnegative variable $x_{t,\mu}$ 
the number of type-$t$ components colored with a feasible $\mu$.
Since each component of $G-S$ has to be colored,
we have the following constraints:
\begin{align*}
  \textstyle
  \sum_{\mu}
  x_{t,\mu} = d_{t}
  \quad \text{for each} \ (G,S)\text{-type} \ t,
\end{align*}
where $d_{t}$ is the number of type-$t$ components in $G-S$.
The equitable constraints can be expressed as follows:
\begin{align*}
  \textstyle
  \sum_{t,\mu}
  \mu_{i} \cdot x_{t,\mu} &= \lceil n/r \rceil - |S_{i}|
  \quad \text{for each} \ i \in \{1,\dots,b\},
  \\
  \textstyle
  \sum_{t,\mu}
  \mu_{i} \cdot x_{t,\mu} &= \lfloor n/r \rfloor - |S_{i}|
  \quad \text{for each} \ i \in \{b+1,\dots,r\},
\end{align*}
where $b$ is the remainder of $n/r$.

\paragraph*{Case 2: $r > 2k$.}
In this case, we do not have an upper bound of $r$.
The first trick is that we can still guess the coloring of $S$
since we use at most $k$ colors there.
The second trick is that after checking the extendability of the $k$ colors,
the rest of the problem becomes trivial.

We guess a partition $S_{1},\dots,S_{k}$ of $S$ and an integer $a$ such that:
each $S_{i}$ is a possibly-empty independent set; and
there are disjoint independent sets $W_{1}, \dots, W_{k}$
such that $S_{i} \subseteq W_{i}$ for all $i$,
$|W_{i}| = \lceil n/r \rceil$ for $1 \le i \le a$,
and $|W_{i}| = \lfloor n/r \rfloor$ for $a+1 \le i \le k$.
Then, $G' \coloneqq G - \bigcup_{1 \le i \le k}W_{i}$ has an equitable $r-k$ coloring 
if and only if $G$ has an equitable $r$ coloring having $W_{1}, \dots, W_{k}$ as color classes.

We can show that actually $G'$ always has an equitable $r-k$ coloring.
Observe that each component in $G'$ has order at most $k < r-k$ as $G'$ is a subgraph of $G - S$.
We now linearly order the vertices of $G'$ in such a way that the vertices of a component appear consecutively.
Then we color the first vertex in this ordering with color $1$,
the second one with color $2$, and so on.
Formally, we color the $i$th vertex in this ordering with color $(i \bmod (r-k))+1$.
Since each component has order less than $r-k$, we never repeat a color in a component.
Thus, this is an equitable $r-k$ coloring of $G'$.
Therefore, it suffices to decide whether there exists the super sets $W_{1}, \dots, W_{k}$.

Since we are searching for a partial coloring, we use a special character $\ast$ to indicate ``not colored.''
We need to change the definition of feasibility.
For a $(G,S)$-type $t$,
a coloring $\mu \colon V(C) \to \{\ast\} \cup \{1,\dots,k\}$ of a type-$t$ component $C$ of $G-S$ is \emph{feasible},
if $S_{i} \cup \{v \in V(C) \mid \mu(v) = i\}$ is an independent set for each $i \ne \ast$.
We set $\mu_{i} = |\{v \in V(C) \mid \mu(v) = i\}|$. Now the rest of the proof is exactly the same as before.

We represent by a nonnegative variable $x_{t,\mu}$ 
the number of type-$t$ components colored with a feasible $\mu$.
Since each component of $G-S$ has to be colored,
we have the following constraints:
$\sum_{\mu} x_{t,\mu} = d_{t}$ for each $(G,S)$-type $t$,
where $d_{t}$ is the number of type-$t$ components in $G-S$.
The equitable constraints can be expressed as follows:
$\sum_{t,\mu} \mu_{i} \cdot x_{t,\mu} = \lceil n/r \rceil - |S_{i}|$ for $1 \le i \le a$, and
$\sum_{t,\mu} \mu_{i} \cdot x_{t,\mu} = \lfloor n/r \rfloor - |S_{i}|$ for $a+1 \le i \le k$.
\end{proof}

\subsubsection{\textsc{Equitable Connected Partition}}

Given an $n$-vertex graph $G = (V,E)$ and a positive integer $r$,
\textsc{Equitable Connected Partition} asks whether 
there is a partition of $V_{1},\dots,V_{r}$ of $V$
such that $G[V_{i}]$ is connected 
and $|V_{i}| \in \{\lfloor n/r \rfloor, \lceil n/r \rceil\}$ for all $i$.
We call such a partition an \emph{equitable connected $r$-partition}.

\begin{theorem}
\label{thm:eqconpar}
\textsc{Equitable Connected Partition} is fixed-parameter tractable parameterized by $\vi$.
\end{theorem}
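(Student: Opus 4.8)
The plan is to extend the component-type and ILP technique used for \textsc{Equitable Coloring} (Theorem~\ref{thm:eqcolor}) with an extra layer of guessing that certifies connectivity of the parts meeting the separator. Fix a $\vi(k)$-set $S$, so that $|S|\le k$ and every component of $G-S$ has order at most $k$; write $n=|V|$ and assume $r\le n$ (otherwise reject). In a hypothetical equitable connected $r$-partition, call a part a \emph{boundary part} if it meets $S$ and an \emph{interior part} otherwise. Since the components of $G-S$ are exactly the connected subsets of $G-S$, every interior part lies inside a single component and hence has order at most $k$; in particular, if $\lceil n/r\rceil>2k$ then $\lfloor n/r\rfloor>k$, so no interior part can exist, every part meets $S$, and $r\le|S|\le k$, whereas if $\lceil n/r\rceil\le 2k$ every part has order at most $2k$. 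In both cases there are at most $|S|\le k$ boundary parts.

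Next I would guess, in time depending only on $k$: (i) the ordered partition $S=S_1\cup\dots\cup S_\ell$ induced on $S$ by the boundary parts $V_1,\dots,V_\ell$ (at most $k^{k}$ choices); (ii) the subset of $\{1,\dots,\ell\}$ of boundary parts that receive the larger size $\lceil n/r\rceil$ (at most $2^{k}$ choices, and we discard guesses inconsistent with $n\bmod r$); and (iii) for each boundary part $V_j$, the set $\Sigma_j\subseteq 2^{S_j}$ of \emph{attachment signatures} that occur, where the signature of a maximal connected piece $P$ of $V_j$ lying in a component of $G-S$ is $N_G(P)\cap S_j$ (at most $2^{2^{k}}$ choices per part). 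For a surviving guess, form the graph on vertex set $S_j$ whose edges are those of $G[S_j]$ together with a clique on each $\sigma\in\Sigma_j$, and discard the guess unless every $\sigma\in\Sigma_j$ is nonempty and this graph is connected. The point is that once every piece is attached to $S_j$, distinct pieces of $V_j$ are pairwise non-adjacent in $G$ (pieces in different components of $G-S$ trivially, pieces in the same component because they are distinct connected components of $G[V_j\cap V(C)]$), so contracting each piece to a vertex shows $V_j$ is connected \emph{iff} the pieces' signatures merge $G[S_j]$ into one component --- which depends only on $E(G[S_j])$ and the set $\Sigma_j$.

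For each surviving guess I would then write down an ILP exactly in the style of Theorem~\ref{thm:eqcolor}. For each $(G,S)$-type $t$ and each \emph{pattern} --- an assignment of the vertices of a type-$t$ component to the boundary parts $V_1,\dots,V_\ell$ and a partition of the remaining vertices into connected interior pieces, each of a size in $\{\lfloor n/r\rfloor,\lceil n/r\rceil\}$ --- introduce a nonnegative variable $x_{t,\mathrm{pattern}}$ counting the type-$t$ components realising that pattern; since each component has order at most $k$, the number of patterns per type, hence the total number of variables, depends only on $k$. Discard any pattern in which some piece sent to $V_j$ is unattached or has a signature outside $\Sigma_j$. The constraints are: for each type $t$, the pattern-variables sum to the number of type-$t$ components; for each boundary part $V_j$, $|S_j|$ plus the total mass sent to $V_j$ equals its guessed size; for each $\sigma\in\Sigma_j$, at least one selected piece has signature $\sigma$; and the number of interior pieces produced of each of the two sizes equals the number of interior parts of that size dictated by $r-\ell$ and the guess in~(ii). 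All coefficients are bounded by $n$ and the number of variables and constraints depends only on $k$, so feasibility is decidable in FPT time by Proposition~\ref{prop:ILP_opt} (used with a trivial objective); iterating over the bounded number of guesses yields the algorithm.

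The main obstacle is precisely the global connectivity of the boundary parts: the vertices of such a part inside $S$ need not induce a connected subgraph, and the part may be glued together only by tiny pieces scattered over many different components of $G-S$, so connectivity cannot be verified component by component and does not fit into the purely "local" ILP bookkeeping. The observation that resolves it is that a piece affects the connectivity of $V_j$ only through which subset of $S_j$ it neighbours, and there are at most $2^{k}$ such subsets; hence the genuinely non-local data --- the set $\Sigma_j$ of signatures that occur --- can be guessed, after which enforcing that exactly those signatures occur, and handling the equitability of the boundary and interior parts, is routine inside the ILP.
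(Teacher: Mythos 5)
Your proof is correct, and it takes a noticeably different route from the paper's, although both live in the same ``guess the structure on the $\vi(k)$-set $S$, then solve an ILP over component types'' framework and both exploit the same core fact that duplicate, isomorphic traces of a part on components of $G-S$ are irrelevant to connectivity. The paper splits into two cases. For $r\le k$ it argues that (after dismissing the trivial situation $\lfloor n/r\rfloor\le k$) every part meets $S$, guesses the partition of $S$ \emph{and} the full ``type--color representation'' of the solution, i.e.\ the set of (component type, coloring) equivalence classes that occur, verifies connectivity of all parts on one representative per class in polynomial time, and then runs an ILP with constraints $x_{t,\mu}\ge 1$ for the guessed classes plus the equitability constraints. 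For $r>k$ it instead guesses the at most $k$ parts meeting $S$ explicitly (up to type equivalence), deletes them, and finishes with a per-component enumeration of how many large/small parts each remaining component can be cut into, combined by a standard dynamic program. You avoid the case split entirely: you guess only the partition of $S$ induced by the boundary parts, their size classes, and, per boundary part, the \emph{set of attachment signatures} in $2^{S_j}$, certifying connectivity via the auxiliary graph on $S_j$ with cliques on the signatures; this is a coarser abstraction than the paper's type--color classes but, as you correctly argue (pieces of a boundary part are pairwise non-adjacent, so only their neighborhoods in $S_j$ matter), it suffices. You then fold the interior parts into the same ILP through per-component ``patterns'' and two counting constraints, where the paper handles them by a separate enumeration-plus-DP step. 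What your route buys is uniformity and a single ILP; what the paper's buys is that in Case~1 connectivity is checked once, outside the ILP, on a guessed representation, so no ``each signature occurs at least once'' coverage constraints are needed, and Case~2 avoids ILP machinery altogether. Both yield FPT in $\vi$; your bookkeeping (signatures determined by the type, $\ge 1$ constraints forcing the realized signature set to equal $\Sigma_j$, interior-part counts matching $n \bmod r$) is sound.
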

\begin{proof}
Let $(G = (V,E), r)$ be an instance of \textsc{Equitable Connected Partition},
and $S$ be a $\vi(k)$-set of $G$.
Observe that at most $k$ of $V_{1}, \dots, V_{r}$ can intersect $S$.
We split the proof into two cases $r \le k$ and $r > k$.

\paragraph*{Case 1: $r \le k$.}
If additionally $\lfloor n/r \rfloor \le k$ holds in this case,
then $n \in O(k^{2})$. Thus we assume that $\lfloor n/r \rfloor > k$.
This implies that every $V_{i}$ intersects $S$.
We first guess the partition $S_{1}, \dots, S_{r}$ of $S$.

Let $C_{1}$ and $C_{2}$ be components of $G-S$ with the same type,
and $\mu_{j} \colon V(C_{j}) \to \{1,\dots,r\}$ for each $j \in \{1,2\}$.
Then, we say that $(C_{1},\mu_{1})$ and $(C_{2},\mu_{2})$ are \emph{equivalent}
if there is an isomorphism $\eta$ from $G[S \cup C_{1}]$ and $G[S \cup C_{2}]$
such that $\eta$ fixes $S$ and $\mu_{1}(v) = \mu_{2}(\eta(v))$ for all $v \in V(C_{1})$.
A set $\mathcal{M} = \{(C_{1}, \mu_{1}), \dots, (C_{p}, \mu_{p})\}$ is \emph{feasible} if 
$C_{j}$ is a component of $G-S$ for each $j$,
$\mu_{j} \colon V(C_{j}) \to \{1,\dots,r\}$ for each $j$,
and
the subgraph of $G$ induced by $S_{i} \cup \bigcup_{1 \le j \le p} \{v \in V(C_{j}) \mid \mu_{j}(v) = i\}$
is connected for all $1 \le i \le r$.
Let $\mathcal{M}'= \{(C_{1}', \mu_{1}'), \dots, (C_{q}', \mu_{q}')\}$ be a subset of $\mathcal{M}$
obtained by removing all but one of each equivalent class.
It is easy to see that $\mathcal{M}'$ is feasible if and only if so is $\mathcal{M}$.
Let $t_{j}'$ be the $(G,S)$-type of $C_{j}'$.
We call the set $\{(t_{1}', \mu_{1}'), \dots, (t_{q}', \mu_{q}')\}$
a \emph{type-color representation} of $\mathcal{M}$.

Now we guess the type-color representation $\mathcal{T} = \{(t_{1}, \mu_{1}), \dots, (t_{q}, \mu_{q})\}$ 
of a solution. That is, we find a partition such that at least one component of type $t_{1}$
is partitioned by $\mu_{1}$,
and no component is partitioned in a way not included in $\mathcal{T}$.
The number of candidates depends only on $k$,
and the feasibility of each candidate can be checked in polynomial time.

By a nonnegative variable $x_{t,\mu}$ for $(t,\mu) \in \mathcal{T}$,
we represent the number of type-$t$ components that we partition by $\mu$ or an equivalent mapping.
Since we take at least one such partition of type-$t$ components,
we set the constraint $x_{t,\mu} \ge 1$ for each $(t,\mu) \in \mathcal{T}$.
Now the connectivity has been handled, and we only need to force the equitable partition.
Since each component has to be partitioned, we need the following constraints:
\[
  \textstyle
  \sum_{(t,\mu) \in \mathcal{T}} x_{t,\mu} = d_{t} \quad \text{for each type} \ t,
\]
where $d_{t}$ is the number of type-$t$ components in $G - S$.
The equitable constraints can be expressed as follows:
\begin{align*}
  \textstyle
  \sum_{(t,\mu) \in \mathcal{T}}
  \mu^{(i)} \cdot x_{t,\mu} &= \lceil n/r \rceil - |S_{i}|
  \quad \text{for each} \ i \in \{1,\dots,a\},
  \\
  \textstyle
  \sum_{(t,\mu) \in \mathcal{T}}
  \mu^{(i)} \cdot x_{t,\mu} &= \lfloor n/r \rfloor - |S_{i}|
  \quad \text{for each} \ i \in \{a+1,\dots,r\},
\end{align*}
where $a$ is the remainder of $n/r$
and $\mu^{(i)}$ is the number of vertices $\mu$ maps to $i$.

Since the number of variables depends only on $k$,
Proposition~\ref{prop:ILP_opt} implies that the feasibility test of the ILP defined above
is fixed-parameter tractable parameterized by $k$.
 
\paragraph*{Case 2: $r > k$.}
In this case, some $V_{i}$ does not intersect $S$, and thus it is contained in a component of $G-S$.
This implies that $\lfloor n/r \rfloor \le k$, and thus $\max_{i} |V_{i}| \le k+1$.
We first guess the number $k' < k$ of the $V_{i}$'s intersecting $S$
and the number $a \le k'$ of size $\lceil n/r \rceil$ sets among them.
Now we guess $V_{1}$: guess at most $k+1$ types of $G-S$;
guess the number of components we take from the chosen types, which is at most $k+1$;
and for each component, guess the subset of the vertices taken to $V_{1}$.
The number of candidates depends only on $k$.
In general, when we guess $V_{i}$, $2 \le i \le k'$, 
we first remove the vertices chosen for $\bigcup_{1 \le j \le i-1} V_{j}$
and recompute and redefine the types. Then, we can guess $V_{i}$ in exactly the same way as the case of $i = 1$.
The number of candidates for all $V_{1}, \dots, V_{k'}$ depends only on $k$.
We reject the guess if some $G[V_{i}]$ is disconnected.

Let $W = \bigcup_{1 \le j \le k'} V_{j}$.
Now it suffices to decide whether $G - W$ has an equitable connected $(r-k')$-partition.
For each component $C$ of $G - W$,
we enumerate all the possible pairs $(p,q)$ of nonnegative integers
such that $C$ admits an equitable connected $(p+q)$-partition such that
$p$ parts have size $\lceil n/r \rceil$ and $q$ parts have size $\lfloor n/r \rfloor$.
This can be done in FPT time parameterized by $k$ in total,
since each component has at most $k$ vertices
and the number of components in $G-W$ is at most $|V|$.
We now check whether by picking one pair $(p,q)$ for each component,
it is possible to make the total number of components $r-k'$.
This can be done in polynomial time by a standard dynamic programming algorithm
since the number of components and $r-k'$ are at most $|V|$.
\end{proof}


\section{Hard problems parameterized by $\vi$}
\label{sec:hard-vi}

\subsection{\textsc{Graph Motif}}
Given a graph $G = (V,E)$, a vertex coloring $c \colon V \to \mathcal{C}$,
and a multiset $M$ of colors in $\mathcal{C}$,
the problem \textsc{Graph Motif} is to decide
if there is a set $S \subseteq V$ such that $G[S]$ is connected 
and $c(S) = M$, where $c(S)$ is the multiset of colors appearing in $S$.
If the motif $M$ is a set (i.e., no element in $M$ has multiplicity more than $1$),
then the restricted problem is called \textsc{Colorful Graph Motif}.
It is known that \textsc{Graph Motif} is fixed-parameter tractable parameterized by $\vc$~\cite{BonnetS17}
(actually by more general parameters \emph{neighborhood diversity}~\cite{Ganian12arxiv}
and \emph{twin-cover number}~\cite{Ganian11}).
The proof of Theorem~20 in~\cite{BonnetS17} implies that \textsc{Colorful Graph Motif} is NP-complete for graphs of $\vi = 6$.
By a similar proof, we will show that \textsc{Colorful Graph Motif} is NP-complete for graphs of $\vi = 4$.
We then complement this by showing that \textsc{Graph Motif} is polynomial-time solvable for graphs of $\vi \le 3$.

\begin{theorem}
\label{thm:mtfvi4}
\textsc{Colorful Graph Motif} is NP-complete on trees of vertex integrity~$4$.
\end{theorem}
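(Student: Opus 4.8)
The plan is to reduce from a restricted variant of \textsc{Exact Cover by 3-Sets} (X3C), or equivalently from a suitably constrained version of \textsc{3-Dimensional Matching}, both of which are NP-complete~\cite{GareyJ79}. Given an X3C instance with ground set $\{e_1,\dots,e_{3q}\}$ and a family $\mathcal{S}=\{S_1,\dots,S_m\}$ of $3$-element subsets, I want to build a tree $T$, a proper coloring $c$, and a set $M$ of colors such that $T$ has a colorful connected submotif realizing $M$ exactly when $\mathcal{S}$ contains an exact cover. The tree will be rooted at a single apex vertex $\rho$; its children will be the "selector" vertices $t_1,\dots,t_m$, one per set $S_j$, and each $t_j$ will in turn have three leaf children corresponding to the three elements of $S_j$. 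This gives a tree of depth $3$, hence $\td(T)\le 3$; removing $\rho$ leaves a disjoint union of stars $K_{1,3}$ (order $4$ each), so together with $\rho$ one gets a $\vi(4)$-set $\{\rho\}$, i.e.\ $\vi(T)\le 4$ (and it is exactly $4$ since $T$ contains a path of length $3$ plus nontrivial attachments, forcing $\vi\ge 4$).

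The coloring is the crux. I would give $\rho$ a unique "root" color $r$; give every element-leaf the color of the ground-set element it represents (so all copies of $e_i$ across different sets share one color $\gamma_i$); and give each selector $t_j$ a color from a small shared palette in a way that does \emph{not} create a colorful obstruction — the subtlety is that since all the $t_j$ we pick must appear in a colorful motif, they must receive pairwise-distinct colors, which I handle by making the selectors in any \emph{chosen} collection distinguishable via the structure rather than via $3q$ fresh colors. Concretely, I set $M=\{r\}\cup\{\gamma_1,\dots,\gamma_{3q}\}\cup\{\beta_1,\dots,\beta_q\}$ where $\beta_1,\dots,\beta_q$ are $q$ fresh "budget" colors, and I color the selectors so that exactly $q$ of them can be included in a connected colorful set picking up all of $\beta_1,\dots,\beta_q$; this forces the motif to select exactly $q$ sets, to pick up all three leaves under each selected set (to collect each $\gamma_i$ exactly once, which is possible iff the selected sets are pairwise disjoint), and to include $\rho$ (the only way to make the selected $t_j$'s connected, since distinct $t_j$ are non-adjacent in the tree). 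An exact cover thus maps to a colorful connected motif and vice versa.

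For the forward direction, given an exact cover $\mathcal{S}'\subseteq\mathcal{S}$ of size $q$, take $S=\{\rho\}\cup\{t_j : S_j\in\mathcal{S}'\}\cup\{\text{all element-leaves under those }t_j\}$; $T[S]$ is a depth-$2$ subtree hence connected, and $c(S)$ hits $r$ once, each $\gamma_i$ once (by exactness), and each budget color once (by the selector coloring), so $c(S)=M$. For the reverse direction, suppose $T[S]$ is connected with $c(S)=M$. Since the $\gamma_i$'s force $|S\cap(\text{leaves})|=3q$ and these leaves sit under selectors that must also be in $S$ for connectivity, and since any two selectors are joined in $T$ only through $\rho$, we get $\rho\in S$ and that the selectors in $S$ form a subfamily covering the ground set; the budget colors $\beta_\ell$ cap the number of selectors at $q$, and $3q$ leaves under $q$ size-$3$ selectors forces disjointness — an exact cover. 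I would also note containment in NP is immediate.

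The main obstacle is designing the selector coloring so that (i) the coloring is proper on the tree, (ii) any colorful motif is forced to contain $\rho$ and exactly $q$ selectors with the disjointness property, and (iii) the total number of colors stays bounded by a polynomial while $\vi$ stays at $4$ — in particular I must avoid accidentally letting a colorful connected set "cheat" by taking leaves from a selector without taking the selector itself, or by taking a selector without all three of its leaves. Getting the budget gadget to interact correctly with the element colors, rather than blowing up $\vi$ past $4$, is the delicate engineering step; this is essentially a careful refinement of the construction behind Theorem~20 of~\cite{BonnetS17}, tightened from $\vi=6$ to $\vi=4$ by collapsing the root/selector layers into a single apex and three-leaf stars.
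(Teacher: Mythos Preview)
Your construction does not achieve vertex integrity~$4$. By definition, a $\vi(k)$-set $S$ must satisfy $|S|+|V(C)|\le k$ for every component $C$ of $G-S$. In your tree, removing the apex $\rho$ leaves stars $K_{1,3}$ of order~$4$, so $\{\rho\}$ is only a $\vi(5)$-set: $1+4=5$, not $4$. No smaller set helps either, since removing $\rho$ together with any bounded number of selectors still leaves many $K_{1,3}$ components. So your tree has $\vi(T)=5$, and the whole point of the theorem---tightening from the known $\vi=6$ bound down to $\vi=4$---is lost.

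There is a second, independent gap: the ``budget'' coloring of the selectors. You need to color the $m$ selector vertices with $q$ colors $\beta_1,\dots,\beta_q$ so that \emph{every} exact cover of size $q$ picks up all $q$ colors exactly once. But which $q$-subsets form exact covers is precisely what you are trying to decide; there is no way to assign the $\beta_\ell$'s in advance without either (i) ruling out some genuine exact covers because two of their selectors share a budget color, or (ii) admitting spurious solutions. You flag this as ``the delicate engineering step,'' but it is not an engineering detail---as stated it cannot be made to work.

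The paper sidesteps both problems simultaneously by reducing from \textsc{3-Dimensional Matching} rather than X3C. Because each triple $(x_i,y_j,z_k)$ has one coordinate in each of $X,Y,Z$, the gadget for a triple can be a \emph{path} of three vertices (colored $x_i$, $y_j$, $z_k$ in that order) hanging off the root, with no separate selector vertex at all. Removing the root then leaves components of order~$3$, giving $\vi=4$; and since the element colors themselves witness which triples are chosen, no budget colors are needed. The connectivity of the path forces all-or-nothing selection of each triple, and the colorfulness on $M=\{r\}\cup X\cup Y\cup Z$ encodes the perfect-matching condition directly.
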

\begin{proof}
The problem is clearly in NP\@.
We present a reduction from an NP-complete problem \textsc{3-Dimensional Matching}~\cite{GareyJ79}.
The input of \textsc{3-Dimensional Matching} consists of three disjoint sets
$X = \{x_{1}, \dots, x_{n}\}$, $Y = \{y_{1}, \dots, y_{n}\}$, $Z = \{z_{1}, \dots, z_{n}\}$,
and a set of triples $T \subseteq X \times Y \times Z$.
The task is to decide whether there is a subset $S$ of $T$
such that $|S| = n$ and each element of $X \cup Y \cup Z$ appears in a triple included in $S$.

We construct a graph $G$ with a coloring $c$ as follows.
The graph $G$ contains a special root vertex $r$ with unique color $c(r) = r$.
For each triple $t = (x_{i}, y_{j}, z_{k}) \in T$, take three new vertices
$t_{i}, t_{j}, t_{k}$ with $c(t_{i}) = x_{i}$, $c(t_{j}) = y_{j}$, and $c(t_{k}) = z_{k}$,
and add three new edges $\{r, t_{i}\}$, $\{t_{i}, t_{j}\}$, and $\{t_{j}, t_{k}\}$.
We set $M = \{r\} \cup X \cup Y \cup Z$. This completes the construction.
Note that $G$ is a tree and $\{r\}$ is a $\vi(4)$-set of $G$ 
as each component of $G - \{r\}$ is a path of order~$3$.

Assume that $(X,Y,Z,T)$ is a yes instance of \textsc{3-Dimensional Matching} with $S \subseteq T$ as a certificate.
We set $L = \{r\} \cup \{t_{i}, t_{j}, t_{k} \mid t = (x_{i}, y_{j}, z_{k}) \in S\}$.
Clearly, $c(L) = M$. Since $G[L]$ is connected, $(G,M)$ is a yes instance of \textsc{Colorful Graph Motif}.

To show the other direction, assume that a vertex subset $L$ of $G$ induces a connected graph and $c(L) = M$.
Observe that $L$ has to include $r$. 
Since $X \cup Y \cup Z = c(L) \setminus \{r\}$,
$L$ includes exactly $n$ vertices of distance $i$ from $r$ for each $i \in \{1,2,3\}$.
This fact and the connectivity of $G[L]$ imply that for each $t = (x_{i}, y_{j}, z_{k}) \in T$,
$L$ contains either all vertices $t_{i}, t_{j}, t_{k}$ or none of them.
Let $S \subseteq T$ be the set of triples such that $L$ contains all three vertices corresponding to each $t \in S$.
By the discussion above, $|S| = n$ and each element of $X \cup Y \cup Z$ appears in a triple included in $S$.
\end{proof}

\begin{theorem}
\label{thm:mtfvi3}
\textsc{Graph Motif} can be solved in polynomial time
on graphs of vertex integrity at most $3$.
\end{theorem}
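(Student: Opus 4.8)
The plan is to branch on the size of a $\vi(3)$-set. First I would compute a $\vi(3)$-set $S$ in polynomial time (see Section~\ref{sec:vi}). Since the sought vertex set $W$ (with $G[W]$ connected and $c(W)=M$) lies in a single connected component of $G$, I may solve the problem separately on each component and assume $G$ is connected; if $|V(G)|\le 3$ we are done by brute force, and otherwise the computed $\vi(3)$-set has size $1$ or $2$ and every vertex outside $S$ has degree at most $2$. I would also delete every vertex whose colour does not occur in $M$, as no such vertex can belong to $W$. It then suffices, for each of the constantly many choices of $W\cap S$, to decide whether $W$ can be completed outside $S$.

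The case $|S|=2$, say $S=\{a,b\}$, is the easy one: $G-S$ is edgeless, so the vertices we add outside $S$ do not interact. If $W\cap S=\emptyset$ then $W$ is a single vertex; if $W\cap S=\{a\}$ then $W\setminus\{a\}\subseteq N(a)\setminus S$ and we only need enough available vertices of each colour in $N(a)\setminus S$ (symmetrically for $\{b\}$); if $W\cap S=\{a,b\}$ and $ab\in E(G)$ then $W\setminus S\subseteq(N(a)\cup N(b))\setminus S$ with the same counting condition; and if $W\cap S=\{a,b\}$ and $ab\notin E(G)$ then $G[W]$ is connected only if $W$ contains a vertex adjacent to both $a$ and $b$, so I would additionally guess the colour of such a ``bridge'' vertex (at most $|M|$ choices) and again reduce to a colour-counting check. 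All of this is clearly polynomial.

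The main work is the case $|S|=1$, say $S=\{a\}$. If $a\notin W$ then $W$ lies inside one component of $G-a$ and thus has at most two vertices, so we enumerate all vertices and all edges of $G-a$. If $a\in W$, write $W=\{a\}\cup A$; then $G[A]$ is a disjoint union of single vertices and single edges, and every component of $G[A]$ must have a neighbour of $a$. After normalisation, each component of $G-a$ offers $A$ one of a constant number of options: a vertex adjacent to $a$ (whether isolated in $G-a$ or the endpoint of an edge both of whose endpoints are adjacent to $a$) can be taken freely; but an edge $\{u,v\}$ of $G-a$ with $u\in N(a)$ and $v\notin N(a)$ is a \emph{gadget} contributing to $A$ either $\emptyset$, $\{c(u)\}$, or $\{c(u),c(v)\}$ --- crucially $v$ can be taken only together with $u$. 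Thus the task reduces to a constrained colour-packing problem: choose, independently, a sub(multi)set of the ``free'' vertices and a mode for each gadget so that the total colour multiset equals $M\setminus\{c(a)\}$. I would solve this by a matching/flow formulation together with an exchange argument to cope with the coupling inside gadgets.

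I expect this last step to be the main obstacle. A naive per-colour count is not sufficient: if $M\setminus\{c(a)\}$ needs two vertices of a colour $\beta$ that only occurs as the ``$v$-colour'' of gadgets, then using the two required $v$-vertices forces in the two matching $u$-vertices, which may overshoot their colour budget; hence feasibility genuinely depends on how the gadget colours are paired, not merely on the available counts. This is precisely where the problem becomes hard: by Theorem~\ref{thm:mtfvi4}, allowing components of order $3$ (paths $P_3$ attached to $a$) already encodes \textsc{3-Dimensional Matching}, so the polynomial-time argument must exploit that for $\vi\le 3$ every component of $G-S$ has at most two vertices --- equivalently, that each gadget contributes at most two colours.
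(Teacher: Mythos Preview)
Your case analysis mirrors the paper's: both branch on the size of a $\vi(3)$-set, dispose of $|S|\ne 1$ cheaply (the paper simply invokes the known FPT algorithm parameterized by vertex cover when $|S|\ge 2$, rather than your ad-hoc treatment, but either is fine), and identify the case $S=\{a\}$ with $a\in W$ as the crux. Your description of that case --- ``free'' vertices in $N(a)$ plus gadgets $\{u,v\}$ with $u\in N(a)$, $v\notin N(a)$ offering modes $\emptyset$, $\{c(u)\}$, or $\{c(u),c(v)\}$ --- is exactly right.

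The gap is that you stop at ``I would solve this by a matching/flow formulation together with an exchange argument'' and explicitly flag it as unresolved. That step \emph{is} the proof; without a concrete model and a two-direction correctness argument, the theorem is not established. Merely naming ``matching/flow'' is not enough, since the coupling you yourself highlight (taking $v$ forces its partner $u$) must be encoded precisely, and it is not obvious a priori which degree constraints make the reduction go through.

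For comparison, the paper closes this gap with a degree-constrained subgraph instance on a bipartite multigraph $H$: for each colour $x$ create nodes $x_1,x_2$, and for each gadget with $c(u)=x$, $c(v)=y$ add one edge $x_1y_2$. The constraints are $\deg_F(x_1)\le M(x)$ and $\deg_F(x_2)=\max\{M(x)-q(x),0\}$, where $M(x)$ is the multiplicity of $x$ in $M\setminus\{c(a)\}$ and $q(x)$ counts colour-$x$ vertices in $N(a)$. Intuitively, $x_2$ records how many colour-$x$ vertices \emph{must} come from distance~$2$, and each chosen edge $x_1y_2$ says ``one colour-$y$ distance-$2$ vertex is used, charging one unit of colour-$x$ capacity to its partner''; the free vertices in $N(a)$ then top up whatever demand remains. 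The exchange argument you anticipate is precisely the paper's normalisation: among all solutions pick one maximising $|W\cap N(a)|$, so distance-$2$ vertices are used only when $N(a)$ is exhausted, which justifies the \emph{exact} constraint on $x_2$. Degree-constrained subgraph with ``at most'' / ``exactly'' bounds is solvable in polynomial time (Gabow), completing the algorithm. Your plan is headed in the same direction; what is missing is this specific bipartite model and the equivalence proof.
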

\begin{proof}
Let $G = (V,E)$ be the input graph with a coloring $c \colon V \to \mathcal{C}$
and $M$ be the input multiset of colors.
Let $R$ be a $\vi(3)$-set of $G$.
If $|R| \ge 2$, then $R$ is a vertex cover of $G$ with $|R| \le 3$,
and thus we can apply an FPT algorithm parameterized by the vertex cover number~\cite{Ganian11,BonnetS17}.
If $R = \emptyset$, then each connected component of $G$ is of order at most $3$, and thus the problem is trivial.
In the following we assume that $R = \{r\}$ for some $r \in V$.
Furthermore, we assume that $r$ is included in the solution $S$ as otherwise $|S| \le 2$.
Let $D_{i}$ be the vertices of distance $i$ from $r$. Note that $V = \{r\} \cup D_{1} \cup D_{2}$.

We construct an auxiliary bipartite multi-graph $H$ as follows.
For each color $x \in \mathcal{C}$, take new vertices $x_{1}$ and $x_{2}$.
For each component $C$ of $G - r$,
if $C$ has two vertices $u$ of color $x$ and $v$ of color $y$,
where only $u$ is adjacent to $r$,
then add one edge between $x_{1}$ and $y_{2}$.
For each color $x \in \mathcal{C}$, we define degree constraints of $x_{1}$ and $x_{2}$ in $H$ as follows:
the degree constraint of $x_{1}$ is ``at most $M(x)$'',
and the degree constraint of $x_{2}$ is ``exactly $\max\{M(x)-q(x),0\}$'',
where $M(x)$ is the multiplicity of $x$ in $M \setminus \{c(r)\}$ and $q(x)$ is the number of color-$x$ vertices in $D_{1}$.
We will show that $H$ has a subgraph $F$ satisfying the degree constraints
if and only if 
there is a set $S \subseteq V$ such that $G[S]$ is connected and $c(S) = M$.
Since finding a subgraph of such degree constraints can be done in polynomial time~\cite{Gabow83},
this equivalence implies the theorem.

First assume that there is a set $S \subseteq V$ such that $G[S]$ is connected and $c(S) = M$.
We choose $S$ among such sets so that $|S \cap D_{1}|$ is maximized.
This implies in particular that if there is a vertex $v \in D_{1} \setminus S$ of color $x$,
then no vertex of color $x$ in $D_{2}$ belongs to $S$.
For each edge $\{u,v\}$ in $G[S-r]$, if $u \in D_{1}$, $v \in D_{2}$, $c(u) = x$, and $c(v) = y$,
then add one edge between $x_{1}$ and $y_{2}$ into $F$.
Now for each color $x$, the degree of $x_{1}$ in $F$ is at most $M(x)$.
Since $S$ takes color-$x$ vertices in $D_{2}$ only when it is necessary after
including all color-$x$ vertices in $D_{1}$,
the degree of $x_{2}$ in $F$ is exactly $\max\{M(x)-q(x),0\}$.

Next assume that $H$ has a subgraph $F$ that satisfies the degree constraints.
For each edge between $x_{1}$ and $y_{2}$ in $F$,
we add into $S$ the endpoints of an arbitrary edge $\{u,v\}$ in $G-r$
such that $c(u) = x$, $u \in D_{1}$, $c(v) = y$, and $v \in D_{2}$.
Let $\mathcal{S} = c(S)$, the multiset of colors appear in $S$.
From the construction of $S$, it holds for each color $x$
that $\mathcal{S}(x) = \deg_{F}(x_{1}) + \deg_{F}(x_{2}) = \deg_{F}(x_{1}) + \max\{M(x)-q(x),0\}$.
If $M(x) \le q(x)$, then $\mathcal{S}(x) = \deg_{F}(x_{1}) \le M(x)$.
We add, into $S$, arbitrary $M(x) - \mathcal{S}(x)$ of color-$x$ vertices in $D_{1} \setminus S$.
This is possible since the number of color-$x$ vertices in $D_{1} \setminus S$
is $q(x) - \mathcal{S}(x) \ge M(x) - \mathcal{S}(x)$.
If $M(x) > q(x)$, then $\mathcal{S}(x) = \deg_{F}(x_{1}) + M(x)-q(x)$.
In this case, we add all color-$x$ vertices in $D_{1}$ into $S$,
and then the multiplicity of $x$ in the resultant set becomes $q(x) + M(x)-q(x) = M(x)$.
\end{proof}

\subsection{\textsc{Steiner Forest}}
\textsc{Steiner Forest} is a generalization of \textsc{Steiner Tree} and defined as follows:
Given a graph $G = (V,E)$ with edge weighting $w \colon E \to \mathbb{Z}^{+} $,
a positive integer $k$, and disjoint terminal sets $T_{1}, \dots, T_{t} \subseteq V$ with $|T_{i}| \ge 2$ for all $i$,
decide whether there is a subgraph $F$ of $G$ with $\sum_{e \in F} w(e) \le k$
such that each $T_{i}$ is contained in some connected component of $F$.
Note that we can assume that $F$ is a forest.

It is known that \textsc{Steiner Forest} is strongly NP-complete 
(that is, NP-complete even if the weights are given in unary)
on graphs of vertex integrity $5$~\cite{Gassner10}. 
We show that for graphs of small vertex cover number, the problem becomes easier.

Let $G = (V,E)$ be a graph, $F$ a subgraph of $G$, and $S$ a vertex cover of $G$.
We assume without loss of generality that $F$ is a forest.
The following observations follow from this assumption and the fact that $V-S$ is an independent set.
\begin{observation}
  \label{obs:sf-vc-deg2}
  At most $|S| -1$ vertices in $V - S$ have degree 2 or more in $F$.
\end{observation}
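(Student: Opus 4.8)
The plan is to charge each relevant vertex of $V-S$ to a distinct edge of a forest on $S$. Let $U \subseteq V-S$ be the set of vertices with degree at least $2$ in $F$; the goal is to show $|U| \le |S|-1$ (this presupposes $S \ne \emptyset$, which is harmless, since otherwise $G$, and hence $F$, is edgeless and $U = \emptyset$). Because $S$ is a vertex cover, $V-S$ is independent, so for every $u \in U$ all $F$-neighbours of $u$ lie in $S$; since $F$ is simple and $\deg_F(u) \ge 2$, I can fix two distinct such neighbours $a_u, b_u \in S$.

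First I would build the multigraph $H$ with vertex set $S$ and one edge $\{a_u, b_u\}$ for each $u \in U$, so $H$ has exactly $|U|$ edges (counted with multiplicity). The crux is to show $H$ is a forest. Suppose it has a cycle $s_1, e_1, s_2, e_2, \dots, s_\ell, e_\ell, s_1$ with $\ell \ge 2$ (a pair of parallel edges counts as a cycle of length $2$), where the $e_j$ are distinct edges and $s_1, \dots, s_\ell$ are distinct vertices of $S$. Each $e_j$ was contributed by a unique $u_j \in U$, and since the $e_j$ are distinct the $u_j$ are distinct; moreover $u_j$ is adjacent in $F$ to both $s_j$ and $s_{j+1}$ (indices modulo $\ell$). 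Then $s_1, u_1, s_2, u_2, \dots, s_\ell, u_\ell, s_1$ is a cycle in $F$, contradicting that $F$ is a forest. Hence $H$ is a forest, so it has at most $|S|-1$ edges, which gives $|U| \le |S|-1$, as claimed.

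I expect the only subtle point to be the cycle-lifting bookkeeping in the middle step — in particular, correctly handling the length-$2$ case (two vertices of $U$ sharing the same pair of neighbours, which lifts to a $4$-cycle in $F$) and verifying that the $u_j$'s and $s_j$'s along a cycle of $H$ are genuinely distinct, so the lifted closed walk is a cycle rather than a shorter one. Everything else is routine, and the same style of argument (together with $F$ being a forest) should later give the analogous bounds for higher degrees used in this section.
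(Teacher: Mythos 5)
Your proof is correct. Note that the paper does not spell out an argument at all: it simply asserts that the observation ``follows from'' $F$ being a forest and $V-S$ being independent, so any rigorous derivation from those two facts is in the spirit of the original. Your route -- contracting each high-degree vertex $u \in V-S$ to an edge $\{a_u,b_u\}$ on $S$, lifting any cycle of the resulting multigraph $H$ (including a pair of parallel edges) back to a cycle of $F$, and then using that a forest on $|S|$ vertices has at most $|S|-1$ edges -- is sound, and your handling of the length-$2$ case and of the distinctness of the $u_j$'s and $s_j$'s is exactly the bookkeeping needed. A slightly shorter variant that avoids the cycle-lifting altogether: let $U$ be the set in question and consider the forest $F' = F[S \cup U]$. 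Since $V-S$ is independent, every $u\in U$ keeps all its $F$-neighbours in $F'$, so $F'$ has at least $2|U|$ edges incident to $U$ (no edge is double-counted because $U$ is independent), while as a forest it has at most $|S|+|U|-1$ edges; hence $2|U| \le |S|+|U|-1$, i.e.\ $|U| \le |S|-1$. Your side remark about $S=\emptyset$ is the only slight imprecision -- in that degenerate case the literal bound $|U|\le |S|-1=-1$ fails even though $U=\emptyset$ -- but this situation never arises in the Steiner Forest setting where the observation is applied, so it is immaterial.
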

\begin{observation}
  \label{obs:sf-vc-cc}
  $F$ has at most $|S|$ connected components.
\end{observation}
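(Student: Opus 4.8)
The plan is to exploit the defining property of a vertex cover together with the fact that $F$ is a forest. Since $S$ is a vertex cover of $G$ and $F$ is a subgraph of $G$, every edge of $F$ has at least one endpoint in $S$. Before using this, I would fix one mild convention: that $F$ has no isolated vertices. This is harmless because the weight of $F$ depends only on its edge set, and every terminal set $T_i$ has at least two vertices, so each terminal lies on an edge of a feasible solution; hence any isolated vertex can be deleted without affecting feasibility or weight. Consequently we may assume that every connected component of $F$ contains at least one edge.

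Given this, the argument is short. Let $C$ be any connected component of $F$. It contains an edge, and by the vertex-cover property one endpoint of that edge lies in $S$; that endpoint belongs to $C$, so $C$ meets $S$. Since distinct connected components of $F$ are vertex-disjoint, choosing one vertex of $S$ from each component yields an injection from the set of components of $F$ into $S$. Therefore $F$ has at most $|S|$ connected components, as claimed.

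I do not expect a real obstacle: everything reduces to charging each nontrivial component to a distinct vertex of $S$. The only point that genuinely needs a sentence is the reduction to forests with no isolated vertices, and that rests on the hypothesis $|T_i| \ge 2$ for all $i$ --- the same feature of the problem behind Observation~\ref{obs:sf-vc-deg2}, whose proof additionally uses that $V - S$ is independent in the forest $F$.
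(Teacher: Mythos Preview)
Your argument is correct and is essentially the intended one: the paper does not spell out a proof but merely states that the observation follows from $V\setminus S$ being independent (together with the forest assumption, which is in fact only needed for the companion Observation~\ref{obs:sf-vc-deg2}). Your charging of each nontrivial component to a vertex of $S$ via the vertex-cover property, after first discarding isolated vertices using $|T_i|\ge 2$, is exactly the right justification.
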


\begin{theorem}
\label{thm:sf_vc1}
\textsc{Steiner Forest} can be solved in time $n^{O(\vc)}$,
on $n$-vertex graphs of vertex cover number at most $\vc$.
\end{theorem}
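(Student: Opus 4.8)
The plan is to design an $n^{O(\vc)}$-time algorithm by exploiting the structure forced by Observations~\ref{obs:sf-vc-deg2} and~\ref{obs:sf-vc-cc}. Fix a minimum vertex cover $S$ with $|S| \le \vc$; it can be computed in FPT time, but even a brute-force $n^{O(\vc)}$ search suffices here. The key idea is that any optimal solution forest $F$ has a very constrained ``shape'' once we ignore the independent set $V-S$: by Observation~\ref{obs:sf-vc-deg2}, all but at most $|S|-1$ vertices of $V-S$ are leaves of $F$ (or isolated), and by Observation~\ref{obs:sf-vc-cc} there are at most $|S|$ connected components. So I would first guess a ``skeleton'' of the solution: the partition of $S$ into at most $|S|$ blocks according to which component of $F$ each vertex of $S$ lies in (allowing a vertex of $S$ to be absent from $F$), and the set $D \subseteq V-S$ of high-degree ($\ge 2$) non-cover vertices together with which block each of them joins. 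Since $|D| \le |S|-1$, there are $n^{O(\vc)}$ choices for $D$, and $\vc^{O(\vc)}$ choices for the partition, so the total number of skeletons is $n^{O(\vc)}$.

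\textbf{Realizing a skeleton.} Given a skeleton, I would decide feasibility and minimum cost by a reduction to a matching/flow subproblem. Within each block $B$ (a subset of $S \cup D$ destined to be one tree), the internal edges among $S \cup D$ are all available, so the cheapest way to connect $B \cap (S \cup D)$ is a minimum spanning tree on the relevant subgraph of $G[S \cup D]$, and this can be computed directly. What remains is to attach the remaining ``pendant'' vertices of $V - S - D$: each such vertex $v$, if it is used at all, is a leaf attached by a single edge $\{v,s\}$ to some $s \in S$; and every terminal $v \in \bigcup_i T_i$ that lies outside $S \cup D$ \emph{must} be attached, to a cover vertex lying in the same block as the rest of its terminal set $T_i$. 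For a non-terminal pendant vertex there is never a reason to include it. Hence the residual optimization is: for each terminal $v \notin S\cup D$ with $v \in T_i$, pick the cheapest edge from $v$ to a cover vertex in the block assigned to $T_i$; and we must also check that each $T_i$ is genuinely confined to a single block (which the skeleton guessing already enforces once we additionally guess, for each terminal set $T_i$, which block it belongs to — only $\vc^{O(\vc)}$ further choices, or fold this into the partition guess). Summing the MST costs over blocks plus these cheapest pendant-attachment costs, and comparing with $k$, gives the answer for this skeleton; we return YES iff some skeleton yields cost $\le k$.

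\textbf{Main obstacle.} The part that needs care is making sure the guessed skeleton is simultaneously \emph{necessary} (every optimal $F$ arises from some skeleton) and \emph{sufficient} (every skeleton we realize yields a valid forest with the claimed cost). Necessity follows from the two observations: take an optimal forest $F$, let $D$ be its non-cover vertices of degree $\ge 2$ (at most $|S|-1$ of them), read off the component partition of $S \cup D$, and note the remaining non-cover vertices are leaves or unused. Sufficiency requires checking that after taking an MST inside each block on $G[S \cup D]$ and then attaching leaves, the result is acyclic and each block is connected — acyclicity is automatic since we add only pendant edges to a forest, and connectivity of block $B$ holds provided $G[S\cup D]$ restricted to the vertices of $B$ that we actually include is connected, which is exactly the condition under which the MST step succeeds (otherwise discard the skeleton). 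One subtle point is that a block may contain cover vertices but no terminals and no forced structure — such a block contributes $0$ and can be merged away, so without loss of generality every block contains a terminal; handling this bookkeeping is the only genuinely fiddly step, and the edge weights being positive guarantees we never want to include superfluous vertices or edges.
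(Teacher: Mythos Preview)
Your approach is essentially the paper's: guess the at most $|S|-1$ high-degree vertices $D \subseteq V \setminus S$, fix the component structure on $S \cup D$, and then attach every remaining terminal as a pendant via its cheapest edge into the appropriate component. The only cosmetic difference is that the paper guesses the exact edge set of $F[S \cup D]$ (at most $|S|^{O(|S|)}$ possibilities) whereas you guess only the partition into blocks and then compute a minimum spanning tree inside each block; your variant is correct and arguably tidier, since an MST on a block can only be at least as cheap as whatever tree the optimal $F$ induces there.

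One genuine slip to fix: you claim that assigning each terminal set $T_i$ to a block contributes ``only $\vc^{O(\vc)}$ further choices.'' That is false as stated --- there can be $\Theta(n)$ terminal sets disjoint from $S \cup D$, so branching on a block for each of them would cost $\vc^{\Theta(n)}$, not $n^{O(\vc)}$. The repair is that these assignments are independent: attaching all of $T_i$ as pendants to some block $B_j$ does not interact with the choice for any other $T_{i'}$, so for each such $T_i$ you simply compute, in polynomial time, the block minimizing $\sum_{v \in T_i} \min\{\, w(\{v,s\}) : s \in S \cap B_j \,\}$ and take it. (The paper's own write-up also says ``guess'' at this step and relies on the same implicit observation.) With that correction, your argument goes through and matches the paper.
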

\begin{proof}
Let $G = (V,E)$ be a graph and $S \subseteq V$ be a vertex cover of $G$.
We first guess the set $D \subseteq V-S$ of vertices that have degree at least $2$ in $F$.
By Observation~\ref{obs:sf-vc-deg2}, we know that $|D| \le |S|-1$.
The vertices in $V-(S \cup D)$ have degree at most 1 in $F$:
if such a vertex appears in some $T_{i}$, then it has degree 1 in $F$;
otherwise, it has degree 0 in $F$, and thus can be safely removed from the graph.
We then guess the edges in $F[S \cup D]$.
The number of candidates for $D$ is at most $n^{|S|}$, and 
the number of candidates for the edge set is at most $|S|^{2|S|}$.

We reject the guess $F[S \cup D]$ 
if there are two components of $F[S \cup D]$ that contain elements of $T_{i}$ for some $i$.
Now for each $i$, there is at most one component $C_{i}$ of $F[S \cup D]$
such that $C \cap T_{i} \ne \emptyset$.
If there is no such component, we guess one component of $F[S \cup D]$ from $O(|S|)$ candidates
and call it $C_{i}$. Note that $C_{i}$ and $C_{j}$ may be the same for $i \ne j$.
Now for each $i$ and for each vertex $u \in T_{i} \setminus (S \cup D)$,
we find an edge $\{u,v\}$ of the minimum weight such that $v \in V(C_{i})$
and add $\{u,v\}$ into $F$.
We output a minimum weight forest obtained in this way.
\end{proof}

We denote by \textsc{Unweighted Steiner Forest}
the special case of \textsc{Steiner Forest} such that each edge has weight 1.
By subdividing the edges in the proof in~\cite{Gassner10},
we can show that \textsc{Unweighted Steiner Forest} is NP-complete for graphs of $\tw=3$.

\begin{theorem}
\label{thm:usf-vc}
\textsc{Unweighted Steiner Forest} is fixed-parameter tractable parameterized by $\vc$.
\end{theorem}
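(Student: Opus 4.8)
The plan is to build on the $n^{O(\vc)}$-time algorithm from Theorem~\ref{thm:sf_vc1} and the two structural observations above, and show that in the unweighted case the number of genuinely distinct guesses collapses to something depending only on $\vc$. Let $S$ be a vertex cover of $G$ of size $\vc$. As in Theorem~\ref{thm:sf_vc1}, the solution forest $F$ can be assumed to have only a bounded ``core'': by Observation~\ref{obs:sf-vc-deg2} at most $\vc-1$ vertices of $V-S$ have degree at least $2$ in $F$, so after guessing this set $D$ (with $|D|\le\vc-1$) and the edge set of $F[S\cup D]$ (at most $\vc^{2\vc}$ choices) and the component $C_i$ of $F[S\cup D]$ meant to contain each terminal set $T_i$ (merging terminal sets whose $C_i$ coincide), we are left with the purely combinatorial task of attaching each remaining terminal $u\in T_i\setminus(S\cup D)$ to its prescribed component $C_i$ by a single edge, and deciding for every other vertex of $V-S$ whether it is used as a degree-$1$ leaf. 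Since all weights are $1$, the objective is simply to \emph{minimize the number of added edges}, subject to: every terminal outside $S\cup D$ gets exactly one edge into the right component, and — crucially — no component of $F[S\cup D]$ that was merged for connectivity reasons becomes ``stranded'' (the attachments do not affect connectivity among $S\cup D$, so this is automatic once $F[S\cup D]$ is fixed).

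First I would reduce the leftover problem to an ILP in $f(\vc)$ variables. Classify each vertex $v\in V-(S\cup D)$ by its \emph{neighborhood type}: the pair $(N_G(v)\cap(S\cup D),\ \{i : v\in T_i\})$. There are at most $2^{|S\cup D|}\cdot(\text{number of terminal-set indices hitting }v)$ such types, and since a vertex outside $S$ lies in at most one $T_i$ (the $T_i$ are disjoint), the number of types is at most $2^{2\vc}\cdot(2\vc+1)$, bounded by a function of $\vc$. For each type $\tau$ and each admissible ``role'' (which component $C_j$ the vertex attaches to via which neighbor, or ``unused''), introduce a nonnegative integer variable counting how many vertices of type $\tau$ take that role; add the constraints that the counts for each type sum to the number of available vertices of that type, that every terminal-carrying type is fully used with the attachment going into the unique required component, and minimize the total number of attachment edges plus the (constant) $|E(F[S\cup D])|$. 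By Proposition~\ref{prop:ILP_opt} this ILP is solvable in FPT time. Running over all $f(\vc)$ guesses of $D$, $F[S\cup D]$, and the $C_i$'s, and taking the best solution, yields an FPT algorithm parameterized by $\vc$.

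The one point needing care — and the \textbf{main obstacle} — is the connectivity bookkeeping when some terminal set $T_i$ has \emph{no} representative in $S\cup D$ at all, i.e.\ $T_i\subseteq V-(S\cup D)$. Such a $T_i$ must lie entirely inside one connected component of $F$; since every vertex of $T_i$ is attached to $S\cup D$ by a single edge (it has degree $\le 1$ in $F$ outside the core), all of $T_i$ must attach into the \emph{same} component of $F[S\cup D]$, and that forces $|T_i\cap(V-(S\cup D))|$ edges into a common $C_i$. This is handled exactly by the ``guess $C_i$'' step already present in Theorem~\ref{thm:sf_vc1}; I just need to verify that after this guess the ILP constraints above correctly express ``all vertices of $T_i$ go to $C_i$'' — which they do, since the neighborhood type of a terminal vertex records its membership in $T_i$ and I restrict its admissible roles to attachments into $C_i$. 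Everything else is routine: the forest assumption justifies Observations~\ref{obs:sf-vc-deg2} and~\ref{obs:sf-vc-cc}, degree-$0$ non-terminal vertices are deleted up front, and the number of ILP variables and of outer guesses are both bounded by computable functions of $\vc$.
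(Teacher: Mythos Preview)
Your plan has a real gap: nothing bounds the number $t$ of terminal sets in terms of $\vc$, and several of your counts silently assume such a bound. Most concretely, your ``type'' of a vertex $v\in V\setminus(S\cup D)$ is the pair $(N_G(v)\cap(S\cup D),\ \{i:v\in T_i\})$; the second coordinate ranges over $t+1$ values (empty, or one of the $t$ singletons), so there are $2^{|S\cup D|}\cdot(t+1)$ types, not $2^{2\vc}\cdot(2\vc+1)$. A star $K_{1,n}$ with the center as vertex cover and terminal sets $T_i=\{\ell_{2i-1},\ell_{2i}\}$ has $\vc=1$ but $t=\lfloor n/2\rfloor$, so your ILP has $\Theta(n)$ variables and Proposition~\ref{prop:ILP_opt} gives nothing. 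The same unbounded $t$ also hits the outer branching: guessing a component $C_i$ for every $T_i$ disjoint from $S\cup D$ is $\Theta(\vc)^{t}$ choices, and guessing the concrete set $D$ as in Theorem~\ref{thm:sf_vc1} is $n^{O(\vc)}$ choices. You cannot collapse $D$ to $f(\vc)$ many neighborhood-type guesses either, because a terminal sitting in $D$ is not interchangeable with a non-terminal twin of the same neighborhood (swapping forces an extra leaf edge to reattach the terminal), so the relevant ``type'' of a $D$-vertex again records its terminal-set index.

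The paper takes a different route and kernelizes. Three reduction rules bound, in terms of $s=|S|$, the number of same-neighborhood terminals in any one $T_i$, the number of terminal sets (merge two of any $s{+}1$ sets with identical neighborhood profiles, safe by Observation~\ref{obs:sf-vc-cc}), and the number of non-terminal twins; after that both $|V|$ and $t$ are bounded by functions of $s$ and one can finish by brute force. Your ILP route could plausibly be repaired---for instance by first applying such a rule to bound $t$, or by observing that every $T_i$ disjoint from $S\cup D$ contributes cost exactly $|T_i|$ independent of which feasible component it attaches to, so that only the at most $|S\cup D|$ terminal sets meeting $S\cup D$ need to enter the type---but as written the type-count claim is false and the algorithm is only XP.
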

\begin{proof}
Let $G = (V,E)$ be the input graph and $S$ be a vertex cover of $G$. Let $s = |S|$.
We reduce the instance by applying the following reduction rules exhaustively.
\begin{enumerate}
  \item If $T_{i}$ contains $s$ or more vertices $v$ in $V-S$ that have the same neighborhood $N(v)$,
  then remove one of them from $T_{i}$ and decrease $k$ by $1$.
  
  \item Let $T_{i(1)}, \dots T_{i(s+1)}$ be $s+1$ distinct terminal sets such that
  $T_{i(j)} \cap S = \emptyset$ for all $j$
  and for each $X \subseteq S$ and $j \ne j'$,
  it holds that $|\{v \in T_{i(j)} \mid N(v) = X\}| = |\{v \in T_{i(j')} \mid N(v) = X\}|$.
  Then replace $T_{i(1)}$ and $T_{i(2)}$ with their union $T_{i(1)} \cup T_{i(2)}$.

  \item If there are two non-terminal vertices of the same neighborhood,
  then remove one of them from the graph.
\end{enumerate}

The first rule is safe by Observation~\ref{obs:sf-vc-deg2}.
At least one of the $s$ vertices of the same neighborhood is a leaf.
Since there are other vertices of the same neighborhood in $T_{i}$,
this leaf can be joined to the component containing the other vertices of $T_{i}$ with an edge.
The safeness of the second rule follows by Observation~\ref{obs:sf-vc-cc}, 
as at least two of them belong to the same connected component.
The third rule is safe because at most one of them is used in an optimal solution.

We can see that if no reduction rule above applies, then the numbers of vertices and of terminal sets 
depend only on $s$.
By the first rule, each terminal set $T_{i}$ contains at most $s \cdot 2^{s}$ vertices.
By the first and second rules, there are at most $s \cdot s^{2^{s}} + s$ terminal sets.
By Reduction rule 3, there are at most $2^{s} + s$ non-terminal vertices.
\end{proof}


\section{Easy problems parameterized by $\td$}
\label{sec:easy-td}

Here we show the following.
\begin{observation}
\textsc{List Hamiltonian Path}, 
\textsc{Directed} $(p,q)$-\textsc{Edge Dominating Set}, and
\textsc{Metric Dimension}
are fixed-parameter tractable parameterized by $\td$.
\end{observation}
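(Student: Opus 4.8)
The plan is to exploit for all three problems the single structural fact recorded in Proposition~\ref{prop:td-path-length}: a graph $G$ of treedepth $d$ contains no path on $2^{d}$ or more vertices. Consequently the diameter of a connected such graph is at most $2^{d}-1$, every shortest path has fewer than $2^{d}$ edges, and every directed path in a digraph whose underlying undirected graph has treedepth $d$ has fewer than $2^{d}$ arcs, since a directed path uses distinct vertices. We also use $\tw(G)\le\td(G)-1$ throughout. For \textsc{List Hamiltonian Path} this already suffices: if $G$ admits a Hamiltonian path then $G$ contains $P_{n}$, so $n-1<2^{\td(G)}$, i.e.\ $n\le 2^{\td(G)}$; thus we first test whether $n\le 2^{\td(G)}$, reject if not, and otherwise solve the instance (the graph together with the lists) by brute force in time bounded by a function of $\td(G)$.

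For \textsc{Directed $(p,q)$-Edge Dominating Set} the idea is that $p$ and $q$ may be assumed small. For any threshold $r\ge 2^{\td}-1$, the existence of a directed path of length at most $r$ from $u$ to $v$ is equivalent to reachability of $v$ from $u$, hence to the existence of a directed path of length at most $2^{\td}-1$ from $u$ to $v$ (each directed path, being a path of the underlying undirected graph, has fewer than $2^{\td}$ arcs). Therefore replacing $p$ by $\min\{p,\,2^{\td}-1\}$ and $q$ by $\min\{q,\,2^{\td}-1\}$ yields an equivalent instance with $p+q<2^{\td+1}$, on which we run the algorithm of Belmonte et al.~\cite{BelmonteHK0L18} that is fixed-parameter tractable parameterized by $\tw+p+q$; since $\tw(G)\le\td(G)-1$, its running time is bounded by a function of $\td(G)$ times a polynomial in the input size.

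For \textsc{Metric Dimension} we reduce to the optimization variant of Courcelle's theorem~\cite{ArnborgLS91,Courcelle92}. Since the diameter of $G$ is at most $2^{\td}-1$, for each fixed $j<2^{\td}$ the property $\dist(x,y)=j$ is expressible by a first-order formula $\varphi_{j}(x,y)$ of size $2^{O(\td)}$ (existentially quantify the at most $j-1$ internal vertices of an $x$--$y$ walk of length exactly $j$ and forbid a shorter walk). A vertex set $X$ is a resolving set precisely when every two distinct vertices are separated by some landmark, that is, when $X$ satisfies the monadic second-order formula
\[
  \psi(X)\colon\quad \forall x\,\forall y\Big(x\ne y\rightarrow\exists z\big(z\in X\wedge\textstyle\bigvee_{0\le j<2^{\td}}\neg(\varphi_{j}(x,z)\leftrightarrow\varphi_{j}(y,z))\big)\Big),
\]
which has one free set variable and size $2^{O(\td)}$. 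As $\tw(G)\le\td(G)-1$, the cardinality-minimizing form of Courcelle's theorem computes a minimum-size set $X$ with $\psi(X)$ in time $f(\td(G))\cdot n$, and $|X|$ is the metric dimension of $G$.

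The only case carrying any real content is \textsc{Metric Dimension}, and the subtlety there is twofold. First, one cannot encode ``$|X|\le k$'' into an MSO sentence when $k$ is unbounded, so it is essential to use the optimization (LinEMSOL-style) formulation that minimizes the cardinality of a free set variable rather than plain model checking. Second, the distance subformulas $\varphi_{j}$ must be chosen so that their size is bounded purely in terms of $\td(G)$, which is exactly what the diameter bound from Proposition~\ref{prop:td-path-length} guarantees. The resulting parameter dependence is non-elementary, but this is immaterial for fixed-parameter tractability; the other two cases need nothing beyond Proposition~\ref{prop:td-path-length}.
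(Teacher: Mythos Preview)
Your proposal is correct and follows essentially the same approach as the paper for all three problems. The only cosmetic difference is in \textsc{Metric Dimension}: the paper invokes the optimization meta-theorem for clique-width~\cite{CourcelleMR00,Oum08} rather than for treewidth~\cite{ArnborgLS91,Courcelle92}, but since both parameters are bounded by functions of treedepth, the two routes are interchangeable here.
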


\textsc{List Hamiltonian Path} is a generalization of \textsc{Hamiltonian Path}
such that each vertex has a set of permitted positions where it can be put in a Hamiltonian path.
This problem is W[1]-hard parameterized by $\pw$~\cite{MeeksS16}.
By Proposition~\ref{prop:td-path-length},
this problem parameterized by $\td$ admits a trivial FPT algorithm:
if a graph $G$ has at least $2^{\td(G)}$ vertices, then $G$ does not have a Hamiltonian path;
otherwise, we can try all $n! \le (2^{\td(G)})!$ permutations of vertices.

For integers $p,q \ge 0$, the edges \emph{$(p,q)$-dominated} by
an arc $e = (u,v)$ are $e$ itself and all arcs that are 
on a directed path of length at most $p$ to $u$ or
on a directed path of length at most $q$ from $v$.
Then, \textsc{Directed} $(p,q)$-\textsc{Edge Dominating Set} asks 
whether there exists a set $K$ of arcs with $|K| \le k$
such that every arc is $(p,q)$-dominated by $K$.
This problem is W[1]-hard parameterized by $\pw$
but fixed-parameter tractable parameterized by $\tw + p + q$~\cite{BelmonteHK0L18}.
Since we can assume that $p$ and $q$ are smaller than the longest path length,
we can also assume that they are bounded by a function of $\td$.
Thus the fixed-parameter tractability with $\tw + p + q$
implies the fixed-parameter tractability solely with $\td$.
(Here the parameters are defined on the undirected graph obtained by ignoring the directions of arcs.)

The argument for \textsc{Metric Dimension} is slightly more involved.
In \textsc{Metric Dimension},
we are given a graph $G = (V,E)$ and an integer $k$
and asked whether there exists $S \subseteq V$ such that $|S| \le k$
and for each pair $u, v \in V$ there exists $w \in S$ with $\dist(u,w) \ne \dist(v,w)$,
where $\dist(\cdot,\cdot)$ is the distance between two vertices in $G$.
We call such a set $S$ a \emph{resolving set}. 
Recently, this problem is shown to be W[1]-hard parameterized by $\pw$~\cite{BonnetP19}.
Observe that there is an MSO$_{1}$ formula $\varphi(S)$ 
such that its length depends only on the diameter of the underlying graph
and it is evaluated to be \texttt{true} if and only if 
the vertex set $S$ is a resolving set of the underlying graph.
It is known that finding a minimum vertex set $S$ satisfying $\varphi(S)$
is fixed-parameter tractable parameterized by clique-width${}+ |\varphi|$~\cite{CourcelleMR00,Oum08}.
Therefore, \textsc{Metric Dimension} is fixed-parameter tractable parameterized by clique-width${} + {}$the diameter.
Since the clique-width and the diameter are bounded from above by functions of its treedepth,
\textsc{Metric Dimension} is fixed-parameter tractable parameterized by $\td$.
(See \cite{CorneilR05} for an upper bound of clique-width.)


\end{document}